\spnewtheorem{observation}{Observation}{\bfseries}{\itshape}
\newcommand{\calB}{\mathcal{B}}
\newcommand{\calC}{\mathcal{C}}
\newcommand{\calD}{\mathcal{D}}
\newcommand{\calO}{\mathcal{O}}
\newcommand{\calP}{\mathcal{P}}
\newcommand{\calT}{\mathcal{T}}
\title
{%
    Minimum Eccentricity Shortest Paths in some Structured Graph Classes%
    \thanks
    {%
        Results of this paper were partially presented at WG~2015\,\cite{DraganLeiter2015}.
    }
}
\author
{%
    Feodor F. Dragan 
    \and 
    Arne Leitert
}
\institute{
    Department of Computer Science, \\
    Kent State University, Kent, Ohio, USA  \\
    \email{dragan@cs.kent.edu}, 
    \email{aleitert@cs.kent.edu} 
}
\newcommand{\ie}{i.\,e.\@ifnextchar{,}{}{~}}
\newcommand{\eg}{e.\,g.\@ifnextchar{,}{}{~}}
\DeclareMathOperator{\pg}{pg}
\DeclareMathOperator{\ld}{ld}
\DeclareMathOperator{\ecc}{ecc}
\DeclareMathOperator{\diam}{diam}
\DeclareRobustCommand{\qedClaim}
{%
  \ifmmode \lozenge%
  \else%
    \leavevmode\unskip\penalty9999 \hbox{}\nobreak\hfill%
    \quad\hbox{$\lozenge$}%
  \fi%
}%
\begin{document}
\pagestyle{plain}
\maketitle

\begin{abstract}
We investigate the \emph{Minimum Eccentricity Shortest Path} problem in some structured graph classes. 
It asks for a given graph to find a shortest path with minimum eccentricity. 
Although it is NP-hard in general graphs, we demonstrate that a minimum eccentricity shortest path can be found in linear time for distance-hereditary graphs (generalizing the previous result for trees) and give a generalised approach which allows to solve the problem in polynomial time for other graph classes.
This includes chordal graphs, dually chordal graphs, graphs with bounded tree-length, and graphs with bounded hyperbolicity.
Additionally, we give a simple algorithm to compute an additive approximation for graphs with bounded tree-length and graphs with bounded hyperbolicity.
\end{abstract}

\section{Introduction}

The \emph{Minimum Eccentricity Shortest Path} problem asks for a given graph $G=(V,E)$ to find a shortest path~$P$ such that for each other shortest path~$Q$, $\ecc(P) \leq \ecc(Q)$ holds.
Here, the eccentricity of a set $S\subseteq V$ in $G$ is $\ecc(S)=\max_{u \in V} d_G(u, S)$.
This problem was introduced in~\cite{DrLei2015}.
It may arise in determining a ``most accessible'' speedy linear route in a network  and can find applications in communication networks, transportation planning, water resource management and fluid transportation.
It was also shown in~\cite{DKL2014,DrLei2015} that a minimum eccentricity shortest path plays a crucial role in obtaining the best to date approximation algorithm for a minimum distortion embedding of a graph into the line.
Specifically, every graph~$G$ with a shortest path of eccentricity~$r$ admits an embedding~$f$ of $G$ into the line with distortion at most $(8r +  2) \ld(G)$, where $\ld(G)$ is the minimum line-distortion of $G$ (see~\cite{DrLei2015} for details).
Furthermore, if a shortest path of $G$ of eccentricity~$r$ is given in advance, then such an embedding~$f$ can be found in linear time.
Note also that every graph has a shortest path of eccentricity at most $\lfloor \ld(G) / 2 \rfloor$.

Those applications motivate investigation of the Minimum Eccentricity Shortest Path problem  in general graphs and in particular graph classes.
Fast algorithms for it will imply fast approximation algorithms for the minimum line distortion problem.
Existence of low eccentricity shortest paths in structured graph classes will imply low approximation bounds for those classes.
For example, all AT-free graphs (hence, all interval, permutation, cocomparability graphs) enjoy a shortest path of eccentricity at most 1~\cite{COS-SICOMP}, all convex bipartite graphs enjoy a shortest path of eccentricity at most 2~\cite{DKL2014}.

In \cite{DrLei2015}, the Minimum Eccentricity Shortest Path problem was investigated in general graphs.
It was shown that its decision version is NP-complete (even for graphs with vertex degree at most 3).
However, there are efficient  approximation algorithms: a 2-approximation, a 3-approximation, and an 8-approximation for the problem can be computed in $\calO(n^3)$ time, in $\calO(nm)$ time, and in linear time, respectively.
Furthermore, a shortest path of minimum eccentricity~$r$ in general graphs can be computed in $\calO(n^{2r+2}m)$ time. 
Paper~\cite{DrLei2015} initiated also the study of the Minimum Eccentricity Shortest Path problem in special graph classes by showing that a minimum eccentricity shortest path in trees can be found in linear time.
In fact, every diametral path of a tree is a minimum eccentricity shortest path. 

In this paper, we design efficient algorithms for the Minimum Eccentricity Shortest Path problem in distance-hereditary graphs, in chordal graphs, in dually chordal graphs, and in more general graphs with bounded tree-length or with bounded hyperbolicity.
Additionally, we give a simple algorithm to compute an additive approximation for graphs with bounded tree-length and graphs with bounded hyperbolicity.

Note that our Minimum Eccentricity Shortest Path problem is close but different from the \emph{Central Path} problem in graphs introduced in~\cite{Slater}. 
It asks for a given graph~$G$ to find a path~$P$ (not necessarily shortest) such that any other path of $G$ has eccentricity at least $\ecc(P)$.
The Central Path problem generalizes the Hamiltonian Path problem and therefore is NP-hard even for chordal graphs \cite{haiko}.
Our problem is polynomial time solvable for chordal graphs. 

\section{Notions and Notations}
All graphs occurring in this paper are connected, finite, unweighted, undirected, loopless and without multiple edges. For a graph~$G = (V, E)$, we use $n = |V|$ and $m = |E|$ to denote the cardinality of the vertex set and the edge set of~$G$.
$G[S]$ denotes the \emph{induced subgraph} of $G$ with the vertex set~$S$. 

The \emph{length} of a path from a vertex~$v$ to a vertex~$u$ is the number of edges in the path. 
The \emph{distance}~$d_G(u,v)$ of two vertices $u$ and~$v$ is the length of a shortest path connecting $u$ and~$v$. 
The distance between a vertex~$v$ and a set~$S \subseteq V$ is defined as $d_G(v, S) = \min_{u \in S}  d_G(u, v)$. 
The \emph{eccentricity}~$\ecc(v)$ of a vertex~$v$ is $\max_{u \in V} d_G(u, v)$.
For a set~$S \subseteq V$, its eccentricity is $\ecc(S) = \max_{u \in V} d_G(u, S)$.
For a vertex pair $s,t$, a shortest $(s,t)$-path~$P$ has \emph{minimal eccentricity}, if there is no shortest $(s,t)$-path~$Q$ with $\ecc(Q) < \ecc(P)$.
Two vertices $x$ and $y$ are called \emph{mutually furthest} if $d_G(x,y) = \ecc(x) = \ecc(y)$. 
A vertex~$u$ is \emph{$k$-dominated} by a vertex~$v$ (by a set~$S \subset V$), if $d_G(u,v) \leq k$ ($d_G(u, S) \leq k$, respectively).

The \emph{diameter} of a graph~$G$ is $\diam(G) = \max_{u,v \in V} d_G(u, v)$. 
The diameter~$\diam_G(S)$ of a set~$S \subseteq V$ is defined as $\max_{u,v \in S} d_G(u, v)$. 
A pair of vertices $x,y$ of $G$ is called a \emph{diametral  pair} if $d_G(x, y) = \diam(G)$.
In this case, every shortest path connecting $x$ and~$y$ is called a \emph{diametral path}. 

For a vertex~$v \in V$, $N(v) = \{ u \in V \mid uv \in E \}$ is called the \emph{open neighborhood}, and $N[v] = N(v) \cup \{ v \}$ the \emph{closed neighborhood} of $v$. 
$N^r[v] = \{ u \in V \mid d_G(u,v) \leq r \}$ denotes the \emph{disk} of radius~$r$ around vertex~$v$. 
Additionally, $L_r^{(v)} = \{ u \in V \mid d_G(u,v) = i \}$ denotes the vertices with distance~$r$ from~$v$. 
For two vertices $u$ and~$v$, $I(u,v) = \{ w \mid d_G(u,v) = d_G(u,w) + d_G(w,v) \}$ is the \emph{interval} between $u$ and~$v$. 
The set~$S_i(s,t) = L_i^{(s)} \cap I(u,v)$ is called a \emph{slice} of the interval from $u$ to~$v$. 
For any set~$S \subseteq V$ and a vertex~$v$, $\Pr(v, S) = \{ u \in S \mid d_G(u, v) = d_G(v, S) \}$ denotes the \emph{projection} of $v$ on~$S$. 

A \emph{chord} in a path is an edge connecting two non-consecutive vertices of the path. 
A set of vertices~$S$ is a \emph{clique} if all vertices in $S$ are pairwise adjacent. 
A graph is \emph{chordal} if every cycle with at least four vertices has a chord. 
A graph is \emph{distance-hereditary} if the distances in any connected induced subgraph are the same as they are in the original graph. 
A graph is \emph{dually chordal} if it is the intersection graph of maximal cliques of a chordal graph.
For more definitions of these classes and relations between them see~\cite{BrLeSpinGraphClasses}.

\section{A Linear-Time Algorithm for Distance-Hereditary Graphs}
    \label{sec:DistHered}

\emph{Distance-hereditary graphs} can be defined as graphs where each chordless path is a shortest path~\cite{howorka}. 
Several interesting characterizations of distance-hereditary graphs in terms of metric and neighborhood properties, and forbidden configurations were provided by \textsc{Bandelt} and \textsc{Mulder}~\cite{BM-dhg}, and by \textsc{D'Atri} and \textsc{Moscarini}~\cite{D-AM-dhg}. 
The following proposition  lists  the basic information on distance-hereditary graphs that is needed in what follows. 

\begin{proposition}
    [\cite{BM-dhg,D-AM-dhg}] 
    \label{prop:dhg} 
For a graph~$G$ the following conditions are equivalent:
\begin{enumerate}[(1)]
    \item
        $G$ is distance-hereditary;
    \item
        The house, domino, gem (see Fig.~\ref{fig:dhg}) and the cycles~$C_k$ of length~$k \geq 5$ are not induced subgraphs of $G$;
    \item
        For an arbitrary vertex~$x$ of $G$ and every pair of vertices $u, v \in L_k^{(x)}$, that are in the same connected component of the graph $G[V \setminus L_{k-1}^{(x)}]$, we have $N(v) \cap L_{k-1}^{(x)} = N(u) \cap L_{k-1}^{(x)}$.
    \item
        (4-point condition)
        For any four vertices $u, v, w, x$ of $G$ at least two of the following distance sums are equal:
        $d_G(u, v) + d_G(w, x)$; $d_G(u, w) + d_G(v, x)$; $d_G(u, x) + d_G(v, w)$.
        If the smaller sums are equal, then the largest one exceeds the smaller ones at most by 2.
\end{enumerate}
\end{proposition}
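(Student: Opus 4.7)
The plan is to prove the four conditions equivalent by cyclic implication, $(1) \Rightarrow (2) \Rightarrow (3) \Rightarrow (4) \Rightarrow (1)$. The implication $(1) \Rightarrow (2)$ is by direct inspection: each listed forbidden graph fails to be distance-hereditary. For $C_k$ with $k \geq 5$, deleting any single vertex produces an induced path in which an antipodal pair has distance $k-2$ rather than $\lfloor k/2 \rfloor$. For the house, gem, and domino one similarly identifies a pair of vertices whose distance strictly grows in some connected induced subgraph.

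For $(2) \Rightarrow (3)$, I would fix $x$ and consider the BFS layering $L_i^{(x)}$. Assume, for contradiction, that $u, v \in L_k^{(x)}$ lie in the same component $C$ of $G[V \setminus L_{k-1}^{(x)}]$ but $N(u) \cap L_{k-1}^{(x)} \neq N(v) \cap L_{k-1}^{(x)}$. Walking along a shortest $u$-$v$ path inside $C$, I locate adjacent $u', v' \in L_k^{(x)}$ whose neighborhoods in $L_{k-1}^{(x)}$ still differ, and pick witnesses $a \in N(u') \cap L_{k-1}^{(x)} \setminus N(v')$ and $b \in N(v') \cap L_{k-1}^{(x)}$. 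Combining shortest paths from $a$ and $b$ up to $x$ with the edge $u'v'$ produces a cycle of length at least $5$. Analysing which chords this cycle can or must possess forces an induced subgraph isomorphic to one of $C_k$ for some $k \geq 5$, the house, the gem, or the domino, contradicting (2).

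For $(3) \Rightarrow (4)$, given four vertices, I would run BFS from one of them and record the layers of the other three. The neighborhood-homogeneity in (3) forces shortest paths that pass through the same slice to share parents, which tightly constrains how the three target vertices merge as one walks up the layering. A case analysis on the meeting pattern (common meeting slice versus pairwise distinct ones) yields two equal distance sums with the third exceeding them by at most $2$. For $(4) \Rightarrow (1)$, I argue contrapositively: if a connected induced subgraph $H$ satisfies $d_H(p,q) > d_G(p,q)$ for some $p, q$, I select a vertex $w$ on a shortest $p$-$q$ path in $G$ that is missing from $H$, together with a vertex $y$ on some $p$-$q$ path in $H$ at a comparable depth; the quadruple $\{p, q, w, y\}$ then falsifies the distance-sum inequality of (4).

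The main obstacle is the case analysis in $(2) \Rightarrow (3)$: one has to enumerate carefully the possible chords of the cycle built from the BFS layering and verify that in every case some explicitly listed forbidden subgraph is actually induced, rather than only appearing up to additional edges that themselves would have to be ruled out inductively. The other three implications reduce to mechanical consequences of the BFS structure together with the 4-point condition.
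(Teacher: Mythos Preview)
The paper does not prove this proposition at all: it is stated with citations to Bandelt--Mulder and D'Atri--Moscarini and is used as a black box throughout Section~\ref{sec:DistHered}. So there is nothing to compare your argument against; any correct proof would already go beyond what the paper provides.

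That said, your sketch is broadly in line with how these equivalences are established in the original sources, but two of your implications are underspecified in ways that matter. In $(3)\Rightarrow(4)$, ``a case analysis on the meeting pattern'' hides essentially the whole proof: one has to track, for each of the three target vertices, the first layer at which its set of possible ancestors merges with that of another, and the bookkeeping that yields exactly the $+2$ bound is not automatic from the one-sentence description you give. More seriously, your $(4)\Rightarrow(1)$ argument does not work as stated: picking a single missing vertex $w$ on a $G$-geodesic and some $y$ on an $H$-path ``at comparable depth'' does not, in general, produce a quadruple violating the 4-point condition, because the three distance sums for $\{p,q,w,y\}$ can still satisfy the inequality even when $d_H(p,q)>d_G(p,q)$. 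The standard route is instead $(4)\Rightarrow(2)$ (each forbidden subgraph contains four vertices explicitly violating the 4-point condition) together with $(2)\Rightarrow(1)$, or else an argument that $(4)$ forces every induced path to be geodesic. If you want a self-contained cycle, replacing your last step by $(4)\Rightarrow(2)$ is both easier and safe.
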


\begin{figure}
    [htb]
    \centering
    \includegraphics[]{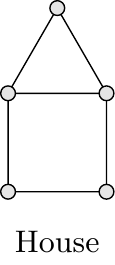}%
    \hspace*{1cm}%
    \includegraphics[]{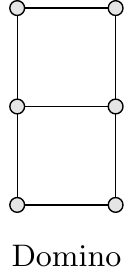}%
    \hspace*{1cm}%
    \includegraphics[]{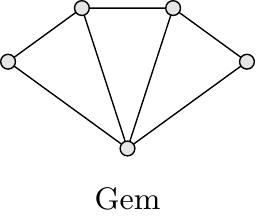}%
    \caption
    {
        Forbidden induced subgraphs in a distance-hereditary graph. 
    }
    \label{fig:dhg} %
\end{figure}

As a consequence of statement (3) of Proposition~\ref{prop:dhg} we get. 

\begin{corollary}
    \label{cor:pr-size} 
Let $P := P(s,t)$ be a shortest path in a distance-hereditary graph~$G$ connecting vertices $s$ and~$t$, and $w$ be an arbitrary vertex of~$G$. 
Let $a$ be a vertex of $\Pr(w, P)$ that is closest to $s$, and let $b$ be a vertex of $\Pr(w,P)$ that is closest to~$t$. 
Then $d_G(a,b) \leq 2$ and there must be a vertex $w'$ in $G$ adjacent to both $a$ and~$b$ and at distance $d_G(w,P) - 1$ from~$w$. 
\end{corollary}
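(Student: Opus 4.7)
The plan is to apply statement~(3) of Proposition~\ref{prop:dhg} directly, with the chosen root vertex being $w$. Set $k := d_G(w, P)$; the degenerate case $k = 0$ forces $a = b = w$ and nothing remains to prove, so assume $k \geq 1$. By the definition of projection, both $a$ and $b$ lie in $L_k^{(w)}$, so they are candidates for the two vertices $u, v$ appearing in~(3).

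The main task is to verify the hypothesis of~(3), namely that $a$ and $b$ lie in the same connected component of $G[V \setminus L_{k-1}^{(w)}]$. The natural connecting path is the subpath $P[a,b]$ of $P$ between $a$ and $b$. Every vertex of $P$ has distance at least $k$ from $w$ by definition of $k$, so no vertex of $P[a,b]$ belongs to $L_{k-1}^{(w)}$; in particular $a$ and $b$ themselves, being at distance exactly $k$ from $w$, are not in $L_{k-1}^{(w)}$. Hence $P[a,b]$ is entirely contained in $G[V \setminus L_{k-1}^{(w)}]$ and witnesses that $a$ and $b$ are in the same connected component of this induced subgraph. This observation is the single place where the structure of $P$ (rather than just of $w$) is used.

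With the hypothesis of~(3) in hand, I conclude $N(a) \cap L_{k-1}^{(w)} = N(b) \cap L_{k-1}^{(w)}$. To exhibit $w'$, take any shortest path from $w$ to $a$ and let $w'$ be its penultimate vertex; then $w' \in N(a) \cap L_{k-1}^{(w)}$, and by the equality of neighborhoods $w' \in N(b)$ as well. Thus $w'$ is adjacent to both $a$ and $b$ and satisfies $d_G(w, w') = k - 1 = d_G(w,P) - 1$, which also yields $d_G(a,b) \leq 2$ via the path $a\text{--}w'\text{--}b$.

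The only subtle point is the component argument: one has to resist the temptation to argue along $P$ that $a$ and $b$ are close and instead use $P[a,b]$ merely as a connectivity certificate inside $G[V \setminus L_{k-1}^{(w)}]$; after that, Proposition~\ref{prop:dhg}(3) does all the work.
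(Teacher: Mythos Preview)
Your proof is correct and follows exactly the approach the paper indicates: it is stated there simply as ``a consequence of statement~(3) of Proposition~\ref{prop:dhg}'' with no further detail, and your argument is precisely the natural unpacking of that claim. The only minor quibble is the degenerate case $k=0$, where the $w'$ clause at distance $-1$ is vacuous; the corollary is implicitly meant for $w \notin P$, which you correctly note.
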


As a consequence of statement (4) of Proposition \ref{prop:dhg} we get. 

\begin{corollary}
    \label{cor:two-paths} 
Let $x,y,v,u$ be arbitrary vertices of a distance-hereditary graph~$G$ with $v \in I(x,u)$, $u \in I(y,v)$, and $d_G(u,v) > 1$, then $d_G(x,y) = d_G(x,v) + d_G(v,u) + d_G(u,y)$. 
That is, if two shortest paths share ends of length at least~$2$, then their union is a shortest path. 
\end{corollary}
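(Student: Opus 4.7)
The plan is to apply the four-point condition from Proposition~\ref{prop:dhg}(4) directly to the four vertices $u, v, x, y$, and show that the only consistent way for the three distance sums to behave forces the desired additive equality.

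Set $a = d_G(x,v)$, $b = d_G(u,v)$, $c = d_G(u,y)$, and $D = d_G(x,y)$. The hypotheses $v \in I(x,u)$ and $u \in I(y,v)$ give $d_G(x,u) = a+b$ and $d_G(y,v) = c+b$, so the three distance sums from the 4-point condition are
\begin{align*}
S_1 &= d_G(u,v) + d_G(x,y) = b + D,\\
S_2 &= d_G(u,x) + d_G(v,y) = a + c + 2b,\\
S_3 &= d_G(u,y) + d_G(v,x) = a + c.
\end{align*}
First I would observe that $S_2 - S_3 = 2b \geq 4$ by the assumption $d_G(u,v) > 1$, so $S_2 \neq S_3$. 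Hence the pair of equal sums provided by the 4-point condition must include $S_1$.

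If $S_1 = S_2$, then $b + D = a + c + 2b$, giving $D = a + b + c$, which is exactly the claim $d_G(x,y) = d_G(x,v) + d_G(v,u) + d_G(u,y)$. It remains to rule out $S_1 = S_3$. In that case $D = a + c - b$, and $S_2 = a + c + 2b$ must be the largest of the three sums; by the 4-point condition the excess of the largest over the common smaller value is at most $2$, so $S_2 - S_1 \leq 2$, i.e.\ $b \leq 1$, contradicting $b \geq 2$.

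The only potential subtlety is making sure that the 4-point condition is applicable in the form stated (``at least two of the three sums are equal, and if the smaller two coincide the largest exceeds them by at most $2$''); once this is in hand, the argument is a clean three-line case split, and no further structural property of distance-hereditary graphs is needed. The informal reformulation at the end of the statement (unions of two overlapping shortest paths being shortest) follows immediately because $a+b+c$ is the length of the concatenation along $x \to v \to u \to y$.
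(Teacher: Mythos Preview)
Your proof is correct and essentially identical to the paper's: both apply the 4-point condition to $x,y,u,v$, observe that two of the three sums differ by $2\,d_G(u,v)>2$ so they cannot be the coinciding smaller pair, and conclude that the remaining equality forces $d_G(x,y)=d_G(x,v)+d_G(v,u)+d_G(u,y)$. The only difference is cosmetic labelling (your $S_1,S_2,S_3$ are the paper's $S_2,S_3,S_1$, respectively) and your use of the abbreviations $a,b,c,D$.
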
 

\begin{proof}
Consider distance sums $S_1 := d_G(x,v) + d_G(u,y)$, $S_2 := d_G(x,y) + d_G(u,v)$ and $S_3 := d_G(x,u)+d_G(v,y)$. 
Since $d_G(x,u) + d_G(v,y) = d_G(x,v) + d_G(u,y) + 2 \, d_G(u,v)$, we have $S_3 > S_1$. 
Then, either $S_2 = S_3$ or $S_1 = S_2$ and $S_3 - S_1 \leq 2$. 
If the latter is true, then $2 \geq S_3 - S_1 = d_G(x,v) + d_G(u,y) + 2 \, d_G(u,v) - d_G(x,v) - d_G(u,y) = 2 \, d_G(v,u) > 2$ and a contradiction arises.
Thus, $S_2 = S_3$ and we get $d_G(x,y) = d_G(x,v) + d_G(v,u) + d_G(u,y)$.
\qed
\end{proof}

\begin{lemma}
    \label{lm:dhg} 
Let $x,y$ be a diametral pair of vertices of a distance-hereditary graph~$G$, and $k$ be the minimum eccentricity of a shortest path in $G$. 
If for some shortest path $P = P(x,y)$, connecting $x$ and $y$, $\ecc(P) > k$ holds, then $\diam(G) = d_G(x,y) \geq 2k$. 
Furthermore, if $d_G(x,y) = 2k$ then there is a shortest path $P^*$ between $x$ and~$y$ with $\ecc(P^*) = k$. 
\end{lemma}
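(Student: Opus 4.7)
The plan is to prove the two parts by first deriving, under the hypothesis $\ecc(P) > k$, the distance identity $d_G(x, w) + d_G(y, w) = d_G(x, p) + d_G(y, q) + 2\, \ecc(P)$ for a witness $w$ and appropriate projections $p, q$ of $w$ on $P$, then combining it with the diametral hypothesis to force $d_G(x, y) \geq 2k$. In the equality case, I would construct $P^*$ explicitly by a one-vertex modification of $P$ using the structural vertex supplied by Corollary~\ref{cor:pr-size}.

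Let $w$ be a vertex with $d_G(w, P) = \ecc(P) \geq k + 1$, and let $p, q$ be the vertices of $\Pr(w, P)$ closest to $x$ and $y$ respectively along $P$; by Corollary~\ref{cor:pr-size}, $d_G(p, q) \leq 2$. The key step is $d_G(x, w) = d_G(x, p) + d_G(p, w)$. Assuming $d_G(x, p) \geq 2$, let $p^-$ and $p^{--}$ be the vertices of $P$ one and two steps from $p$ towards $x$. The choice of $p$ as the closest projection to $x$ rules out $d_G(w, p^-) = \ecc(P)$, so $d_G(w, p^-) = \ecc(P) + 1$; a case analysis on $d_G(w, p^{--}) \in \{\ecc(P), \ecc(P)+1, \ecc(P)+2\}$ using the 4-point condition of Proposition~\ref{prop:dhg}(4) on $\{w, p, p^-, p^{--}\}$ eliminates the two smaller values, leaving $d_G(w, p^{--}) = \ecc(P) + 2$, after which Corollary~\ref{cor:two-paths} applied to $(x, p^{--}, p, w)$ yields the identity. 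The boundary cases $d_G(x, p) \in \{0, 1\}$ are immediate. Symmetrically $d_G(y, w) = d_G(y, q) + d_G(q, w)$.

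Summing the two identities (using $d_G(w, p) = d_G(w, q) = \ecc(P)$) and combining with $d_G(x, p) + d_G(y, q) = d_G(x, y) - d_G(p, q) \geq d_G(x, y) - 2$ and the diametral bound $d_G(x, w), d_G(y, w) \leq d_G(x, y)$ yields
\[
2\, d_G(x, y) \;\geq\; d_G(x, w) + d_G(y, w) \;=\; d_G(x, p) + d_G(y, q) + 2\, \ecc(P) \;\geq\; d_G(x, y) - 2 + 2(k+1),
\]
hence $d_G(x, y) \geq 2k$.

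For the ``furthermore'' part, assume $d_G(x, y) = 2k$. All inequalities above become equalities, forcing $\ecc(P) = k + 1$, $d_G(p, q) = 2$, and $d_G(x, p) = d_G(y, q) = k - 1$; let $\star$ be the unique vertex of $P$ strictly between $p$ and $q$. Corollary~\ref{cor:pr-size} supplies a vertex $w'$ adjacent to both $p$ and $q$ with $d_G(w, w') = \ecc(P) - 1 = k$ (and $w' \neq \star$ since $d_G(w, \star) \geq k + 1$). Define $P^*$ by substituting $w'$ for $\star$ in $P$; this is a walk of length $2k = d_G(x, y)$, hence a shortest $(x, y)$-path. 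The main obstacle is verifying $\ecc(P^*) \leq k$ (and therefore $= k$ by the minimality of $k$): vertices whose closest projection on $P$ is some $v \neq \star$ are covered by $v \in P^*$, but for vertices $u$ whose uniquely closest projection on $P$ is $\star$, or for which $d_G(u, P) = k + 1$, one must show $d_G(u, w') \leq k$. I would handle these by applying Corollary~\ref{cor:pr-size} to $u$ and $P$ (to locate a nearby off-path vertex) and the 4-point condition to quartets like $\{u, w, p, q\}$ or $\{u, \star, w, w'\}$, exploiting the tightness of every distance relation derived in the first part.
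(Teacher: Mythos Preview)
Your first part is correct and reaches the same inequality as the paper, though by a somewhat different route: the paper obtains $d_G(x,w) = d_G(x,p) + d_G(p,w)$ directly by observing that the concatenation of the subpath of $P$ from $x$ to $p$, the edge $pw'$, and a shortest $(w',w)$-path is chordless (by the choice of $p$ as the projection vertex closest to $x$), hence shortest in a distance-hereditary graph. Your detour through the 4-point condition and Corollary~\ref{cor:two-paths} also works; note only that the value $d_G(w,p^{--}) = \ecc(P)$ is already excluded by the projection argument rather than by the 4-point condition.

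Your second part, however, has a genuine gap. Replacing the middle vertex $\star$ by the vertex $w'$ supplied by Corollary~\ref{cor:pr-size} for \emph{one} witness $w$ need not drop the eccentricity to $k$: there can be another vertex $u$ with $d_G(u,P) = k+1$ (so $d_G(u,p) = d_G(u,q) = k+1$ by the tightness you derived), and the 4-point condition on $\{u, w', p, q\}$ yields only $d_G(u, w') \in \{k, k+1, k+2\}$, not $d_G(u, w') \leq k$. The other quartets you name are no more decisive. The paper handles this by an extremal choice: among all vertices $c$ with $d_G(x,c) = d_G(y,c) = k$ it picks one that $k$-dominates the \emph{maximum} number of vertices in $F_{x,y} = \{z : d_G(z,x) = d_G(z,y) = 2k\}$, and takes $P^*$ through $c$. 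If this $c$ still failed to $k$-dominate some $v \in F_{x,y}$, the maximality of $c$ would supply a vertex $u \in F_{x,y}$ with $d_G(u,c) \leq k$ but $d_G(u,v') > k$ (where $v'$ is the Corollary~\ref{cor:pr-size} vertex for $v$); a chord analysis between shortest $(v,v')$- and $(u,c)$-paths then produces an induced house or $C_5$, a contradiction. Your one-shot substitution lacks this maximality leverage, so no contradiction is available when $w'$ fails to cover a second witness $u$---indeed, the vertex $u'$ from Corollary~\ref{cor:pr-size} applied to $u$ may well $k$-dominate both $w$ and $u$ while $w'$ dominates only $w$.
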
 

\begin{proof} 
Consider a vertex $v$ with $d_G(v,P) > k$. 
Let $x'$ be a vertex of $\Pr(v,P)$ closest to $x$, and $y'$ be a vertex of $\Pr(v,P)$ closest to $y$. 
By Corollary~\ref{cor:pr-size}, $d_G(x',y') \leq 2$ and there must be a vertex $v'$ in $G$ adjacent to both $x'$ and $y'$ and at distance $d_G(v,P) - 1$ from $v$. 
Let $P(x,x')$ and $P(y',y)$ be subpaths of $P$ connecting vertices $x,x'$ and vertices $y,y'$, respectively. 
Consider also an arbitrary shortest path $Q(v,v')$ connecting $v$ and $v'$ in $G$. 
By choices of $x'$ and $y'$, no chords in $G$ exist in paths $P(x,x') \cup Q(v',v)$ and $P(y,y') \cup Q(v',v)$. 
Hence, those paths are shortest in $G$. 
Since $x,y$ is a diametral pair, we have $d_G(x,x') + d_G(x',y') + d_G(y',y) = d_G(x,y) \geq d_G(x,v) = d_G(x,x') + 1 + d_G(v',v)$. 
That is, $d_G(y',y) \geq d_G(v',v) + 1 - d_G(x',y')$.
Similarly, $d_G(x',x) \geq d_G(v',v)+1-d_G(x',y')$. 
Combining both inequalities and taking into account that $d_G(v,v') \geq k$, we get  $d_G(x,y) = d_G(x,x') + d_G(x',y') + d_G(y',y) \geq 2k + 2 - d_G(x',y') \geq 2k$. 
Furthermore, we have $d_G(x,y) \geq 2k + 1$ if $d_G(x',y') = 1$ and $d_G(x,y) \geq 2k + 2$ if $d_G(x',y') = 0$.
Also, if $d_G(x,y) = 2k$ then $d_G(x',y') = 2$, $d_G(v,v') = k$, $d_G(x,x') = d_G(y,y') = k-1$ and $d_G(v,x) = d_G(v,y) = 2k$.

Now assume that $d_G(x,y) = 2k$. 
Consider sets $S = \{ w \in V \mid d_G(x,w) = d_G(y,w) = k \}$ and $F_{x,y} = \{u \in V \mid d_G(u,x) = d_G(u,y) = 2k \}$. 
Let $c \in S$ be a vertex of $S$ that $k$-dominates the maximum number of vertices in $F_{x,y}$. 
Consider a shortest path $P^*$ connecting vertices $x$ and $y$ and passing through vertex $c$. 
We will show that $\ecc(P^*) = k$. 
Let $x'$ ($y'$) be the neighbor of $c$ in subpath of $P^*$ connecting $c$ with $x$ (with $y$, respectively).

Assume there is a vertex $v$ in $G$ such that $d_G(v,P^*) > k$. 
As in the first part of the proof, one can show that $d_G(v,x') = d_G(v,y') = k + 1$, \ie, $x',y' \in \Pr(v,P^*)$ and $d_G(v,P^*) = k + 1$. 
Furthermore, $d_G(v,x) = d_G(v,y) = 2k$, \ie, $v \in F_{x,y}$. 
Also, vertex $v'$, that is adjacent to $x'$, $y'$ and at distance $k$ from $v$, must belong to $S$. 
Since $d_G(v,c) > k$ but $d_G(v,v') = k$, by choice of $c$, there must exist a vertex $u \in F_{x,y}$ such that $d_G(u,c) \leq k$ and $d_G(u,v') > k$. 
Since $d_G(u,y) = d_G(u,x) = 2k$, $d_G(u,c)$ must equal $k$ and both $d_G(u,x')$ and $d_G(u,y')$ must equal $k + 1$.  

Since $d_G(v,u) \leq \diam(G) = 2k$ and $d_G(v,y') = d_G(v,x') = k + 1 = d_G(u,x') = d_G(u,y')$, we must have a chord between vertices of a shortest path $P(v,v')$ connecting $v$ with $v'$ and vertices of a shortest path $P(u,c)$ connecting $u$ with $c$. 
If no chords exist or only chord $cv'$ is present, then $d_G(v,u)\geq 2k+1$, contradicting with $\diam(G) = 2k$. 
So, consider a chord $ab$ with $a \in P(v,v')$, $b \in P(u,c)$, $ab \neq cv'$, and $d_G(a,v') + d_G(b,c)$ is minimum. 
We know that $d_G(a,v') = d_G(b,c)$ must hold since $d_G(u,v') > k = d_G(u,c)$ and $d_G(v,c) > k = d_G(v,v')$.
To avoid induced cycles of length $k \geq 5$, $d_G(a,v') = d_G(b,c) = 1$ must hold.
But then, vertices $a,b,c,x',v'$ form either an induced cycle~$C_5$, when $c$ and $v'$ are not adjacent, or a house, otherwise.
Note that, by distance requirements, edges $bv'$, $ca$, $bx'$, and $ax'$ are not possible. 

Contradictions obtained show that such a vertex $v$ with $d_G(v,P^*) > k$ is not possible, \ie, $\ecc(P^*) = k$. 
\qed
\end{proof}

\begin{lemma}
    \label{lm:mf} 
In every distance-hereditary graph there is a minimum eccentricity shortest path $P(s,t)$ where $s$ and~$t$ are two mutually furthest vertices. 
\end{lemma}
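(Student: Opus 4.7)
The plan is to start from an arbitrary MESP and iteratively replace its endpoints by furthest vertices, while preserving the eccentricity-$k$ property, until the endpoints form a mutually furthest pair. Let $P = P(s,t)$ be a MESP of $G$ with $\ecc(P) = k$. If $s$ and $t$ are already mutually furthest, we are done, so assume without loss of generality that $\ecc(t) > d_G(s,t)$ and fix a vertex $s^*$ with $d_G(s^*,t) = \ecc(t) > d_G(s,t)$.

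The heart of the argument is the following extension step: for any MESP $P(s,t)$ of eccentricity $k$ and any vertex $s^*$ with $d_G(s^*,t) > d_G(s,t)$, there exists a shortest $(s^*,t)$-path $Q$ with $\ecc(Q) \leq k$. Granted this, $Q$ is itself a MESP of strictly larger length. Applying the step alternately to the two endpoints produces a sequence of MESPs of strictly increasing length; since length is bounded by $\diam(G)$, this sequence terminates at a MESP whose endpoints form a mutually furthest pair.

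To establish the extension step, I would argue by contradiction: take a shortest $(s^*,t)$-path $Q$ minimizing $\ecc(Q)$, and suppose some vertex $v$ satisfies $d_G(v,Q) > k$. By Corollary~\ref{cor:pr-size} applied to $v$ and $Q$, there are vertices $a,b \in \Pr(v,Q)$ (closest to $s^*$ and to $t$, respectively) with $d_G(a,b) \leq 2$, together with a vertex $v'$ adjacent to both $a$ and $b$ and satisfying $d_G(v,v') = d_G(v,Q) - 1 \geq k$. Since $Q$ is shortest, $d_G(s^*,a) + d_G(a,b) + d_G(b,t) = d_G(s^*,t)$; combining this with $d_G(v,t) \leq \ecc(t) = d_G(s^*,t)$, and with the fact that $\ecc(P) = k$ forces $d_G(v,P) \leq k$, one can mimic the chord-free-path analysis of Lemma~\ref{lm:dhg} and, via Corollary~\ref{cor:two-paths}, force an induced configuration on vertices drawn from $P$, $Q$ and $v'$ that is one of house, gem, domino, or a cycle $C_\ell$ of length $\ell \geq 5$. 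This contradicts Proposition~\ref{prop:dhg}(2).

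The main obstacle is the chord analysis in the extension step: one must carefully track how shortest paths between $v$ and $v'$ interact with both $P$ and $Q$, and rule out each potential location of a chord by distance arguments, in order to isolate the forbidden induced subgraph. This parallels and extends the case analysis appearing in the second half of the proof of Lemma~\ref{lm:dhg}.
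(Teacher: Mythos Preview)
Your overall plan (iteratively lengthen a MESP until both endpoints are extremal) matches the paper's strategy, but your execution of the key step is fundamentally different and, as written, incomplete.

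Your ``extension step'' asserts that for \emph{any} $s^*$ with $d_G(s^*,t)>d_G(s,t)$ some shortest $(s^*,t)$-path has eccentricity at most~$k$, and you propose to prove this by a forbidden-subgraph contradiction in the style of Lemma~\ref{lm:dhg}. This is where the gap lies. In Lemma~\ref{lm:dhg} the chord analysis works because $(x,y)$ is \emph{diametral}, which supplies the inequalities $d_G(x,y)\ge d_G(x,v)$ and $d_G(x,y)\ge d_G(y,v)$ that pin down the distances and eventually force the house/$C_5$. In your setting you only know $d_G(s^*,t)=\ecc(t)$; the pair $(s^*,t)$ need not be diametral, and you have no control over how the two shortest paths $P$ (from $s$ to $t$) and $Q$ (from $s^*$ to $t$) interact away from~$t$. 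Your sketch never identifies which five or six vertices would form the forbidden subgraph, nor why the required non-edges hold; saying one can ``mimic'' Lemma~\ref{lm:dhg} is not a proof, and the analogy is not close enough to be routine.

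The paper avoids this difficulty entirely by \emph{constructing} the longer path rather than prescribing its endpoints. It takes a \emph{longest} MESP $Q=(s=v_0,\dots,v_q=t)$, finds the least index~$i$ for which the prefix $Q(s,v_i)$ already has eccentricity~$k$, and picks a witness vertex~$v$ with $\Pr(v,Q(s,v_i))=\{v_i\}$ and $d_G(v,v_i)=k$. Minimality of~$i$ forces any shortest $(v,v_i)$-path to be chord-free against $Q(s,v_i)$, so the concatenation is a shortest path, and it still has eccentricity~$k$ because its prefix $Q(s,v_i)$ already does. Since $Q(s,v_i)$ $k$-dominates all of~$G$, one gets $\ecc(s)\le i+k=d_G(s,v)$, hence $d_G(s,v)=\ecc(s)>d_G(s,t)$, contradicting maximality of~$q$. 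No forbidden-subgraph analysis is needed; only the defining property ``chordless $\Rightarrow$ shortest'' is used. If you want to salvage your outline, weaken the extension step to ``there exists \emph{some} $s^*$ with $d_G(s^*,t)>d_G(s,t)$ and a shortest $(s^*,t)$-path of eccentricity~$k$''; the paper's prefix argument gives exactly that.
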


\begin{proof} 
Let $k$ be the minimum eccentricity of a shortest path in $G$. 
Let $Q := Q(s,t) = (s = v_0, v_1, \ldots, v_i, \ldots, v_q = t)$ be a shortest path of $G$ of eccentricity~$k$ with maximum~$q$, that is, among all shortest paths with eccentricity~$k$, $Q$ is a longest one. 
Assume, without loss of generality, that $t$ is not a vertex most distant from $s$. 
Let $i \leq q$ be the smallest index such that subpath $Q(s,v_i) = (v_0, v_1, \ldots, v_i)$ of $Q$ has also the eccentricity~$k$. 
By choice of $i$, there must exist a vertex $v$ in $G$ which is $k$-dominated only by vertex $v_i$ of $Q(s,v_i)$, \ie, $\Pr(v,Q(s,v_i)) = \{ v_i \}$ and $d_G(v,Q(s,v_i)) = k$. 
Let $P(v,v_i)$ be an arbitrary shortest path of $G$ connecting $v$ with $v_i$.
By choice of $i$, no vertex of $P(v,v_i) \setminus \{ v_i \}$ is adjacent to a vertex of $Q(s,v_i) \setminus \{ v_i \}$. 
Hence, path obtained by concatenating $Q(s,v_i)$ with $P(v_i,v)$ is chordless and, therefore, shortest in $G$, and has eccentricity~$k$, too. 
Note that $v$ is now a most distant vertex from $s$, \ie, $d_G(s,v) = \ecc(s)$. 
Since $d_G(s,v) > d_G(s,t)$, a contradiction with maximality of $q$ arises. 
\qed
\end{proof}

The main result of this section is the following.

\begin{theorem}
    \label{th:dhg} 
Let $x,y$ be a diametral pair of vertices of a distance-hereditary graph~$G$, and $k$ be the minimum eccentricity of a shortest path in $G$. 
Then, there is a shortest path~$P$ between $x$ and~$y$ with $\ecc(P) = k$.  
\end{theorem}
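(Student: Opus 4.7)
The plan is to proceed by cases on $d_G(x,y) = \diam(G)$ relative to $2k$. When $d_G(x,y) < 2k$, the contrapositive of the first part of Lemma \ref{lm:dhg} rules out any shortest $(x,y)$-path whose eccentricity exceeds $k$; since every shortest path in $G$ has eccentricity at least $k$ by minimality of $k$, every shortest $(x,y)$-path already has eccentricity exactly $k$. When $d_G(x,y) = 2k$, the ``furthermore'' part of Lemma \ref{lm:dhg} supplies the required path directly. The only substantive case is $d_G(x,y) > 2k$, where neither part of Lemma \ref{lm:dhg} immediately delivers a path and one must construct one.

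For this case my plan is to start from a MESP $Q = (s = v_0, \ldots, v_q = t)$ produced by Lemma \ref{lm:mf} (so $s,t$ are mutually furthest and $\ecc(Q) = k$) and to transform $Q$ into a shortest $(x,y)$-path of eccentricity $k$ through two successive endpoint swaps that echo the extension technique used in the proof of Lemma \ref{lm:mf}. To swap $t$ for $y$, I take the largest index $i$ such that $v_i$ lies on some shortest $(s,y)$-path and attach to $Q(s,v_i)$ a shortest $(v_i,y)$-path $P'$ chosen so that every internal vertex of $P'$ has projection $\{v_i\}$ onto $Q(s,v_i)$; Corollary \ref{cor:pr-size} together with the forbidden-subgraph characterization of Proposition \ref{prop:dhg} are what let me select $P'$ with no chord connecting $P' \setminus \{v_i\}$ to $Q(s,v_i) \setminus \{v_i\}$. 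Then $Q(s,v_i) \cdot P'$ is chordless and hence a shortest $(s,y)$-path, and Corollary \ref{cor:two-paths} can be used to certify the required distance identity. A symmetric swap replaces $s$ by $x$, producing a shortest $(x,y)$-path $P^*$.

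The remaining obligation, and the principal obstacle, is to verify $\ecc(P^*) \leq k$. Any vertex $u$ whose closest point on $Q$ survives in the retained middle portion of $Q$ is immediately within $k$ of $P^*$; the danger is a vertex whose only nearby $Q$-vertex lay on the removed $s$-head or $t$-tail. To rule this out, I would replay the chord/cycle analysis at the end of the proof of Lemma \ref{lm:dhg}: assume a witness $u$ with $d_G(u,P^*) > k$, introduce its ``backup'' vertex $u'$ adjacent to both projections of $u$ on $P^*$ at distance $d_G(u,P^*)-1$ from $u$, and use the 4-point condition of Proposition \ref{prop:dhg} together with the forbidden houses, gems, and $C_{\geq 5}$ to force an induced configuration excluded by distance-heredity. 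Executing this case analysis carefully, while exploiting both the hypothesis $d_G(x,y) > 2k$ and the mutual-furthestness of the original $s,t$, is where the work of the proof lies.
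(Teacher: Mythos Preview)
Your handling of $d_G(x,y)<2k$ and $d_G(x,y)=2k$ via Lemma~\ref{lm:dhg} is correct. For the remaining case the paper does \emph{not} perform sequential endpoint swaps; it projects $x$ and $y$ onto the MESP $Q$ simultaneously and branches on the relative position of $\Pr(x,Q)$ and $\Pr(y,Q)$: disjoint and ordered, overlapping in a vertex, or interleaved as $\{v_{i-1},v_{i+1}\}$ versus $\{v_i\}$. In the principal sub-case the constructed path is $P(x,x')\cup Q(x',y')\cup P(y',y)$ with $x',y'$ the appropriate extreme projection vertices, and the eccentricity bound comes from a short metric inequality rather than forbidden-subgraph chasing: for any $v$ whose projection on $Q$ lies, say, on the $s$-side of $x'$, Corollary~\ref{cor:two-paths} yields $d_G(v,y)=d_G(v,x')+d_G(x',y)$, and then diametrality $d_G(x,y)\ge d_G(v,y)$ combined with $d_G(x,y)=d_G(x,x')+d_G(x',y)$ gives $d_G(v,x')\le d_G(x,x')\le k$. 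The interleaved sub-case is handled by the analogous inequality using the mutual-furthestness of $s,t$.

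This metric step is precisely what your proposal is missing. After two swaps you have a shortest $(x,y)$-path, but for a vertex $u$ whose nearest point on $Q$ lay in a discarded tail you offer only ``replay the chord/cycle analysis at the end of the proof of Lemma~\ref{lm:dhg}''. That analysis is tailored to the centre-selection construction when $\diam(G)=2k$ and does not transfer to your swap construction: there is no evident house or $C_5$ to exhibit from a hypothetical witness $u$ against $P^*$. (A minor aside: your care in choosing $P'$ so that its internal vertices project to $\{v_i\}$ is unnecessary---once $Q(s,v_i)\cdot P'$ has length $d_G(s,y)$ it is a shortest path, hence automatically chordless, and every subpath realises the graph distance---so Corollary~\ref{cor:pr-size} is not doing the work you attribute to it.) The missing idea is the direct comparison $d_G(v,y)\le d_G(x,y)$ combined with Corollary~\ref{cor:two-paths}; without it the eccentricity of $P^*$ is uncontrolled, and it is not clear your two-stage construction even produces a path to which such a bound applies, since after the first swap the intermediate $(s,y)$-path need not have eccentricity $k$.
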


\begin{proof} 
We may assume that for some shortest path $P'$ connecting $x$ and $y$, $\ecc(P') > k$ holds (otherwise, there is nothing to prove).
Then, by Lemma~\ref{lm:dhg}, we have $d_G(x,y) \geq 2k$.

Let $Q := Q(s,t) = (s = v_0, v_1, \ldots, v_i, \ldots, v_q = t)$ be a shortest path of $G$ of eccentricity $k$ such that $s$ and $t$ are two mutually furthest vertices (see Lemma~\ref{lm:mf}). 
Consider projections of $x$ and $y$ to $Q$. 
We distinguish between three cases: 
$\Pr(x,Q)$ is completely on the left of $\Pr(y,Q)$ in $Q$; $\Pr(x,Q)$ and $\Pr(y,Q)$ have a common vertex $w$; and the remaining case (see Corollary~\ref{cor:pr-size}) when $\Pr(x,Q) = \{ v_{i-1}, v_{i+1} \}$ and $\Pr(y,Q) = \{ v_i \}$ for some index $i$.

\medskip

\noindent
\emph{Case 1: $\Pr(x,Q)$ is completely on the left of $\Pr(y,Q)$ in $Q$.}    

\medskip
\noindent
Let $x'$ be a vertex of $\Pr(x,Q)$ closest to $t$ and $y'$ a vertex of $\Pr(y,Q)$ closest to $s$. 
Consider an arbitrary shortest path $P(x,x')$ of $G$ connecting vertices $x$ and $x'$, an arbitrary shortest path $P(y',y)$ of $G$ connecting vertices $y'$ and $y$, and a subpath $Q(x',y')$ of $Q(s,t)$ between vertices $x'$ and $y'$. 
We claim that the path $P$ of $G$ obtained by concatenating $P(x,x')$ with $Q(x',y')$ and then with $P(y',y)$ is a shortest path of eccentricity~$k$. 

Indeed, by choice of $x'$, no edge connecting a vertex in $P(x,x') \setminus \{ x' \}$ with a vertex in $Q(x',y') \setminus \{ x' \}$ can exist in $G$. 
Similarly, no edge connecting a vertex in $P(y',y) \setminus \{ y' \}$ with a vertex in $Q(x',y') \setminus \{ y' \}$ can exist in $G$. 
Since we also have $d_G(x,y) \geq 2k$, $d_G(x,Q) \leq k$ and $d_G(y,Q) \leq k$, no edge connecting a vertex in  $P(y',y) \setminus \{ y' \}$ with a vertex in $P(x,x') \setminus \{ x' \}$ can exist in $G$. 
Hence, chordless path $P = P(x,x') \cup Q(x',y') \cup P(y',y)$ is a shortest path of $G$.

Consider now an arbitrary vertex $v$ of $G$. 
We want to show that $d_G(v,P) \leq k$. 
Since $\ecc(Q) = k$, $d_G(v,Q) \leq k$. 
Consider the projection of $v$ to $Q$. 
We may assume that $\Pr(v,Q) \cap Q(x',y') = \emptyset$ and, without loss of generality, that vertices of $\Pr(v,Q)$ are closer to $s$ than vertex $x'$. 
Let $v'$ be a vertex of $\Pr(v,Q)$ closest to $x'$.
As before, by choices of $v'$ and $y'$, paths $P(y,y') \cup Q(y',v')$ and $P(v,v') \cup Q(y',v')$ are chordless and, therefore, are shortest paths of $G$ (here $P(v,v')$ is an arbitrary shortest path of $G$ connecting $v$ with $v'$). 
Since $d_G(v',y') \geq 2$, by Corollary~\ref{cor:two-paths}, $d_G(v,y) = d_G(v,v') + d_G(v',y') + d_G(y',y)$. 
Hence, from $d_G(x,y) \geq d_G(y,v)$, $d_G(x,y) = d_G(x,x') + d_G(x',y)$ and $d_G(v,y) = d_G(v,x') + d_G(x',y)$, we obtain $d_G(v,x') \leq d_G(x,x') \leq k$.

\medskip
\noindent
\emph{Case 2: $\Pr(x,Q)$ and $\Pr(y,Q)$ have a common vertex $w$.} 

\medskip
\noindent
In this case, we have $d_G(x,y) \leq d_G(x,w) + d_G(y,w) \leq k + k = 2k$. 
Earlier we assumed also that $d_G(x,y) \geq 2k$.
Hence, $\diam(G) = d_G(x,y) = 2k$ and the statement of the theorem follows from Lemma~\ref{lm:dhg}. 

\medskip
\noindent
\emph{Case 3: Remaining case when $\Pr(x,Q) = \{ v_{i-1}, v_{i+1} \}$ and $\Pr(y,Q) = \{ v_i \}$ for some index $i$.}

\medskip
\noindent
In this case, we have $d_G(x,y) \leq d_G(x,v_{i-1}) + 1 + d_G(v_{i},y) \leq 2k + 1$. 
By Lemma~\ref{lm:dhg}, we can assume that $\diam(G) = d_G(x,y) = 2k + 1$, \ie, $d_G(x,v_{i-1}) = d_G(x,v_{i+1}) = d_G(v_{i},y) = k$.

Let $Q(s,v_{i-1})$ and $Q(t,v_{i+1})$ be subpaths of $Q$ connecting vertices $s$ and $v_{i-1}$ and vertices $t$ and $v_{i+1}$, respectively. 
Pick an arbitrary shortest path $P(y,v_{i})$ connecting $y$ with $v_{i}$. 
Since no chords are possible between $Q(s,v_{i}) \setminus \{ v_{i} \}$ and $P(y,v_{i}) \setminus \{ v_{i} \}$ and between $Q(t,v_{i}) \setminus \{ v_{i} \}$ and $P(y,v_{i}) \setminus \{ v_{i} \}$, we have
$d_G(y,t) = d_G(y,v_{i}) + d_G(v_{i},t) = k + d_G(v_{i},t)$ and $d_G(y,s) = d_G(y,v_{i}) + d_G(v_{i},s) = k + d_G(v_{i},s)$.
Inequalities $d_G(x,y) \geq d_G(y,t)$ and $d_G(x,y) \geq d_G(y,s)$ imply $d_G(v_{i+1},t) \leq d_G(v_{i+1},x) = k$ and $d_G(v_{i-1},s) \leq d_G(v_{i-1},x) = k$. 
If both $d_G(v_{i+1},t)$ and $d_G(v_{i-1},s)$ equal $k$, then $d_G(s,t) = 2k + 2$ contradicting with $\diam(G) = 2k+1$. 
Hence, we may assume, without loss of generality, that $d_G(v_{i-1},s) \leq k-1$. 
We will show that shortest path $P := P(x,v_{i+1}) \cup P(v_{i},y)$ has eccentricity $k$ (here, $P(x,v_{i+1})$ is an arbitrary shortest path of $G$ connecting $x$ with $v_{i+1}$). 

Consider a vertex $v$ in $G$ and assume that $\Pr(v,Q)$ is strictly contained in $Q(t,v_{i+1})$. 
Denote by $v'$ the vertex of $\Pr(v,Q)$ that is closest to $s$. 
Let $P(v,v')$ be an arbitrary shortest path connecting $v$ and $v'$. 
As before, $P(v,v') \cup Q(v',s)$ is a chordless path and therefore $d_G(v,s) = d_G(v,v_{i+1}) + d_G(v_{i+1},s)$.
Since $t$ is a most distant vertex from $s$, $d_G(s,v) \leq d_G(s,t)$. 
Hence, $d_G(v,v_{i+1}) + d_G(v_{i+1},s) = d_G(s,v) \leq d_G(s,t) = d_G(s,v_{i+1}) + d_G(v_{i+1},t)$, \ie, $d_G(v,v_{i+1}) \leq d_G(v_{i+1},t) \leq k$. 

Consider a vertex $v$ in $G$ and assume now that $\Pr(v,Q)$ is strictly contained in $Q(s,v_{i-1})$. 
Denote by $v'$ the vertex of $\Pr(v,Q)$ that is closest to $t$. 
Let $P(v,v')$ be an arbitrary shortest path connecting $v$ and $v'$. 
Again, $P(v,v') \cup Q(v',t)$ is a chordless path and therefore $d_G(v,t) = d_G(v,v_{i}) + d_G(v_{i},t)$.
Since $s$ is a most distant vertex from $t$, $d_G(t,v) \leq d_G(s,t)$. 
Hence, $d_G(v,v_{i}) + d_G(v_{i},t) = d_G(t,v) \leq d_G(s,t) = d_G(s,v_{i}) + d_G(v_{i},t)$, \ie, $d_G(v,v_{i}) \leq d_G(v_{i},s) \leq k$. 

Thus, all vertices of $G$ are $k$-dominated by $P(x,v_{i+1}) \cup P(v_{i},y)$. 
\qed
\end{proof}

It is known~\cite{DrNi} that a diametral pair of a distance-hereditary graph can be found in linear time. 
Hence, according to Theorem~\ref{th:dhg}, to find a shortest path of minimum eccentricity in a distance-hereditary graph in linear time, one needs to efficiently extract a best eccentricity shortest path for a given pair of end-vertices. 
In what follows, we demonstrate that, for a distance-hereditary graph, such an extraction can be done in linear time as well. 

We will need few auxiliary lemmas. 

\begin{lemma}
    \label{lem:dhgGateVertex}
In a distance-hereditary graph~$G$, for each pair of vertices $s$ and $t$, if $x$ is on a shortest path from $v$ to $\Pi_v = \Pr(v, I(s,t))$ and $d_G(x,\Pi_v) = 1$, then $\Pi_v \subseteq N(x)$.
\end{lemma}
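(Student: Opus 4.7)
The plan is to apply the local-symmetry characterization (3) of Proposition~\ref{prop:dhg} with $v$ playing the role of the root. Set $k := d_G(v, I(s,t))$, so that $\Pi_v \subseteq L_k^{(v)}$ and every $b \in \Pi_v$ satisfies $d_G(v,b) = k$. From the hypothesis, $x$ lies on a shortest path from $v$ to some $a \in \Pi_v$, hence $d_G(v,x) + d_G(x,a) = k$. Combining $d_G(x, \Pi_v) = 1$ with the triangle inequality $k = d_G(v,b) \leq d_G(v,x) + 1$ for a nearest $b \in \Pi_v$ forces $d_G(v,x) = k-1$ and $d_G(x,a) = 1$. So $x \in L_{k-1}^{(v)}$ and $x$ is already adjacent to at least one vertex of $\Pi_v$.

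The remaining task is to verify that all vertices of $\Pi_v$ lie in a common connected component of $G[V \setminus L_{k-1}^{(v)}]$. Since $k$ is the minimum distance from $v$ to $I(s,t)$, no vertex of $I(s,t)$ can be in $L_j^{(v)}$ for $j < k$; in particular $\Pi_v \subseteq I(s,t) \subseteq V \setminus L_{k-1}^{(v)}$. Thus it suffices to show that $G[I(s,t)]$ itself is connected. For any $w \in I(s,t)$, concatenating a shortest $s$-to-$w$ path with a shortest $w$-to-$t$ path yields a shortest $s$-$t$ path, and every vertex on such a path automatically satisfies the interval equation and therefore lies in $I(s,t)$, so $w$ is linked to $s$ entirely within $G[I(s,t)]$.

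Having verified the hypotheses of part (3) of Proposition~\ref{prop:dhg}, we conclude that $N(a) \cap L_{k-1}^{(v)} = N(b) \cap L_{k-1}^{(v)}$ for every $b \in \Pi_v$. Since $x \in L_{k-1}^{(v)}$ is a neighbor of $a$, it is automatically a neighbor of every $b \in \Pi_v$, which gives $\Pi_v \subseteq N(x)$. The only delicate points are pinning down the level of $x$ and the connectivity of $G[I(s,t)]$; neither needs case analysis or induction, so the main obstacle is really just lining up the hypotheses of Proposition~\ref{prop:dhg}(3) correctly so that the homogeneity of neighborhoods can be invoked.
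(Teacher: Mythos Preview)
Your proof is correct and follows essentially the same approach as the paper: both apply Proposition~\ref{prop:dhg}(3) with $v$ as the root to conclude that all vertices of $\Pi_v$ share the same neighbours in $L_{k-1}^{(v)}$, and hence that $x$ is adjacent to each of them. The paper's version is much terser and leaves the connectivity hypothesis of Proposition~\ref{prop:dhg}(3) implicit, whereas you spell out explicitly why every vertex of $\Pi_v$ lies in a single connected component of $G[V \setminus L_{k-1}^{(v)}]$ (via the connectivity of $G[I(s,t)]$ and the fact that $I(s,t)$ avoids $L_{k-1}^{(v)}$); this extra care is appropriate and does not change the underlying strategy.
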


\begin{proof}
Let $p$ and~$q$ be two vertices in $\Pi_v$ and $d_G(v, \Pi_v) = r$.
By statement~(3) of Proposition~\ref{prop:dhg}, $N(p) \cap L_{r-1}^{(v)} = N(q) \cap L_{r-1}^{(v)}$. 
Thus, each vertex $x$ on a shortest path from $v$ to $\Pi_v$ with $d_G(x,\Pi_v) = 1$ (which is in $N(p) \cap L_{r-1}^{(v)}$ by definition) is adjacent to all vertices in $\Pi_v$, \ie, $\Pi_v \subseteq N(x)$.
\qed
\end{proof}

\begin{lemma}
    \label{lem:dhgSliceJoin}
In a distance-hereditary graph~$G$, let $S_i(s,t)$ and $S_{i+1}(s,t)$ be two consecutive slices of an interval $I(s,t)$. 
Each vertex in $S_i(s,t)$ is adjacent to each vertex in $S_{i+1}(s,t)$.
\end{lemma}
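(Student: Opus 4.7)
\medskip
\noindent
\textbf{Proof proposal for Lemma on slice adjacency.}
The plan is to apply the 4-point condition of Proposition~\ref{prop:dhg}(4) to the four vertices $s$, $t$, $u$, $v$, where $u\in S_i(s,t)$ and $v\in S_{i+1}(s,t)$. Set $d := d_G(s,t)$. Because both $u$ and $v$ lie in $I(s,t)$, the relevant distances are forced: $d_G(s,u)=i$, $d_G(u,t)=d-i$, $d_G(s,v)=i+1$, and $d_G(v,t)=d-i-1$. The three distance sums to consider are therefore
\[
S_1 := d_G(s,t)+d_G(u,v)=d+d_G(u,v), \qquad
S_2 := d_G(s,u)+d_G(v,t)=d-1, \qquad
S_3 := d_G(s,v)+d_G(u,t)=d+1.
\]

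First I would dispose of the degenerate situations in which $s$, $t$, $u$, $v$ are not all distinct. If $i=0$ then $u=s$ and $v\in L_1^{(s)}$, so $uv\in E$ immediately; if $i+1=d$ then $v=t$ and $u\in L_{d-1}^{(s)}\cap I(s,t)$, so $d_G(u,t)=1$ and again $uv\in E$. In the remaining (generic) case, all four vertices are distinct and the 4-point condition applies. Since $S_2=d-1\neq d+1=S_3$, two of the three sums can coincide only through $S_1=S_2$ or $S_1=S_3$. The option $S_1=S_2$ would give $d_G(u,v)=-1$, which is impossible, so we must have $S_1=S_3$, i.e.\ $d_G(u,v)=1$. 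This is exactly the desired adjacency, and the second clause of the 4-point condition about the gap between smaller and larger sums is automatically consistent here ($S_1=S_3$ are the larger equal pair, exceeding $S_2$ by exactly $2$).

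I do not expect any serious obstacle: the whole argument is essentially a one-line application of the 4-point condition, the distances to $s$ and $t$ from $u$ and $v$ being pinned down by the definition of a slice. The only care needed is the case split for coincidences among $s,t,u,v$, which is handled directly. As a cross-check, one could also derive the result from Proposition~\ref{prop:dhg}(3): any $v\in S_{i+1}(s,t)$ has at least one neighbor $u'\in S_i(s,t)$ (take the predecessor of $v$ on a shortest $(s,t)$-path through $v$), and then $u,u'\in L_i^{(s)}$ lie in the same component of $G[V\setminus L_{i-1}^{(s)}]$ (both belong to $I(s,t)$ and are connected inside it via $v$), yielding $N(v)\cap L_i^{(s)}\supseteq\{u'\}$ and hence $v\in N(u)$ by the common-neighbors property; but the 4-point approach above is shorter and cleaner.
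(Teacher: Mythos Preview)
Your main argument via the 4-point condition is correct and complete. The paper, however, takes a different route: it invokes statement~(3) of Proposition~\ref{prop:dhg} twice --- once with base vertex~$t$ (so that all vertices of $S_i(s,t)\subseteq L_{d-i}^{(t)}$ lie in one component of $G[V\setminus L_{d-i-1}^{(t)}]$, connected through~$s$, and hence share their neighbours in $L_{d-i-1}^{(t)}\supseteq S_{i+1}(s,t)$), and symmetrically with base vertex~$s$. Your approach via~(4) is arguably cleaner here: it reduces the claim to a single arithmetic check on three distance sums, avoiding any reasoning about connected components of layer-deleted subgraphs, and the degenerate cases are handled in one line. The paper's approach has the advantage of meshing with the neighbouring Lemmas~\ref{lem:dhgGateVertex} and~\ref{lem:dhgProjSliceInter}, which also rely on~(3). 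One minor remark: your closing cross-check via~(3) is not quite right as written --- applying~(3) with base vertex~$s$ and $k=i$ only equates $N(u)\cap L_{i-1}^{(s)}$ with $N(u')\cap L_{i-1}^{(s)}$, which says nothing about adjacency to $v\in L_{i+1}^{(s)}$; to make that sketch work you must take $t$ as the base vertex, as the paper does. This does not affect your primary argument.
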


\begin{proof}
Consider statement~(3) of Proposition~\ref{prop:dhg} from perspective of~$t$.
Thus, $S_i(s,t) \subseteq N(v)$ for each vertex~$v \in S_{i+1}(s,t)$. 
Additionally, from perspective of $s$, $S_{i+1}(s,t) \subseteq N(u)$ for each vertex~$u \in S_{i}(s,t)$.
\qed
\end{proof}

\begin{lemma}
    \label{lem:dhgProjSliceInter}
In a distance-hereditary graph~$G$, if a projection $\Pi_v = \Pr(v, I(s,t))$ intersects two slices of an interval $I(s,t)$, each shortest $(s,t)$-path intersects $\Pi_v$.
\end{lemma}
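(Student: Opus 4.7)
My plan is to prove the stronger statement that an entire slice of $I(s,t)$ lies inside $\Pi_v$. Since every shortest $(s,t)$-path meets each slice in exactly one vertex, this immediately suffices.

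Suppose $\Pi_v$ meets $S_i(s,t)$ and $S_j(s,t)$ with $i < j$, and set $r = d_G(v, I(s,t))$. The cases $i = 0$ and $j = d_G(s,t)$ are immediate, since then $s$ or $t$ already belongs to $\Pi_v$ and lies on every shortest $(s,t)$-path; so assume $1 \le i$ and $j \le d_G(s,t) - 1$. Pick $p \in \Pi_v \cap S_i$, $q \in \Pi_v \cap S_j$, and apply Lemma~\ref{lem:dhgGateVertex} to obtain a gate vertex $x$ with $d_G(v, x) = r - 1$ adjacent to every vertex of $\Pi_v$, in particular to $p$ and $q$. The edges $xp$ and $xq$ force $d_G(s, x) \in \{i{-}1, i, i{+}1\} \cap \{j{-}1, j, j{+}1\}$, hence $j \le i + 2$. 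The case $j = i + 2$ is ruled out by a short distance computation showing $x$ would then lie in $S_{i+1} \subseteq I(s,t)$, contradicting $d_G(v, x) = r - 1 < r$. Hence $j = i + 1$, and the same kind of computation pins down $d_G(s, x) = i + 1$ and $d_G(x, t) = d_G(s, t) - i$; so $x$ sits just outside $I(s,t)$, at the same BFS-level from $t$ as every vertex of $S_i$.

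The heart of the argument is to promote $x$ into a universal neighbor of $S_{i+1}$. Because $i \ge 1$, the slice $S_{i-1}$ is non-empty, and using the complete-bipartite joins between consecutive slices (Lemma~\ref{lem:dhgSliceJoin}) one checks that $x$, $p$, and every $z' \in S_i$ lie in a single component of $G[V \setminus L_{d_G(s,t)-i-1}^{(t)}]$ via the path $x, p, u', z'$ with any $u' \in S_{i-1}$. Proposition~\ref{prop:dhg}(3) applied from $t$ at level $d_G(s, t) - i$ then gives
\[
  N(x) \cap L_{d_G(s,t)-i-1}^{(t)} \;=\; N(z') \cap L_{d_G(s,t)-i-1}^{(t)}
\]
for every $z' \in S_i$. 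The right-hand side contains all of $S_{i+1}$ (again by Lemma~\ref{lem:dhgSliceJoin}), so $x$ is adjacent to every vertex of $S_{i+1}$. For any $z \in S_{i+1}$ this gives $d_G(v, z) \le d_G(v, x) + 1 = r$, while $d_G(v, z) \ge r$ because $z \in I(s,t)$; hence $S_{i+1} \subseteq \Pi_v$, and every shortest $(s,t)$-path picks up a vertex of $\Pi_v$ in its intersection with $S_{i+1}$.

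The main obstacle I expect is choosing the right vantage point for Proposition~\ref{prop:dhg}(3). Applying it from $s$ at level $i + 1$ only equates the level-$i$ neighborhoods of $x$ and $z \in S_{i+1}$, which misses the slice-join into $S_{i+1}$, and a direct use of the 4-point condition on $(v, x, z, p)$ only produces the relation $d_G(v, z) = r + d_G(x, z) - 1$ without pinning $d_G(x, z)$ down to $1$. The saving observation is that $x$ shares its $t$-level with $S_i$ rather than with $S_{i+1}$, which is precisely the configuration that lets Proposition~\ref{prop:dhg}(3) transfer the whole of $S_{i+1}$ from $z'$'s neighborhood into $x$'s.
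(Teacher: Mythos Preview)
Your proof is correct and follows essentially the same approach as the paper: obtain the gate vertex~$x$ from Lemma~\ref{lem:dhgGateVertex}, deduce that the two slices must be consecutive (else $x$ would fall into $I(s,t)$), and then apply Proposition~\ref{prop:dhg}(3) to show that $x$ is adjacent to an entire slice, forcing that slice into~$\Pi_v$.

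The one difference is the vantage point for Proposition~\ref{prop:dhg}(3). The paper applies it from~$s$ at level~$i+1$: since $d_G(s,x)=i+1$ and any $u\in S_{i+1}$ is also at $s$-level~$i+1$, one gets $N(x)\cap L_i^{(s)}=N(u)\cap L_i^{(s)}\supseteq S_i$, hence $S_i\subseteq\Pi_v$. You instead apply it from~$t$ at level~$d_G(s,t)-i$ and conclude $S_{i+1}\subseteq\Pi_v$. Both are valid; they are mirror images of each other. Your final paragraph is therefore slightly off the mark: applying the proposition from~$s$ does not ``miss the slice-join into $S_{i+1}$''---it simply captures the \emph{other} slice~$S_i$, which is equally sufficient. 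The paper's version has the minor advantage that the connectivity hypothesis of Proposition~\ref{prop:dhg}(3) is witnessed directly by the edge~$xq$ with $q\in\Pi_v\cap S_{i+1}$ (and then $S_{i+2}$ if one needs to reach another $u\in S_{i+1}$), so it avoids the detour through $S_{i-1}$ and the separate treatment of $i=0$; but your handling of that boundary case is clean and the argument goes through without issue.
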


\begin{proof}
Because of Lemma~\ref{lem:dhgGateVertex}, there is a vertex $x$ with $N(x) \supseteq \Pi_v$ and $d_G(v,x) = d_G(v, \Pi_v) - 1$. 
Thus, $\Pi_v$ intersects at most two slices of interval $I(s,t)$ and those slices have to be consecutive, otherwise $x$ would be a part of the interval.
Let $S_i(s,t)$ and $S_{i+1}(s,t)$ be these slices. 
Note that $d_G(s, x) = i + 1$. 
Thus, by statement~(3) of Proposition~\ref{prop:dhg}, $N(x) \cap S_i(s,t) = N(u) \cap S_i(s,t)$ for each $u \in S_{i+1}(s,t)$. 
Therefore, $S_i(s,t) \subseteq \Pi_v$, \ie, each shortest path from $s$ to~$t$ intersects $\Pi_v$.
\qed
\end{proof}

From the lemmas above, we can conclude that, for determining a shortest $(s,t)$-path with minimal eccentricity, a vertex~$v$ is only relevant if $d_G(v, I(s,t)) = \ecc(I(s,t))$ and the projection of $v$ on the interval $I(s,t)$ only intersects one slice.
Algorithm~\ref{algo:MinEccDistHere} uses this.

\begin{algorithm}
    [htb]
    \caption
    {
        \label{algo:MinEccDistHere}
        Computes a shortest $(s,t)$-path~$P$ with minimal eccentricity for a given distance-hereditary graph~$G$ and a vertex pair~$s,t$.
    }

\KwIn
{
    A distance-hereditary graph~$G = (V, E)$ and two distinct vertices $s$ and~$t$.
}

\KwOut
{
    A shortest path~$P$ from $s$ to~$t$ with minimal eccentricity.
}

Compute the sets $V_i = \{ v \mid d_G(v, I(s,t)) = i \}$ for $1 \leq i \leq \ecc(I(s,t))$.

Each vertex $v \notin I(s,t)$ gets a pointer~$g(v)$ initialised with $g(v) := v$ if $v \in V_1$, and $g(v) := \varnothing$ otherwise.

\For
{
    $i := 2$ \KwTo $\ecc(I(s,t))$
    \label{line:dhgFindGateLoop}
}
{
    For each $v \in V_i$, select a vertex~$u \in V_{i-1} \cap N(v)$ and set $g(v) := g(u)$.
    \label{line:dhgFindGateIteration}
}

\ForEach
{
    $v \in V_{\ecc(I(s,t))}$
}
{
    If $N(g(v))$ intersects only one slice of $I(s,t)$, flag $g(v)$ as \emph{relevant}. 
    \label{line:dhgFlagRelevant}
}

Set $P := \{ s, t \}$.

\For
{
    $i := 1$ \KwTo $d_G(s,t) - 1$
    \label{line:dhgSelectPLoop}
}
{
    Find a vertex~$v \in S_i(s,t)$ for which the number of \emph{relevant} vertices in $N(v)$ is maximal.

    Add $v$ to $P$.
    \label{line:dhgSelectPAddV}
}

\end{algorithm}

\begin{lemma}
    \label{lm:path-pair} 
For a distance-hereditary graph~$G$ and an arbitrary vertex pair~$s,t$, Algorithm~\ref{algo:MinEccDistHere} computes a shortest $(s,t)$-path with minimal eccentricity in linear time.
\end{lemma}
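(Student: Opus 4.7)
I plan to prove Lemma~\ref{lm:path-pair} in two parts, correctness and running time. Set $k^* := \ecc(I(s,t))$; since any shortest $(s,t)$-path lies in $I(s,t)$, its eccentricity is at least $k^*$. I would first show that for every shortest $(s,t)$-path $P$ and every vertex $v$ with $r := d_G(v, I(s,t))$, picking $u \in \Pi_v$ in some slice $S_j(s,t)$ and using Lemma~\ref{lem:dhgSliceJoin} to reach the vertex of $P$ in a neighbouring slice yields $d_G(v, P) \leq r + 1$; hence the minimum eccentricity of a shortest $(s,t)$-path lies in $\{k^*, k^*+1\}$. Only vertices with $r = k^*$ can push above $k^*$, and among those Lemma~\ref{lem:dhgProjSliceInter} already handles those whose projection spans two slices. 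The remaining ``hard'' vertices are those with $r = k^*$ and $\Pi_v \subseteq S_i(s,t)$ for a single~$i$. For them I would verify $\Pi_v = N(g(v)) \cap I(s,t)$: the inclusion $\Pi_v \subseteq N(g(v))$ is Lemma~\ref{lem:dhgGateVertex}, and the reverse inclusion follows because $g(v)$ is reached by BFS-propagation along a shortest $v$-$I(s,t)$ path, so it sits at distance $k^*-1$ from $v$ and $1$ from $I(s,t)$, forcing every neighbour of $g(v)$ in $I(s,t)$ to be at distance exactly $k^*$ from $v$. Hence the gates flagged \emph{relevant} in line~\ref{line:dhgFlagRelevant} are exactly the witnesses of the tight constraints: any shortest $(s,t)$-path of eccentricity $k^*$ must, in each slice $S_i(s,t)$, pick a vertex inside $N(g)$ for every relevant gate~$g$ whose interval-neighbourhood lies in $S_i(s,t)$.

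Next I would argue that the per-slice decisions decouple. By Lemma~\ref{lem:dhgSliceJoin} any choice of one vertex per slice concatenates with $s$ and~$t$ into a valid shortest $(s,t)$-path. If the optimum eccentricity is $k^*$, then some such path has, in each slice $S_i(s,t)$, a vertex adjacent to \emph{all} relevant gates of $S_i(s,t)$; a vertex with the maximum number of relevant neighbours in $S_i(s,t)$ therefore covers all of them, so the greedy choice in line~\ref{line:dhgSelectPAddV} attains $k^*$. Otherwise the optimum is $k^* + 1$, which the algorithm attains automatically by the general upper bound. Either way the returned path has minimum eccentricity.

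For the running time, two BFS traversals (from $s$ and from $t$) plus a multi-source BFS from $I(s,t)$ produce $I(s,t)$, the slices $S_i(s,t)$, and the classes $V_i$ in $O(n+m)$. The gate propagation in line~\ref{line:dhgFindGateIteration} runs in BFS order in $O(n+m)$. For flagging, I would scan $V_{\ecc(I(s,t))}$ once, marking each gate upon first encounter and checking whether its interval-neighbourhood lies in a single slice by one pass over its neighbours using the precomputed slice numbers; this totals $O(n+m)$. For path selection, for every relevant gate~$g$ in slice $S_i(s,t)$ I would increment a counter on each $u \in N(g) \cap S_i(s,t)$, then sweep each slice to pick the maximum; this costs $O(\sum_g \deg(g)) = O(m)$, and assembling $P$ is linear.

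The main obstacle is the decoupling step --- showing that maximising relevant-gate adjacency slice-by-slice yields a globally optimal path. This hinges entirely on Lemma~\ref{lem:dhgSliceJoin} (consecutive slices are completely joined) and on the characterisation $\Pi_v = N(g(v)) \cap I(s,t)$ linking relevant gates to tight covering constraints; once these are in hand, the remaining case analysis is routine.
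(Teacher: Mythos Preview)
Your proposal is correct and follows essentially the same approach as the paper: identify the ``hard'' vertices as those in $V_{\ecc(I(s,t))}$ whose projection lies in a single slice (using Lemmas~\ref{lem:dhgGateVertex}--\ref{lem:dhgProjSliceInter}), represent each by its gate, and use Lemma~\ref{lem:dhgSliceJoin} to decouple the per-slice choices so that maximising relevant-gate adjacency in each slice yields an optimal path. Your write-up is in fact a bit more explicit than the paper's in two places---you spell out the $\{k^*,k^*+1\}$ dichotomy and you verify the reverse inclusion $N(g(v))\cap I(s,t)\subseteq\Pi_v$ (which the paper uses tacitly when it claims that dominating all relevant gates of a slice makes a vertex a ``perfect choice'')---and you supply the linear-time accounting that the paper omits.
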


\begin{proof}
The loop in line~\ref{line:dhgFindGateLoop} determines for each vertex~$v$ outside of the interval~$I(s,t)$ a \emph{gate vertex}~$g(v)$ such that $N(g(v)) \supseteq \Pr(v, I(s,t))$ and $d_G(v, I(s,t)) = d_G(v, g(v)) + 1$ (see Lemma~\ref{lem:dhgGateVertex}).
From Lemma~\ref{lem:dhgProjSliceInter} and Lemma~\ref{lem:dhgSliceJoin}, it follows that for a vertex~$v$ which is not in $V_{\ecc(I(s,t))}$ or its projection to $I(s,t)$ is intersecting two slices of $I(s,t)$, $d_G(v,P(s,t)) \leq \ecc(I(s,t))$ for every shortest path $P(s,t)$ between $s$ and $t$.
Therefore, line~\ref{line:dhgFlagRelevant} only marks $g(v)$ if $v \in V_{\ecc(I(s,t))}$ and its projection $\Pr(v, I(s,t))$ intersects only one slice. 
Because only one slice is intersected and each vertex in a slice is adjacent to all vertices in the consecutive slice (see Lemma~\ref{lem:dhgSliceJoin}), in each slice the vertex of an optimal (of minimum eccentricity) path $P$ can be selected independently from the preceding vertex.
If a vertex~$x$ of a slice~$S_i(s,t)$ has the maximum number of \emph{relevant} vertices in $N(x)$, then $x$ is good to put in $P$.
Indeed, if $x$ dominates all relevant vertices adjacent to vertices of $S_i(s,t)$, then $x$ is a perfect choice to put in $P$.
Else, any vertex $y$ of a slice $S_i(s,t)$ is a good vertex to put in $P$. 
Hence, $P$ is optimal if the number of \emph{relevant} vertices adjacent to $P$ is maximal.
Thus, the path selected in line~\ref{line:dhgSelectPLoop} to~line~\ref{line:dhgSelectPAddV} is optimal.
\qed
\end{proof}

Running Algorithm~\ref{algo:MinEccDistHere} for a diametral pair of vertices of a distance-hereditary graph~$G$, by Theorem~\ref{th:dhg}, we get a shortest path of $G$ with minimum eccentricity.
Thus, we have proven the following result. 

\begin{theorem}
    \label{tm:opt-path} 
A shortest path with minimum eccentricity of a distance-hereditary graph $G=(V,E)$ can be computed in $\calO(|V| + |E|)$ total time.
\end{theorem}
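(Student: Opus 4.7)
The plan is to combine the three ingredients assembled in the preceding text: a linear-time subroutine for finding a diametral pair $x,y$ of $G$~\cite{DrNi}, Theorem~\ref{th:dhg} which guarantees that some shortest $(x,y)$-path attains the minimum eccentricity, and Lemma~\ref{lm:path-pair} which asserts that Algorithm~\ref{algo:MinEccDistHere} extracts a minimum-eccentricity shortest path for a given endpoint pair in linear time. So the theorem reduces to verifying that Algorithm~\ref{algo:MinEccDistHere} can actually be implemented in $\calO(|V|+|E|)$ time, which is the only quantitative claim of Lemma~\ref{lm:path-pair} that the proof there leaves implicit.

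For the linear-time implementation I would proceed as follows. Run BFS from $s$ and from $t$ to obtain all level functions $d_G(s,\cdot)$ and $d_G(t,\cdot)$; this identifies $I(s,t)$ and labels each interval vertex with the slice $S_i(s,t)$ it belongs to. A single multi-source BFS starting from all vertices of $I(s,t)$ then computes the sets $V_i = \{v : d_G(v,I(s,t))=i\}$ in $\calO(|V|+|E|)$ time. The gate-pointer loop of lines~\ref{line:dhgFindGateLoop}--\ref{line:dhgFindGateIteration} then processes vertices in order of increasing $i$, each $v \in V_i$ selecting an arbitrary BFS predecessor $u \in V_{i-1}\cap N(v)$ and copying its pointer; this is just a traversal along the multi-source BFS tree and costs $\calO(|V|)$.

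The two remaining steps are the only ones that require a little care. For line~\ref{line:dhgFlagRelevant}, I scan, for each candidate gate $g=g(v)$ with $v \in V_{\ecc(I(s,t))}$, the neighborhood $N(g)$ once, recording the minimum and maximum slice index of $N(g) \cap I(s,t)$; $g$ is \emph{relevant} iff these two indices coincide. Each vertex of $G$ is a gate for at most a bounded ``responsibility,'' but even charging the scan of $N(g)$ naively to $g$ itself gives a total of $\calO(|V|+|E|)$ because we examine each edge of $G$ at most twice across all gates once we mark gates to avoid duplicate work. For the final selection loop (lines~\ref{line:dhgSelectPLoop}--\ref{line:dhgSelectPAddV}), instead of recomputing, for every interval vertex, how many relevant neighbors it has, I would iterate once over every relevant vertex $r$ and, for each neighbor $w \in N(r) \cap I(s,t)$, increment a counter $c(w)$. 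A single pass through $I(s,t)$ slice-by-slice then picks the argmax $c(\cdot)$ per slice and concatenates the chosen vertices into $P$. The whole counting pass is bounded by $\sum_{r \text{ relevant}} |N(r)| = \calO(|E|)$.

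The main obstacle I would expect is precisely this last part: a literal reading of the algorithm invites an $\calO(|V|\cdot|E|)$ implementation if one recomputes neighbor counts from the interval side. Framing the work as an ``edge-charging'' iteration from relevant vertices outward resolves this, and once that is in place correctness is inherited directly from Lemma~\ref{lm:path-pair} and optimality of the returned path from Theorem~\ref{th:dhg} applied to the diametral pair produced by~\cite{DrNi}. Combining all phases yields the claimed $\calO(|V|+|E|)$ bound.
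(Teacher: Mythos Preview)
Your proposal is correct and follows exactly the paper's approach: combine the linear-time diametral-pair algorithm of~\cite{DrNi}, Theorem~\ref{th:dhg}, and Lemma~\ref{lm:path-pair}. In fact you go further than the paper, which states the theorem as an immediate consequence of these three ingredients without spelling out the linear-time implementation of Algorithm~\ref{algo:MinEccDistHere}; your edge-charging arguments for lines~\ref{line:dhgFlagRelevant} and~\ref{line:dhgSelectPLoop}--\ref{line:dhgSelectPAddV} fill that gap correctly.
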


\section{A Polynomial-Time Algorithm for Tree-Structured Graphs}

\subsection{Projection Gap}

In a graph~$G$, consider a shortest path~$P$ which starts in a vertex~$s$.
Each vertex~$x$ has a projection~$\Pi_x = \Pr(x, P)$.
In case of a tree this is a single vertex.
However, in general, $\Pi_x$ can contain multiple vertices and does not necessarily induce a connected subgraph.
In this case, there are two vertices $u$ and~$w$ in $\Pi_x$ such that all vertices~$v$ in the subpath~$Q$ between $u$ and~$w$ are not in~$\Pi_x$.
Formally, $u,w \in \Pi_x$, $Q = \{ \, v \in P \mid d_G(s,u) < d_G(s,v) < d_G(s, w) \}$, and $Q \cap \Pi_x = \emptyset$.

Now, assume the cardinality of $Q$ is at most~$\gamma$, \ie, $d_G(u,w) \leq \gamma + 1$ for each $P$, $x$, $u$ and~$w$.
We refer to $\gamma$ as the \emph{projection gap} of $G$.

\begin{definition}
    [%
        Projection Gap%
    ]
In a graph~$G$, let $P = \{ v_0, \ldots, v_l \}$ be a shortest path with $d_G(v_0, v_i) = i$.
The \emph{projection gap} of $G$ is $\gamma$, $\pg(G) = \gamma$ for short, if, for every vertex~$x$ of~$G$ and every two vertices $v_i, v_k \in \Pr(x, P)$, $d_G(v_i, v_k) > \gamma + 1$ implies that there is a vertex~$v_j \in \Pr(x, P)$ with $i < j < k$.
\end{definition}

Based on this definition, we can make the following observation.

\begin{lemma}
    \label{lem:valProp}
In a graph $G$ with $\pg(G) = \gamma$, let $P$ be a shortest path starting in~$s$, $Q$ be a subpath of~$P$, $|Q| > \gamma$, $u$ and~$w$ be two vertices in $P \setminus Q$ such that $d_G(s, u) < d_G(s,Q) < d_G(s, w)$, and $x$ be an arbitrary vertex in~$G$.
If $d_G(x, u) < d_G(x, Q)$, then $d_G(x, w) \geq d_G(x, Q)$.
\end{lemma}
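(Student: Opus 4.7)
I will argue by contradiction: assume also $d_G(x, w) < d_G(x, Q)$ and derive a violation of the projection gap property applied to a carefully chosen subpath of $P$. Write $f(v) := d_G(x, v)$. Since $P$ is a shortest path, $|f(v) - f(v')| \le 1$ for consecutive vertices on $P$ by the triangle inequality; thus $f$ is $1$-Lipschitz along $P$ and, being integer-valued, satisfies the intermediate value property on every subpath of $P$ — it attains every integer between any two of its values.

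Let $f_L$ and $f_R$ be the minima of $f$ over the vertices of $P$ strictly to the left and strictly to the right of $Q$, respectively, and set $c := \max\{f_L, f_R\}$. The two assumptions give $c < \min_{q \in Q} f(q)$. The value $c$ is attained on the side realizing $\max\{f_L, f_R\}$ by definition; on the other side it is attained as well, because the vertex of $P$ adjacent to $Q$ from that side has $f$-value at least $\min_{q \in Q} f(q) - 1 \ge c$, and the intermediate value property applied to the walk from the local minimum to that neighbor produces an occurrence of $c$. Let $u'$ be the rightmost vertex of $P$ strictly left of $Q$ with $f(u') = c$, and $w'$ the leftmost vertex of $P$ strictly right of $Q$ with $f(w') = c$. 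Consider the subpath $P^\ast$ of $P$ from $u'$ to $w'$, which is again a shortest path.

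The key claim is that $\Pr(x, P^\ast) = \{u', w'\}$: every other vertex of $P^\ast$ has $f > c$. For vertices in $Q$ this is immediate since $c < \min_{q \in Q} f(q)$. For a vertex $v$ strictly between $u'$ and the first vertex of $Q$, a value $f(v) \le c$ would either equal $c$ (contradicting that $u'$ is the rightmost such vertex) or lie strictly below $c$, forcing $f$ to take the value $c$ again at some vertex strictly to the right of $u'$ on the way up to the first vertex of $Q$ (again contradicting the choice of $u'$); the argument between the last vertex of $Q$ and $w'$ is symmetric. Since $u'$ lies strictly left and $w'$ strictly right of $Q$, one has $d_G(u', w') \ge |Q| + 1 \ge \gamma + 2 > \gamma + 1$, so the projection gap property applied to $P^\ast$ forces a vertex of $\Pr(x, P^\ast)$ strictly between $u'$ and $w'$ — a contradiction. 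The main obstacle I anticipate is resisting the temptation to apply the projection gap directly to the subpath from $u$ to $w$, whose projection can lie entirely on one side of $Q$ (if $f$ dips deeper on one side than the other); rebalancing at the common level $c = \max\{f_L, f_R\}$ via the integer intermediate value property is exactly what unblocks the projection gap argument.
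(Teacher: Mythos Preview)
Your proof is correct and follows essentially the same approach as the paper. The paper's proof compresses your entire construction of $u'$ and $w'$ into a single ``without loss of generality, let $d_G(x, u) = d_G(x, w) < d_G(x, v)$ for all $v$ strictly between them'' clause and then derives the same contradiction; your Lipschitz/intermediate-value argument at the level $c = \max\{f_L, f_R\}$ is precisely the justification that this WLOG step needs, so you have unpacked what the paper leaves implicit rather than taken a different route.
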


\begin{proof}
Assume that $d_G(x, u) < d_G(x, Q)$ and $d_G(x, w) < d_G(x, Q)$.
Without loss of generality, let $d_G(x, u) = d_G(x, w) < d_G(x, v)$ for all $v \in P$ with $d_G(s,u) < d_G(s,v) < d_G(s, w)$.
Let $P'$ be the subpath of $P$ from $u$ to~$w$.
Note that $\Pr(x, P') = \{ u, w \}$ and $Q \subset P'$.
Thus, $d_G(u, w) \geq |Q| + 1 > \gamma + 1$.
This contradicts with $\pg(G) = \gamma$.
\qed
\end{proof}

Informally, Lemma~\ref{lem:valProp} says that, when exploring a shortest path~$P$, if the distance to a vertex~$x$ did not decrease during the last $\gamma + 1$ vertices of~$P$, it will not decrease when exploring the remaining subpath.
Based on this, we will show that a minimum eccentricity shortest path can be found in polynomial time if $\pg(G)$ is bounded by some constant.

For the rest of this section, we assume we are given a graph~$G$ with $\pg(G) = \gamma$ containing a vertex~$s$.
We will need the following notions and notations:
\begin{itemize}
\item
$Q_{i}$ and~$Q_j$ are subpaths of length~$\gamma$ of some shortest paths starting in~$s$.
They do not need to be subpath of the same shortest path.
Let $v_i \in Q_i$ and $v_j \in Q_j$ be the two vertices such that $d_G(s, Q_i) = d_G(s, v_i)$ and  $d_G(s, Q_j) = d_G(s, v_j)$.
Without loss of generality, let $d_G(s, v_i) \leq d_G(s, v_j)$.
We say, $Q_{i}$ is \emph{compatible} with $Q_j$ (with respect to~$s$) if $|Q_{i} \cap Q_j| = \gamma - 1$, $v_{i}$ is adjacent to~$v_j$, and $d_G(s, v_i) < d_G(s, v_j)$.
Let $\calC_s(Q_j)$ denote the set of subpaths compatible with~$Q_j$.

\item
$R_s(Q_j) = \{ \, w \mid Q_j \subseteq I(s, w) \} \cup Q_j$ is the set of vertices~$w$ such that there is a shortest path from $s$ to~$w$ containing $Q_j$ (or $w \in Q_j$).

\item
$I(s, Q_j) = I(s, v_j) \cup Q_j$ are the vertices that are on a shortest path from $s$ to~$Q_j$ (or in~$Q_j$).

\item
$V^\downarrow_s(Q_j) = \{ \, x \mid d_G(x, Q_j) = d_G(x, R_s(Q_j)) \}$ is the set of vertices~$x$ which are closer to~$Q_j$ than to all other vertices in~$R_s(Q_j)$.
Thus, given a shortest path~$P$ containing~$Q_j$ and starting in~$s$, expanding $P$ beyond $Q_j$ will not decrease the distance from $x$ to~$P$.
\end{itemize}
Note that $Q_j \subseteq V^\downarrow_s(Q_j)$ and $Q_j = I(s, Q_j) \cap R_s(Q_j)$.

\subsection{Algorithm}

\begin{lemma}
    \label{lem:kValeyVert}
For each vertex~$x$ in~$G$, $d_G(x, Q_j) = d_G(x, I(s, Q_j))$ or $d_G(x, Q_j) = d_G(x, R_s(Q_j))$.
\end{lemma}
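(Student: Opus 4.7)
The plan is to argue by contradiction and reduce to Lemma~\ref{lem:valProp}. The containments $Q_j \subseteq I(s, Q_j)$ and $Q_j \subseteq R_s(Q_j)$ immediately give $d_G(x, I(s, Q_j)) \leq d_G(x, Q_j)$ and $d_G(x, R_s(Q_j)) \leq d_G(x, Q_j)$, so the only thing to rule out is the case in which both inequalities are strict.

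Suppose then that both are strict, and pick witnesses $u \in I(s, Q_j)$ and $w \in R_s(Q_j)$ with $d_G(x, u) < d_G(x, Q_j)$ and $d_G(x, w) < d_G(x, Q_j)$. Neither witness can lie in $Q_j$, so $u \in I(s, v_j) \setminus Q_j$, which forces $d_G(s, u) < d_G(s, v_j)$. Similarly, $w$ satisfies $Q_j \subseteq I(s, w)$ with $w \notin Q_j$; writing the interval condition at the far endpoint of $Q_j$ then gives $d_G(s, w) > d_G(s, v_j) + \gamma$.

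The technical core is to splice a single shortest $(s, w)$-path $P$ that threads through $u$ and contains all of $Q_j$. I would take any shortest $(s, v_j)$-path through $u$ (available since $u \in I(s, v_j)$), concatenate it with $Q_j$, and append any shortest path from the far endpoint of $Q_j$ to $w$ (available since that endpoint lies in $I(s, w)$). A direct length check using the two interval conditions confirms that $P$ has length $d_G(s, w)$, so $P$ is a shortest $(s, w)$-path and $Q_j$ sits as a strictly interior subpath of $P$, with $u$ strictly before it and $w$ strictly after it.

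Finally, with such a $P$ in hand, Lemma~\ref{lem:valProp} applies to $P$ with subpath $Q_j$: the subpath has $\gamma + 1 > \gamma$ vertices, the ordering $d_G(s, u) < d_G(s, Q_j) < d_G(s, w)$ holds on $P$, and $d_G(x, u) < d_G(x, Q_j)$ by choice of $u$. The lemma then forces $d_G(x, w) \geq d_G(x, Q_j)$, contradicting the choice of $w$. The only step needing a little care is the splicing, but this is immediate from $u \in I(s, v_j)$ and $Q_j \subseteq I(s, w)$; no other obstacle is in sight.
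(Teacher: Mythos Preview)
Your argument is correct and follows the same route as the paper: assume both inequalities are strict, pick witnesses $u\in I(s,Q_j)\setminus Q_j$ and $w\in R_s(Q_j)\setminus Q_j$, and derive a contradiction from Lemma~\ref{lem:valProp}. The paper's own proof is terser---it simply asserts that ``$u_i$, $Q_j$, and $u_r$ are on a shortest path starting in $s$'' and invokes Lemma~\ref{lem:valProp}---whereas you spell out the splicing that produces such a path and verify its length; this is exactly the missing detail one would want in order to justify the paper's one-line claim.
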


\begin{proof}
Assume, $d_G(x, Q_j) > d_G(x, I(s, Q_j))$ and $d_G(x, Q_j) > d_G(x, R_s(Q_j))$.
Then, there is a vertex~$u_i \in I(s,Q_j)$ and a vertex~$u_r \in R_s(Q_j)$ with $d_G(x, u_i) < d_G(x, Q_j)$ and $d_G(x, Q_j) > d_G(x, u_r)$.
Because $u_i$, $Q_j$, and $u_r$ are on a shortest path starting in~$s$ and $|Q_j| > \gamma$, this contradicts Lemma~\ref{lem:valProp}.
\qed
\end{proof}

\begin{lemma}
    \label{lem:kValeyVertSubset}
If $Q_{i}$ is compatible with $Q_j$, then $V^\downarrow_s(Q_{i}) \subseteq V^\downarrow_s(Q_j)$.
\end{lemma}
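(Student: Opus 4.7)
The plan is to prove the contrapositive: assuming $x \notin V^\downarrow_s(Q_j)$, I derive $x \notin V^\downarrow_s(Q_i)$. By hypothesis there is $w \in R_s(Q_j) \setminus Q_j$ with $d_G(x, w) < d_G(x, Q_j)$. I aim to show that the same $w$ witnesses $x \notin V^\downarrow_s(Q_i)$: that is, $w \in R_s(Q_i) \setminus Q_i$ and $d_G(x, w) < d_G(x, Q_i)$.

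For the structural step, I use compatibility to promote $w$ into $R_s(Q_i)$. Because $v_i \sim v_j$ and $d_G(s, v_i) = d_G(s, v_j) - 1$, the vertex $v_i$ lies in $I(s, v_j)$, so any shortest $(s, w)$-path through $Q_j$ can be rerouted to pass through $v_i$ immediately before $v_j$; call the resulting shortest $(s, w)$-path $P^*$. Writing $l = d_G(s, v_i)$, the condition $|Q_i \cap Q_j| = \gamma - 1$ combined with the level structure forces $Q_i$ and $Q_j$ to agree at every corresponding-level vertex except at a single ``mismatch'' level $l + k^*$, where a vertex $x_{k^*}$ of $Q_i$ differs from a vertex $y_{k^* - 1}$ of $Q_j$. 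The two neighbors of $y_{k^* - 1}$ on $P^*$ (at levels $l + k^* - 1$ and $l + k^* + 1$) are shared with $Q_i$ and are therefore adjacent to $x_{k^*}$ via the consecutive edges of $Q_i$; swapping $y_{k^* - 1}$ for $x_{k^*}$ in $P^*$ thus yields a shortest $(s, w)$-path $P^{**}$ that contains $Q_i$, proving $w \in R_s(Q_i)$. Since $d_G(s, w) > l + \gamma$ while every vertex of $Q_i$ has level at most $l + \gamma$, also $w \notin Q_i$.

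For the distance step I apply Lemma~\ref{lem:valProp} to $P^*$ with $Q = Q_j$. Because $Q_j$ has $\gamma + 1 > \gamma$ vertices and $w$ lies after $Q_j$ on $P^*$ with $d_G(x, w) < d_G(x, Q_j)$, the contrapositive of the lemma gives $d_G(x, u) \geq d_G(x, Q_j)$ for every $u$ on $P^*$ that precedes $Q_j$; in particular $d_G(x, v_i) \geq d_G(x, Q_j)$. Letting $v^\dagger \in Q_i$ realize $d_G(x, Q_i)$, case analysis then yields $d_G(x, Q_i) \geq d_G(x, Q_j)$: if $v^\dagger \in Q_i \cap Q_j$, then $d_G(x, Q_j) \leq d_G(x, v^\dagger) = d_G(x, Q_i)$; if $v^\dagger = v_i$, the bound follows directly from the Lemma~\ref{lem:valProp} estimate above. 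Combining with $w \in R_s(Q_i) \setminus Q_i$ and $x \in V^\downarrow_s(Q_i)$ then produces the chain $d_G(x, Q_i) \leq d_G(x, w) < d_G(x, Q_j) \leq d_G(x, Q_i)$, the desired contradiction.

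The main obstacle is the remaining sub-case $v^\dagger = x_{k^*}$, where $x_{k^*} \notin Q_j$ so $d_G(x, Q_j)$ may exceed $d_G(x, Q_i)$ by up to $d_G(x_{k^*}, y_{k^* - 1}) \leq 2$ (the bound follows because $x_{k^*}$ and $y_{k^* - 1}$ share their level-$(l+k^*-1)$ and level-$(l+k^*+1)$ neighbors in $Q_i \cap Q_j$). Here I plan to close the gap by a second application of Lemma~\ref{lem:valProp} to $P^{**}$ on a subpath that covers $Q_i$ together with its immediate successor $y_\gamma$; when $k^* < \gamma$ one also has $y_\gamma \in R_s(Q_i)$ (since $y_\gamma \sim x_\gamma = y_{\gamma - 1}$ and the remaining $Q_i$-vertices lie on a shortest $(s,y_\gamma)$-path), so $x \in V^\downarrow_s(Q_i)$ forces $d_G(x, y_\gamma) \geq d_G(x, Q_i)$, giving the extra constraint that rules out the mismatch. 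The boundary case $k^* = \gamma$ requires a modified construction of $P^{**}$, routing from $x_\gamma$ directly to $w$ via its position in $I(s, w)$ rather than swapping through $y_\gamma$, since $x_\gamma \sim y_\gamma$ is not guaranteed in that case.
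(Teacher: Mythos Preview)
Your proof has a real gap in the ``mismatch'' case, and it cannot be closed. In the boundary sub-case $k^* = \gamma$ you need $x_\gamma \in I(s, w)$ to route a shortest $(s,w)$-path through $Q_i$, but nothing guarantees this: all you know is $Q_j \subseteq I(s,w)$, and when $k^* = \gamma$ the vertex $x_\gamma$ lies outside $Q_j$. In fact, under the literal reading $|Q_i \cap Q_j| = \gamma - 1$ the lemma is false. For $\gamma = 1$, take $G$ to be the path $s,b',c,d,z$ together with a pendant edge $sb$; then $Q_i = (s,b)$ and $Q_j = (b',c)$ satisfy all the stated compatibility conditions, $R_s(Q_i) = Q_i$ forces $V^\downarrow_s(Q_i) = V$, yet $z \notin V^\downarrow_s(Q_j)$ because $d \in R_s(Q_j)$ is strictly closer to $z$ than $Q_j$ is.

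The resolution is that the paper intends the ``shift by one'' compatibility: $Q_j$ is obtained from $Q_i$ by dropping $v_i$ and appending one new vertex at the far end, so that $Q_i \setminus Q_j = \{v_i\}$. This is precisely what the paper's own proof uses, via the containments $Q_i \subseteq I(s,Q_j)$ and $R_s(Q_j) \subseteq R_s(Q_i)$, both of which fail under the reading you adopted. With this interpretation your mismatch case never arises (no swap is needed; the rerouted path $P^*$ already contains $Q_i$), and your two remaining cases for $v^\dagger$ give a correct argument.

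What remains then takes a somewhat different route from the paper. The paper invokes the preceding Lemma~\ref{lem:kValeyVert} to get $d_G(x, Q_j) = d_G(x, I(s,Q_j))$ directly from $x \notin V^\downarrow_s(Q_j)$, reads off $d_G(x,Q_i) \geq d_G(x,Q_j)$ from $Q_i \subseteq I(s,Q_j)$, and closes via $R_s(Q_j) \subseteq R_s(Q_i)$. You bypass Lemma~\ref{lem:kValeyVert} entirely, obtaining $d_G(x, v_i) \geq d_G(x, Q_j)$ by a direct application of Lemma~\ref{lem:valProp} to $P^*$. The paper's version is shorter and reuses an already-proved lemma; yours is more hands-on but self-contained.
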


\begin{proof}
Assume that $V^\downarrow_s(Q_{i}) \nsubseteq V^\downarrow_s(Q_j)$, \ie, there is a vertex~$x \in V^\downarrow_s(Q_{i}) \setminus V^\downarrow_s(Q_j)$.
Then, $d_G(x, Q_j) > d_G(x, R_s(Q_j))$.
Thus, by Lemma~\ref{lem:kValeyVert}, $d_G(x, Q_j) = d_G(x, I(s, Q_j))$.
Because $Q_{i} \subseteq I(s, Q_j)$, $d_G(x, Q_{i}) \geq d_G(x, I(s, Q_j)) = d_G(x, Q_j)$.
Since $x \in V^\downarrow_s(Q_{i})$, $d_G(x, Q_{i}) = d_G(x, R_s(Q_{i}))$.
Also, because $x \notin V^\downarrow_s(Q_{i})$, $d_G(x, Q_j) > d_G(x, R_s(Q_j))$.
Thus, $d_G(x, R_s(Q_{i})) > d_G(x, R_s(Q_j))$.
On the other hand, because $R_s(Q_{i}) \supseteq R_s(Q_j)$, $d_G(x, R_s(Q_{i})) \leq d_G(x, R_s(Q_j))$, and a contradiction arises.
\qed
\end{proof}

For a subpath $Q_j$, let $\calP_s(Q_j)$ denote the set of shortest paths~$P$ which start in~$s$ such that $Q_j \subseteq P \subseteq I(s, Q_j)$.
Then, we define $\varepsilon_s(Q_j)$ as follows:
\[
    \varepsilon_s(Q_j) = \min_{P \in \calP_s(Q_j)} \max_{x \in V^\downarrow_s(Q_j)} d_G(x, P)
\]
Consider a subpath $Q_j$ for which $R_s(Q_j) = Q_j$, \ie, a shortest path containing $Q_j$ cannot be extended any more.
Then, $V^\downarrow_s(Q_j) = V$.
Therefore, for any path $P \in \calP_s(Q_j)$, $\max_{x \in V^\downarrow_s(Q_j)} d_G(x, P) = \ecc(P)$.

\begin{lemma}
    \label{lem:valEpsilon}
If $\calC_s(Q_j)$ is not empty, then
\[
    \varepsilon_s(Q_j)
    =
    \min_{Q_{i} \in \calC_s(Q_j)} \max \left[
        \max_{x \in V^\downarrow_s(Q_j) \setminus V^\downarrow_s(Q_{i})} \min \big( d_G(x, Q_{i}), d_G(x, Q_{j}) \big),
        \varepsilon_s(Q_{i})
    \right].
\]
\end{lemma}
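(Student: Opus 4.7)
The plan is to exhibit a bijection between $\calP_s(Q_j)$ and $\bigsqcup_{Q_i \in \calC_s(Q_j)} \calP_s(Q_i)$, and then to show that under this correspondence the cost $\max_{x \in V^\downarrow_s(Q_j)} d_G(x, P)$ splits cleanly into a term depending only on $Q_i$ (and not on $P_i$) and a term equal to $\max_{x \in V^\downarrow_s(Q_i)} d_G(x, P_i)$. First I would establish the bijection: every $P \in \calP_s(Q_j)$ must end at the farthest vertex $w$ of $Q_j$ from $s$, since $P \subseteq I(s, Q_j)$ forbids any extension beyond $Q_j$; the vertex $u$ immediately preceding $v_j$ in $P$, together with $Q_j \setminus \{w\}$, forms a subpath satisfying the compatibility conditions and hence qualifies as some $Q_i \in \calC_s(Q_j)$, and the truncation $P_i := P \setminus \{w\}$ belongs to $\calP_s(Q_i)$. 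Conversely, any $P_i \in \calP_s(Q_i)$ extends uniquely to a member of $\calP_s(Q_j)$ by appending the lone vertex of $Q_j \setminus Q_i$. This rewrites the left-hand side as
\[
    \varepsilon_s(Q_j) = \min_{Q_i \in \calC_s(Q_j)} \min_{P_i \in \calP_s(Q_i)} \max_{x \in V^\downarrow_s(Q_j)} d_G(x, P).
\]

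Next, for a fixed $Q_i$, I would partition $V^\downarrow_s(Q_j)$ into $V^\downarrow_s(Q_i)$ and its complement inside $V^\downarrow_s(Q_j)$ (a valid split by Lemma~\ref{lem:kValeyVertSubset}) and analyze each piece separately. For $x \in V^\downarrow_s(Q_j) \setminus V^\downarrow_s(Q_i)$, the relation $d_G(x, Q_i) > d_G(x, R_s(Q_i))$ forces the remaining alternative of Lemma~\ref{lem:kValeyVert}, namely $d_G(x, Q_i) = d_G(x, I(s, Q_i))$. Because $P_i \subseteq I(s, Q_i)$ and $P \setminus P_i \subseteq Q_j$, this pins down $d_G(x, P) = \min(d_G(x, Q_i), d_G(x, Q_j))$, a value entirely independent of $P_i$. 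For $x \in V^\downarrow_s(Q_i)$, the single vertex of $P \setminus P_i$ sits in $Q_j \subseteq R_s(Q_i)$, so $d_G(x, P \setminus P_i) \ge d_G(x, R_s(Q_i)) = d_G(x, Q_i) \ge d_G(x, P_i)$, hence $d_G(x, P) = d_G(x, P_i)$.

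Putting the two pieces together, the inner objective equals $\max\!\bigl[A(Q_i),\, \max_{x \in V^\downarrow_s(Q_i)} d_G(x, P_i)\bigr]$ with $A(Q_i) := \max_{x \in V^\downarrow_s(Q_j) \setminus V^\downarrow_s(Q_i)} \min(d_G(x, Q_i), d_G(x, Q_j))$ independent of $P_i$. Since $\min_y \max(A, f(y)) = \max(A, \min_y f(y))$ whenever $A$ does not depend on $y$, the minimization over $P_i$ pushes through the outer $\max$ and collapses the second argument to $\varepsilon_s(Q_i)$; the remaining outer $\min$ over $Q_i \in \calC_s(Q_j)$ then produces the stated formula. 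I expect the main obstacle to be the independence-from-$P_i$ argument for $x \in V^\downarrow_s(Q_j) \setminus V^\downarrow_s(Q_i)$: it is precisely what makes the recursion decouple, it rests on the dichotomy in Lemma~\ref{lem:kValeyVert} (which itself depends on the projection-gap hypothesis through Lemma~\ref{lem:valProp}), and upgrading the easy upper bound $d_G(x, P) \le \min(d_G(x, Q_i), d_G(x, Q_j))$ into an equality is the subtlest point of the proof.
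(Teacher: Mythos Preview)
Your proposal is correct and follows essentially the same route as the paper's proof: both rewrite the minimum over $\calP_s(Q_j)$ as a double minimum over $Q_i \in \calC_s(Q_j)$ and $P \in \calP_s(Q_i)$, split $V^\downarrow_s(Q_j)$ via Lemma~\ref{lem:kValeyVertSubset}, use Lemma~\ref{lem:kValeyVert} to show that the contribution of $V^\downarrow_s(Q_j)\setminus V^\downarrow_s(Q_i)$ is $\min(d_G(x,Q_i),d_G(x,Q_j))$ independent of the chosen $P$, and then push the inner minimum through the $\max$ to recover $\varepsilon_s(Q_i)$. Your treatment is in fact slightly more explicit than the paper's, since you spell out the bijection $\calP_s(Q_j) \leftrightarrow \bigsqcup_{Q_i}\calP_s(Q_i)$ that the paper uses implicitly.
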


\begin{proof}
By definition,
\[
    \varepsilon_s(Q_j)
    =
    \min_{P \in \calP_s(Q_j)} \max_{x \in V^\downarrow_s(Q_j)} d_G(x, P).
\]
Let $Q_{i}$ be compatible with~$Q_j$.
Because, by Lemma~\ref{lem:kValeyVertSubset}, $V^\downarrow_s(Q_{i}) \subseteq V^\downarrow_s(Q_j)$, we can partition $V^\downarrow_s(Q_j)$ into $V^\downarrow_s(Q_j) \setminus V^\downarrow_s(Q_{i})$ and $V^\downarrow_s(Q_{i})$.
Thus, $\varepsilon_s(Q_j) =$
\[
    \min_{Q_{i} \in \calC_s(Q_j)} \min_{P \in \calP_s(Q_{i})} \max \left[
        \max_{x \in V^\downarrow_s(Q_j) \setminus V^\downarrow_s(Q_{i})} d_G(x, P \cup Q_j),
        \max_{x \in V^\downarrow_s(Q_{i})} d_G(x, P \cup Q_j)
    \right].
\]
Note that we changed the definition of~$P$ from $P \in \calP_s(Q_j)$ to $P \in \calP_s(Q_{i})$, \ie, $P$ may not contain the last vertex of $Q_j$ any more.

If $x \in V^\downarrow_s(Q_j) \setminus V^\downarrow_s(Q_{i})$, then $d_G(x, Q_{i}) > d_G(x, R_s(Q_{i}))$.
Thus, by Lemma~\ref{lem:kValeyVert}, $d_G(x, Q_{i}) = d_G(x, I(s, Q_{i}))$.
Note that, by definition of~$P$, $d_G(x, Q_i) \geq d_G(x, P) \geq d_G(x, I(s,Q_i))$.
Therefore, $d_G(x, P) = d_G(x, Q_i)$ and
\[
    \max_{x \in V^\downarrow_s(Q_j) \setminus V^\downarrow_s(Q_{i})} d_G(x, P \cup Q_j)
    =
    \max_{x \in V^\downarrow_s(Q_j) \setminus V^\downarrow_s(Q_{i})} \min \big( d_G(x, Q_{i}), d_G(x, Q_{j}) \big).
\]
For simplicity, we define
\[
    \varepsilon_s(Q_i, Q_j)
    :=
    \max_{x \in V^\downarrow_s(Q_j) \setminus V^\downarrow_s(Q_{i})} \min \big( d_G(x, Q_{i}), d_G(x, Q_{j}) \big).
\]
Note that $\varepsilon_s(Q_i, Q_j)$ does not depend on $P$.
Therefore, because $\min_{u} \max [ c, f(u) ] = \max [ c, \min_{u} f(u) ]$,
\[
    \varepsilon_s(Q_j)
    =
    \min_{Q_{i} \in \calC_s(Q_j)} \max \left[
        \varepsilon_s(Q_i, Q_j),
        \min_{P \in \calP_s(Q_{i})} \max_{x \in V^\downarrow_s(Q_{i})} d_G(x, P \cup Q_j)
    \right].
\]
If $x \in V^\downarrow_s(Q_{i})$, then $d_G(x, Q_{i}) = d_G(x, R_s(Q_{i})) \leq d_G(x, R_s(Q_j)) = d_G(x, Q_j)$.
Therefore,
\[
    \min_{P \in \calP_s(Q_{i})} \max_{x \in V^\downarrow_s(Q_{i})} d_G(x, P \cup Q_j)
    =
    \min_{P \in \calP_s(Q_{i})} \max_{x \in V^\downarrow_s(Q_{i})} d_G(x, P)
    =
    \varepsilon_s(Q_{i}).
\]
Thus,
\[
    \varepsilon_s(Q_j)
    =
    \min_{Q_{i} \in \calC_s(Q_j)} \max \left[
        \max_{x \in V^\downarrow_s(Q_j) \setminus V^\downarrow_s(Q_{i})} \min \big( d_G(x, Q_{i}), d_G(x, Q_{j}) \big),
        \varepsilon_s(Q_{i})
    \right].
\]
\qed
\end{proof}

Based on Lemma~\ref{lem:valEpsilon}, Algorithm~\ref{algo:valMESP} computes a shortest path starting in~$s$ with minimal eccentricity.
The algorithm has two parts.
First, it computes the pairwise distance of all vertices and $d_G(x, R_s(v))$ for each vertex pair $x$ and~$v$ where, similarly to $R_s(Q_j)$, $R_s(v) = \{ \, z \in V \mid v \in I(s, z) \}$.
This allows to easily determine if a vertex~$x$ is in~$V^\downarrow_s(Q_j)$.
Second, it computes $\varepsilon_s(Q_j)$ for each subpath~$Q_j$.
For this, the algorithm uses dynamic programming.
After calculating $\varepsilon_s(Q_i)$ for all subpaths with distance~$i$ to~$s$, the algorithm uses Lemma~\ref{lem:valEpsilon} to calculate $\varepsilon_s(Q_j)$ for all subpaths~$Q_j$ which $Q_i$ is compatible with.

\begin{algorithm}
    [htb!]
    \caption
    {
        Determines, for a given graph~$G$ with $\pg(G) \leq \gamma$ and a vertex~$s$, a minimal eccentricity shortest path starting in~$s$.
    }
    \label{algo:valMESP}

\KwIn
{
    A graph~$G = (V, E)$, an integer~$\gamma$, and a vertex~$s \in V$.
}

\KwOut
{
    A shortest path~$P$ starting in~$s$ with minimal eccentricity.
}

Determine the pairwise distances of all vertices.
\label{line:pairwDistance}

\ForEach
{
    $v, x \in V$
}
{
    Set $d_G(x, R_s(v)) := d_G(x, v)$.
    \label{line:defaultR}
}

\For
{
    $i = \ecc(s) - 1$ \KwDownTo $0$
    \label{line:compRsStart}
}
{
    \ForEach
    {
        $v \in L_i^{(s)}$
    }
    {
        \ForEach
        {
            $w \in N(v) \cap L_{i+1}^{(s)}$
        }
        {
            \ForEach
            {
                $x \in V$
            }
            {
                Set $d_G(x, R_s(v)) := \min \big[ d_G(x, R_s(v)), d_G(x, R_s(w)) \big]$.
                \label{line:distRv}
            }
        }
    }
}

\For
{
    $j = 0$ \KwTo $\ecc(s) - \gamma$
    \label{line:QjLoop}
}
{
    \ForEach
    {
        $Q_j$ with $d_G(s, Q_j) = j$
        \label{line:selectQj}
    }
    {
        \ForEach
        {
            $x \in V$
            \label{line:xQjLoop}
        }
        {
            Let $z_j$ be the vertex in $Q_j$ with the largest distance to~$s$.
            If $d_G(x, Q_j) \leq d_G(x, R_s(z_j))$, add $x$ to $V^\downarrow_s(Q_j)$ and store $d_G(x, Q_j)$.
            \label{line:computeVsOj}
        }

        \If
        {
            $j = 0$
        }
        {
            $\displaystyle \varepsilon_s(Q_j) := \max_{x \in V^\downarrow_s(Q_j)} d_G(x, Q_j)$
            \label{line:compStartEpsilon}
        }
        \Else
        {
            $\displaystyle \varepsilon_s(Q_j) := \infty$
            \label{line:QjEpsilonInf}
        }

        \ForEach
        {
            $Q_i \in \calC_s(Q_j)$
            \label{line:QiLoop}
        }
        {
            $\displaystyle
                \varepsilon'_s(Q_j) :=
                \max \left[
                    \max_{x \in V^\downarrow_s(Q_j) \setminus V^\downarrow_s(Q_i)} \min \big( d_G(x, Q_i), d_G(x, Q_j) \big),
                    \varepsilon_s(Q_i)
                \right]
            $
            \label{line:compEpsilonPrime}

            \If
            {
                $\varepsilon'_s(Q_j) < \varepsilon_s(Q_j)$
                \label{line:comareEpsilon}
            }
            {
                Set $\varepsilon_s(Q_j) := \varepsilon'_s(Q_j)$ and $p(Q_j) := Q_i$.
                \label{line:setP}
            }
        }
    }
}

Find a subpath~$Q_j$ such that a shortest path containing $Q_j$ cannot be extended any more and for which $\varepsilon_s(Q_j)$ is minimal.
\label{line:findMinQj}

Construct a path~$P$ from $Q_j$ to $s$ using the $p()$-pointers and output it.
\label{line:constructP}
\end{algorithm}

\begin{theorem}
For a given graph~$G$ with $\pg(G) = \gamma$ and a vertex~$s$, Algorithm~\ref{algo:valMESP} computes a shortest path starting in~$s$ with minimal eccentricity.
It runs in $\calO(n^{\gamma + 3})$ time if $\gamma \geq 2$, in $\calO(n^2m)$ time if $\gamma = 1$, and in $\calO(nm)$ time if $\gamma = 0$.
\end{theorem}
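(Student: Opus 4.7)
The plan is to verify correctness first and then analyze the running time, with Lemma~\ref{lem:valEpsilon} carrying the main combinatorial weight.

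For correctness I would proceed in order through the algorithm. Line~\ref{line:pairwDistance} gives exact pairwise distances via $n$ BFS calls. For lines~\ref{line:compRsStart}--\ref{line:distRv}, I would prove by reverse induction on $i = d_G(s,v)$ that $d_G(x, R_s(v))$ is stored correctly, using
\[
    R_s(v) \;=\; \{v\} \cup \bigcup_{w \in N(v) \cap L_{i+1}^{(s)}} R_s(w),
\]
which translates into the recurrence $d_G(x, R_s(v)) = \min\bigl(d_G(x,v),\, \min_w d_G(x, R_s(w))\bigr)$ that the algorithm implements. For the DP proper, the crucial preliminary is $R_s(Q_j) = R_s(z_j)$, where $z_j$ is the far endpoint of $Q_j$: one inclusion is immediate from $z_j \in Q_j$, and for the other, any shortest $s$--$w$ path through $z_j$ may be rerouted through $Q_j$ because $Q_j$ is itself a shortest $s$--$z_j$ path. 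This makes the test $d_G(x, Q_j) \leq d_G(x, R_s(z_j))$ in line~\ref{line:computeVsOj} correct for membership in $V^\downarrow_s(Q_j)$. The remainder is a direct induction driven by Lemma~\ref{lem:valEpsilon}: the base case $j = 0$ falls to line~\ref{line:compStartEpsilon} since $\calP_s(Q_j) = \{Q_j\}$, and for $j \geq 1$ the outer loop of line~\ref{line:QjLoop} processes subpaths in increasing order of $d_G(s, Q_j)$, so every $\varepsilon_s(Q_i)$ with $Q_i \in \calC_s(Q_j)$ is already computed by the time line~\ref{line:compEpsilonPrime} needs it. Finally, a $Q_j$ with $R_s(Q_j) = Q_j$ chosen in line~\ref{line:findMinQj} forces $V^\downarrow_s(Q_j) = V$ and $\varepsilon_s(Q_j) = \min_{P \in \calP_s(Q_j)} \ecc(P)$, so the global minimum, reconstructed from the back-pointers, is an optimum shortest path starting at $s$.

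For the running time, I would budget each phase. Line~\ref{line:pairwDistance} is $\calO(nm)$ by $n$ BFS calls, and lines~\ref{line:compRsStart}--\ref{line:distRv} are $\calO(nm)$ by charging $\calO(n)$ to each directed BFS-tree edge $(v,w)$. In the main loop, a subpath $Q_j$ of length $\gamma$ lying on some shortest path from $s$ is determined by a choice of one vertex in each of $\gamma + 1$ consecutive BFS levels, giving $\calO(n^{\gamma + 1})$ such subpaths. For each $Q_j$, computing $V^\downarrow_s(Q_j)$ is $\calO(n)$, $|\calC_s(Q_j)| = \calO(n)$, and line~\ref{line:compEpsilonPrime} costs $\calO(n)$ per pair $(Q_j, Q_i)$, yielding $\calO(n^2)$ per $Q_j$ and $\calO(n^{\gamma + 3})$ overall. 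For $\gamma = 1$ the subpaths are directed BFS edges, of which there are only $\calO(m)$, so the DP totals $\calO(n^2 m)$. For $\gamma = 0$ each $Q_j$ is a single vertex and $|\calC_s(Q_j)|$ is bounded by its degree in the previous level, so $\sum_{Q_j} |\calC_s(Q_j)| = \calO(m)$ and the DP costs $\calO(nm)$. The step I expect to be most delicate is the identity $R_s(Q_j) = R_s(z_j)$, since it is what lets $V^\downarrow_s(Q_j)$ be decided in $\calO(n)$ using only the precomputed single-vertex quantities $d_G(x, R_s(v))$; without it the stated runtimes would not follow, and everything else is bookkeeping on top of Lemma~\ref{lem:valEpsilon}.
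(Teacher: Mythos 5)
Your proposal follows essentially the same route as the paper's proof: reverse induction on BFS levels for the correctness of $d_G(x,R_s(v))$, the membership test for $V^\downarrow_s(Q_j)$ via $R_s(z_j)$, induction on $d_G(s,Q_j)$ driven by Lemma~\ref{lem:valEpsilon}, and the same counts $\phi$ of subpaths and $\psi$ of compatible pairs for the three runtime regimes. The one inaccuracy is the identity you single out as the delicate step: it is not $R_s(Q_j)=R_s(z_j)$ but $R_s(Q_j)=Q_j\cup R_s(z_j)$, since the vertices of $Q_j\setminus\{z_j\}$ belong to $R_s(Q_j)$ by definition yet are strictly closer to $s$ than $z_j$ and hence not in $R_s(z_j)$. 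This matters for your own conclusion: from $R_s(Q_j)=R_s(z_j)$ the membership condition would be the equality $d_G(x,Q_j)=d_G(x,R_s(z_j))$, not the inequality the algorithm tests; it is precisely the extra $Q_j$ term, giving $d_G(x,R_s(Q_j))=\min\bigl(d_G(x,Q_j),\,d_G(x,R_s(z_j))\bigr)$, that makes the test $d_G(x,Q_j)\leq d_G(x,R_s(z_j))$ equivalent to $x\in V^\downarrow_s(Q_j)$. With that correction the argument is the paper's.
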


\begin{proof}
    [Correctness]
The algorithm has two parts.
The first part (line~\ref{line:pairwDistance} to line~\ref{line:distRv}) is a preprocessing which computes $d_G(x, R_s(v))$ for each vertex pair $x$ and~$v$.
The second part computes $\varepsilon_s(Q_j)$ which is then used to determine a path with minimal eccentricity.

For the first part, without loss of generality, let $d_G(s, v) = i$, $N^\uparrow_s(v) = N(v) \cap L_{i+1}^{(s)}$, and let $x$ be an arbitrary vertex.
By definition of $R_s$, $N^\uparrow_s(v) = \emptyset$ implies $R_s(v) = \{ v \}$, \ie, $d_G(x, R_s(v)) = d_G(x, v)$.
Therefore, $d_G(x, R_s(v))$ is correct for all vertices~$v$ with $N^\uparrow_s(v) = \emptyset$ after line~\ref{line:defaultR}.
By induction, assume that $d_G(x, R_s(w))$ is correct for all vertices~$w \in N^\uparrow_s(v)$.
Because $R_s(v) = \bigcup_{w \in N^\uparrow_s(v)} R_s(w) \cup \{ v \}$, $d_G(x, R_s(v)) = \min(\min_{w \in N^\uparrow_s(v)} d_G(x, R_s(w)), d_G(x, v))$.
Therefore, line~\ref{line:distRv} correctly computes~$d_G(x, R_s(v))$.

The second part of Algorithm~\ref{algo:valMESP} iterates over all subpaths~$Q_j$ in increasing distance to~$s$.
Line~\ref{line:computeVsOj} checks if a given vertex~$x$ is in $V^\downarrow_s(Q_j)$.
By definition, $R_s(Q_j) = Q_j \cup R_s(z_j)$ where $z_j$ is the vertex in $Q_j$ with the largest distance to~$s$.
Thus, $d_G(x, R_s(Q_j)) = \min( d_G(x, R_s(z_j)), d_G(x, Q_j))$.
By definition of $V^\downarrow_s$, $x \in V^\downarrow_s(Q_j)$ if and only if $d_G(x, Q_j) = d_G(x, R_s(Q_j))$.
Therefore, $x \in V^\downarrow_s(Q_j)$ if and only if $d_G(x, Q_j) \leq d_G(x, R_s(z_j))$, \ie, line~\ref{line:computeVsOj} computes $V^\downarrow_s(Q_j)$ correctly.

Recall the definition of $\varepsilon_s(Q_j)$:
\[
    \varepsilon_s(Q_j) = \min_{P \in \calP_s(Q_j)} \max_{x \in V^\downarrow_s(Q_j)} d_G(x, P)
\]
If $d_G(s, Q_j) = 0$, $d_G(x, P) = d_G(x, Q_j)$.
Therefore, $\varepsilon_s(Q_j) = \max_{x \in V^\downarrow_s(Q_j)} d_G(x, Q_j)$ as computed in line~\ref{line:compStartEpsilon}.
Note that there is no subpath~$Q_i$ which is compatible with $Q_j$, if $d_G(s, Q_j) = 0$.
Therefore, the loop starting in line~\ref{line:QiLoop} is skipped for these~$Q_j$.
Thus, the algorithm correctly computes $\varepsilon_s(Q_j)$, if $d_G(s, Q_j) = 0$.

By induction, assume that $\varepsilon_s(Q_i)$ is correct for each $Q_i \in \calC(Q_j)$.
Thus, Lemma~\ref{lem:valEpsilon} can be used to compute $\varepsilon_s(Q_j)$.
This is done in the loop starting in line~\ref{line:QiLoop}.
Therefore, at the beginning of line~\ref{line:findMinQj}, $\varepsilon_s(Q_j)$ is computed correctly for each subpath~$Q_j$.

Recall, if $P \in \calP(Q_j)$ and $R_s(Q_j) = Q_j$, then $V^\downarrow_s(Q_j) = V$ and, therefore, $\max_{x \in V^\downarrow_s(Q_j)} d_G(x, P) = \ecc(P)$.
Thus, $R_s(Q_j) = Q_j$ implies that $\varepsilon_s(Q_j)$ is the minimal eccentricity of all shortest paths starting in~$s$ and containing~$Q_j$.
Therefore, if $Q_j$ is picked by line~\ref{line:findMinQj}, then $\varepsilon_s(Q_j)$ is the minimal eccentricity of all shortest paths starting in~$s$.
\qed
\end{proof}

\begin{proof}
    [Complexity]
First, we will analyse line~\ref{line:pairwDistance} to line~\ref{line:distRv}.
Line~\ref{line:pairwDistance} runs in $\calO(nm)$ time.
This allows to access the distance between two vertices in constant time.
Thus, the total running time for line~\ref{line:defaultR} is~$\calO(n)$.
Because line~\ref{line:distRv} is called at most once for each vertex~$x$ and edge~$vw$, implementing line~\ref{line:compRsStart} to line~\ref{line:distRv} can be done in $\calO(nm)$ time.

For the second part of the algorithm (starting in line~\ref{line:QjLoop}), if $\gamma \geq 2$, let all subpaths be stored in a trie as follows:
There are $\gamma + 1$ layers of internal nodes.
Each internal node is an array of size~$n$ (one entry for each vertex) and each entry points to an internal node of the next layer representing $n$~subtrees.
This requires $\calO(n^{\gamma + 1})$ memory.
Leafs are objects representing a subpath.

If $\gamma = 1$, a subpath is a single edge, and, if $\gamma = 0$, a subpath is a single vertex.
Thus, no extra data structure is needed for these cases.
In all three cases, a subpath can be accessed in $\calO(\gamma)$ time.

Next, we analyse the runtime of line~\ref{line:xQjLoop} to line~\ref{line:QjEpsilonInf} for a single subpath~$Q_j$.
Accessing $Q_j$ can be done in $\calO(\gamma)$ time.
Line~\ref{line:computeVsOj} requires at most $\calO(\gamma)$ time for a single call and is called at most $\calO(n)$ times.
Line~\ref{line:compStartEpsilon} requires $\calO(n\gamma)$ time and line~\ref{line:QjEpsilonInf} runs in constant time.
Therefore, for a given subpath, line~\ref{line:xQjLoop} to line~\ref{line:QjEpsilonInf} require $\calO(\gamma n + n)$ time.

For line~\ref{line:compEpsilonPrime} to line~\ref{line:setP}, consider a given pair of compatible subpaths $Q_i$ and~$Q_j$.
Accessing both subpaths can be done in $\calO(\gamma)$ time.
Assuming the vertices in $V^\downarrow_s(Q_i)$ and~$V^\downarrow_s(Q_j)$ are sorted and stored with their distance to $Q_i$ and~$Q_j$, line~\ref{line:compEpsilonPrime} requires at most $\calO(n)$ time.
Note that $Q_i$ and $Q_j$ intersect in $\gamma - 1$ vertices.
Thus, $\min \big( d_G(x, Q_i), d_G(x, Q_j) \big) =  \min \big( d_G(x, v_i), d_G(x, Q_j) \big)$ where $v_i$ is the vertex in $Q_i$ closest to~$s$.
Line~\ref{line:comareEpsilon} and line~\ref{line:setP} run in constant time.
Therefore, for a given pair of compatible subpaths, line~\ref{line:compEpsilonPrime} to line~\ref{line:setP} require $\calO(n)$ time.

Let $\phi$ be the number of subpaths and $\psi$ be the number of pairs of compatible subpaths.
Then, the overall runtime for line~\ref{line:QjLoop} to line~\ref{line:setP} is $\calO(\phi (\gamma n + n) + \psi n)$ time, $\calO(\phi)$ time for line~\ref{line:findMinQj}, and $\calO(n)$ time for line~\ref{line:constructP}.
Together with the first part of the algorithm, the total runtime of Algorithm~\ref{algo:valMESP} is $\calO(mn + \phi (\gamma n + n) + \psi n)$.

Because a subpath contains $\gamma + 1$ vertices, there are up to $\calO(n^{\gamma + 1})$ subpaths and up to $\calO(n^{\gamma + 2})$ compatible pairs if $\gamma \geq 2$, \ie, $\phi \leq n^{\gamma + 1}$ and $\psi \leq n^{\gamma + 2}$.
Therefore, if $\gamma \geq 2$, Algorithm~\ref{algo:valMESP} runs in $\calO(n^{\gamma+3})$ time.

If $\gamma = 1$, a subpath is a single edge and there are at most $mn$ compatible pairs of subpaths, \ie, $\phi \leq m$ and $\psi \leq nm$.
For the case when $\gamma = 0$, a subpath is a single vertex ($\phi \leq n$) and a pair of compatible subpaths is an edge ($\psi \leq m$).
Therefore, Algorithm~\ref{algo:valMESP} runs in $\calO(n^2m)$ time if $\gamma = 1$, and in $\calO(nm)$ time if $\gamma = 0$.
\qed
\end{proof}

Note that Algorithm~\ref{algo:valMESP} computes a shortest path starting in a given vertex~$s$.
Thus, a shortest path with minimum eccentricity among all shortest paths in $G$ can be determined by running Algorithm~\ref{algo:valMESP} for all start vertices~$s$, resulting in the following:

\begin{theorem}
For a given graph~$G$ with $\pg(G) = \gamma$, a  minimum eccentricity shortest path can be found in $\calO(n^{\gamma + 4})$ time if $\gamma \geq 2$, in $\calO(n^3m)$ time if $\gamma = 1$, and $\calO(n^2m)$ time if $\gamma = 0$.
\end{theorem}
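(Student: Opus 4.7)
\begin{proof}
The previous theorem establishes that, for a fixed start vertex~$s$, Algorithm~\ref{algo:valMESP} computes a shortest path starting in~$s$ with minimum eccentricity among all shortest paths starting in~$s$. Let $\varepsilon^*(s)$ denote this minimum value, and let $k$ denote the minimum eccentricity over all shortest paths of~$G$.

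Every shortest path~$P$ of~$G$ has two endpoints; fix either of them and call it~$s_P$. Then $P$ is a shortest path starting in~$s_P$, hence $\varepsilon^*(s_P) \leq \ecc(P)$. Taking $P$ to realise the overall minimum, we get $\min_{s \in V} \varepsilon^*(s) \leq k$. Conversely, every path returned by Algorithm~\ref{algo:valMESP} is, by definition, a shortest path of~$G$, so $\min_{s \in V} \varepsilon^*(s) \geq k$. Therefore, running Algorithm~\ref{algo:valMESP} once for every vertex $s \in V$ and returning the path of smallest eccentricity among the $n$ outputs yields a minimum eccentricity shortest path of~$G$.

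The running time is $n$ times the single-source running time established in the previous theorem. Multiplying through gives $\calO(n \cdot n^{\gamma+3}) = \calO(n^{\gamma+4})$ when $\gamma \geq 2$, $\calO(n \cdot n^2 m) = \calO(n^3 m)$ when $\gamma = 1$, and $\calO(n \cdot nm) = \calO(n^2 m)$ when $\gamma = 0$.
\qed
\end{proof}

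The only step that is not completely mechanical is the correctness argument, and it is essentially a two-line observation: every shortest path is a shortest path starting at one of its two endpoints, so iterating over all possible start vertices covers every candidate path. There is no genuine obstacle; the bulk of the work is already done in the analysis of Algorithm~\ref{algo:valMESP}, and the present statement simply pays an extra factor of~$n$ to remove the dependence on the fixed start vertex.
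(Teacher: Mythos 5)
Your proposal is correct and matches the paper's argument exactly: the paper also obtains this theorem simply by running Algorithm~\ref{algo:valMESP} once from every vertex $s$ and taking the best of the $n$ outputs, paying the extra factor of $n$ in each runtime bound. Your added observation that every shortest path starts at one of its endpoints (so the sweep over all $s$ covers every candidate) is the only non-mechanical step, and it is sound.
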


\subsection{Projection Gap for some Graph Classes}

Above, we have shown that a minimum eccentricity shortest path can be found in polynomial time if the projection gap is bounded by a constant.
In this subsection, we will determine the projection gap for some graph classes.

\subsubsection{Chordal Graphs and Dually Chordal Graphs.}

The class of chordal graphs is a well known class which can be recognised in linear time~\cite{TarjanYannak1984}.
Due to the strong tree structure of chordal graphs, they have the following property known as $m$-convexity:

\begin{lemma}
     [\cite{FaberJamison1986}]
     \label{lem:ChordalMConvex}
Let $G$ be a chordal graph.
If, for two distinct vertices~$u,v$ in a disk~$N^r[x]$, there is a path~$P$ connecting them with $P \cap N^r[x] = \{ u, v \}$, then $u$ and~$v$ are adjacent.
\end{lemma}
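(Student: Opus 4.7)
The plan is to argue by contradiction. Suppose $u$ and $v$ are not adjacent, and pick a path $P = u = w_0, w_1, \dots, w_k = v$ of minimum length satisfying $P \cap N^r[x] = \{u, v\}$. Then $k \geq 2$, $P$ is induced, and every interior vertex $w_i$ lies at distance at least $r+1$ from $x$. Choose geodesics $Q_u$ from $x$ to $u$ and $Q_v$ from $x$ to $v$ (each of length at most $r$) so as to maximise the length of their common initial segment, and let $y$ be their last common vertex.

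The central object is the closed walk $C = Q_u[y, u] \cdot P \cdot Q_v[v, y]$. First I verify that, when $y \notin \{u, v\}$, $C$ is a simple cycle of length at least $4$: maximality of the common prefix forces $Q_u[y, u]$ and $Q_v[y, v]$ to be internally vertex-disjoint, and since the interior of $P$ lies outside $N^r[x]$ while $Q_u \cup Q_v \subseteq N^r[x]$, the path $P$ is internally disjoint from both branches. The degenerate cases $y = u$ or $y = v$ are handled by a short separate argument: they yield either a triangle whose third edge is $uv$ (immediately contradicting $u \not\sim v$) or a longer simple cycle amenable to the same chord analysis as the main case.

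By chordality, $C$ admits a chord $e$, and the proof consists in ruling out every possibility. A chord internal to $P$ is excluded because $P$ is induced; a chord internal to $Q_u[y, u]$ or $Q_v[y, v]$ is excluded because those are geodesics. A chord joining an interior vertex $w_i$ of $P$ to a non-endpoint of either geodesic branch is impossible by a distance count: $d_G(x, w_i) \geq r+1$ while such a non-endpoint is at distance at most $r-1$ from $x$, and adjacent vertices differ in their distance from $x$ by at most $1$. A chord of the form $u w_i$ with $i \geq 2$, or symmetrically $v w_j$ with $j \leq k-2$, would give a strictly shorter $u$-$v$ path still satisfying $P \cap N^r[x] = \{u, v\}$, contradicting the minimality of $P$.

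The hard part, which I expect to be the principal obstacle, is a chord $ts$ joining some $t \in Q_u[y, u]$ to some $s \in Q_v[y, v]$. I plan to dispose of this using the maximality of the common initial segment of $Q_u$ and $Q_v$: since $t \sim s$, one has $|d_G(x, t) - d_G(x, s)| \leq 1$, and depending on which side is shallower, splicing through the chord produces two new geodesics to $u$ and $v$ whose common initial segment strictly extends past $y$, a contradiction. The balanced subcase $d_G(x, t) = d_G(x, s)$ and the corner subcases where $t$ or $s$ coincides with $u$ or $v$ require additional care; the fallback is to choose $(P, Q_u, Q_v)$ jointly minimising $|C|$, so that any chord yields a strictly shorter simple cycle of the same form and a minimal-counterexample argument closes the loop.
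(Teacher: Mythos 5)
First, a remark on the comparison itself: the paper offers no proof of this lemma --- it is imported wholesale from Farber and Jamison \cite{FaberJamison1986} --- so there is no in-paper argument to measure you against. Judged on its own terms, your outline is the classical one (close up the two geodesics and the outside path into a cycle, then exclude every chord), and most of the case analysis is sound: minimality of $P$ does make it induced and does kill chords $uw_i$ ($i\ge 2$) and $vw_j$ ($j\le k-2$); the count $d_G(x,w_i)\ge r+1$ against $d_G(x,\cdot)\le r-1$ for every non-endpoint of a branch does kill mixed chords; and the splicing argument correctly forces the two branches to meet only at $y$ (with $y\in\{u,v\}$ relegated to an easier degenerate case, as you say).

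The genuine gap is exactly where you flagged it, and neither of your two proposed mechanisms closes it as stated. For a branch-to-branch chord $ts$ with $d_G(x,t)=d_G(x,s)$, the spliced walk $Q_u[x,t]\cdot ts\cdot Q_v[s,v]$ has length $d_G(x,v)+1$, so it is not a geodesic and does not extend the common prefix; and cutting $C$ along $ts$ yields the cycle $Q_u[t,u]\cdot P\cdot Q_v[v,s]\cdot st$, which is \emph{not} of the same form as $C$ (its branch ends are joined by an edge rather than meeting at a single vertex), so a minimal-counterexample induction on $|C|$ over configurations of your form does not apply to it. The standard repair is an extremal choice of \emph{chord} rather than of cycle: since all other chord types of $C$ are excluded, $C$ has a branch-to-branch chord; take the one maximizing $d_G(x,t)+d_G(x,s)$. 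Then $C'=Q_u[t,u]\cdot P\cdot Q_v[v,s]\cdot st$ is a simple cycle of length at least $4$ (it degenerates only if $ts=uv$, which is the conclusion you are denying), and it is chordless: chords inside $P$ or inside a geodesic segment, chords from the interior of $P$ to $Q_u[t,u]\setminus\{u\}$ or $Q_v[s,v]\setminus\{v\}$, and chords from $u$ or $v$ into the interior of $P$ are excluded exactly as before, while any further edge between the two geodesic segments would be a branch-to-branch chord of $C$ with strictly larger $d_G(x,\cdot)+d_G(x,\cdot)$, contradicting the choice of $ts$. A chordless cycle of length at least $4$ contradicts chordality, and the proof is complete. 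With that one substitution your argument is correct.
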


\begin{lemma}
If $G$ is a chordal graph, then $\pg(G) = 0$.
\end{lemma}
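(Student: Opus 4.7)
The goal is to show that for a chordal graph $G$, projections onto shortest paths from a fixed start vertex have no gaps at all: any two projection vertices on the path are either equal or adjacent along the path.

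The plan is a direct contradiction argument using $m$-convexity of disks (Lemma~\ref{lem:ChordalMConvex}). Fix a shortest path $P = \{v_0, v_1, \ldots, v_l\}$ starting from some vertex $s$ (so $d_G(s, v_i) = i$), an arbitrary vertex $x$, and suppose towards a contradiction that there exist indices $i < k$ with $v_i, v_k \in \Pr(x, P)$, $d_G(v_i, v_k) > 1$, and no $v_j$ with $i < j < k$ belongs to $\Pr(x, P)$. Let $r := d_G(x, P)$, so $d_G(x, v_i) = d_G(x, v_k) = r$ and $d_G(x, v_j) > r$ (equivalently $v_j \notin N^r[x]$) for every $i < j < k$.

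Now consider the subpath $P' = (v_i, v_{i+1}, \ldots, v_k)$ of $P$. By construction, $P' \cap N^r[x] = \{v_i, v_k\}$: the two endpoints lie in $N^r[x]$ while every internal vertex lies strictly outside. Lemma~\ref{lem:ChordalMConvex} ($m$-convexity of disks in chordal graphs) then forces $v_i$ and $v_k$ to be adjacent. But $v_i$ and $v_k$ both lie on the shortest path $P$ from $s$, so $d_G(v_i, v_k) = k - i \geq 2$ by our assumption that $d_G(v_i, v_k) > 1$, which contradicts adjacency.

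The argument is almost entirely mechanical once the $m$-convexity hypothesis is checked; the only place where care is needed is verifying that $P'$ indeed meets $N^r[x]$ only at its endpoints, which is immediate from the definition of projection and our assumption that no intermediate $v_j$ projects onto $x$. I do not foresee any real obstacle here: the lemma is tailor-made to convert the "gap" witness into an adjacency, and the shortest-path structure of $P$ immediately converts that adjacency into a contradiction. Thus no gap of length greater than $1$ can occur, i.e., $\pg(G) = 0$.
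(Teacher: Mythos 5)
Your proof is correct and follows essentially the same route as the paper: both take the subpath of $P$ between two consecutive projection vertices, observe that it meets the disk $N^r[x]$ only at its endpoints, and invoke the $m$-convexity lemma to force an adjacency that contradicts $P$ being a shortest path. Your version just makes explicit the choice of a gap with no intermediate projection vertex, which the paper's terser argument leaves implicit.
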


\begin{proof}
Assume $\pg(G) \geq 1$.
Then, there is a shortest path~$P = \{ u, \ldots, w \}$ and a vertex~$x$ with $\Pr(x, P) = \{ u, w \}$ and $d_G(u, w) > 1$.
By Lemma~\ref{lem:ChordalMConvex}, $u$ and $w$ are adjacent.
This contradicts with $d_G(u, w) > 1$.
\qed
\end{proof}

\begin{corollary}
For chordal graphs, a minimum eccentricity shortest path can be found in $\calO(n^2m)$ time.
\end{corollary}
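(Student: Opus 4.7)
The plan is to invoke the two pieces already in place: the structural result that $\pg(G)=0$ for every chordal graph, and the running-time bound for Algorithm~\ref{algo:valMESP} in the extreme case $\gamma=0$. Since both are established just above, the corollary is essentially a one-line combination, but it is worth being explicit about which bound of which theorem is being cited.

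First, I would note that the preceding lemma shows that every chordal graph $G$ satisfies $\pg(G)=0$: any two vertices of $\Pr(x,P)$ lying at distance $>1$ on a shortest path would, by Lemma~\ref{lem:ChordalMConvex} ($m$-convexity of disks in chordal graphs), have to be adjacent, which is a contradiction. Hence the hypothesis $\pg(G)=\gamma$ of the previous theorem is met with $\gamma=0$.

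Second, I would apply that theorem in the case $\gamma=0$. The stated bound there is $\calO(n^2 m)$ for finding a minimum eccentricity shortest path over all possible start vertices (since the per-source Algorithm~\ref{algo:valMESP} runs in $\calO(nm)$ time when $\gamma=0$, and we invoke it for each of the $n$ choices of source $s$). This immediately yields the $\calO(n^2m)$ bound claimed in the corollary.

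There is no real obstacle to overcome: the structural work was done in the $\pg$-lemma for chordal graphs, and the algorithmic work was done in the complexity analysis of Algorithm~\ref{algo:valMESP}. The only care needed is to make sure we are citing the ``all starts'' variant (giving $\calO(n^{\gamma+4})$ for $\gamma\geq 2$, $\calO(n^3m)$ for $\gamma=1$, and $\calO(n^2m)$ for $\gamma=0$) rather than the single-source variant, since the corollary asks for a minimum eccentricity shortest path in $G$ without prescribed endpoints.
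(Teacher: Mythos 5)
Your proposal is correct and matches the paper's (implicit) justification exactly: the corollary follows by combining the lemma that $\pg(G)=0$ for chordal graphs with the $\gamma=0$ case of the all-start-vertices theorem, giving $\calO(n^2m)$. Your remark about citing the ``all starts'' bound rather than the single-source $\calO(nm)$ bound is the right point of care.
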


Dually chordal graphs where introduced in~\cite{BraDraCheVol1998}.
They are closely related to chordal graphs.

\begin{lemma}
If $G$ is a dually chordal graph, then $\pg(G) \leq 1$.
\end{lemma}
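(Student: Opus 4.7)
The plan is to proceed by contradiction, adapting the argument used in the chordal case. Suppose $\pg(G) \geq 2$: then there exist a shortest path $P = (v_0, \ldots, v_\ell)$, a vertex $x$, and indices $i < k$ with $k - i \geq 3$ such that $v_i, v_k \in \Pr(x, P)$ while $d_G(x, v_j) > r := d_G(x, P)$ for every $i < j < k$. Thus $v_i, v_k \in N^r[x]$ but $v_{i+1}, \ldots, v_{k-1} \notin N^r[x]$, which means the subpath of $P$ from $v_i$ to $v_k$ meets the disk $N^r[x]$ only at its endpoints.

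The goal is then to invoke a $2$-convexity analogue of Lemma~\ref{lem:ChordalMConvex}: in a dually chordal graph, if $u, w \in N^r[x]$ are joined by a path whose interior avoids $N^r[x]$, then $d_G(u, w) \leq 2$. This is the natural counterpart of the chordal statement, with distance $2$ replacing distance $1$. Applied to $u = v_i$, $w = v_k$, and the above subpath of $P$, it yields $d_G(v_i, v_k) \leq 2$, contradicting $d_G(v_i, v_k) = k - i \geq 3$ since $P$ is a shortest path. Hence $\pg(G) \leq 1$.

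The main obstacle is establishing this $2$-convexity property of disks in dually chordal graphs. I would approach it via the compatible spanning tree characterization: there exists a spanning tree $T$ of $G$ such that every disk $N^r[v]$ induces a subtree of $T$. For $u, w \in N^r[x]$, the $T$-path between $u$ and $w$ stays inside the subtree $N^r[x]$, providing an internal route through vertices at $G$-distance at most $r$ from~$x$. Combining this internal route with an assumed external path, and applying the Helly property of disks in dually chordal graphs to a family such as $\{N^r[x], N^1[v_{i+1}], N^1[v_{k-1}]\}$ in the critical case $k - i = 3$, one should extract a common vertex $y \in N^r[x]$ adjacent to both $v_{i+1}$ and $v_{k-1}$, from which a short case analysis on the neighbourhoods of $u = v_i$ and $w = v_k$ in $T$ forces $d_G(u, w) \leq 2$ and completes the argument.
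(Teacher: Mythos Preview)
Your high-level plan is sound: reducing to a $2$-convexity statement for disks in dually chordal graphs and then deriving a contradiction would work, and the statement you aim for is in fact true. The gap is in your argument for the $2$-convexity itself. From the Helly family $\{N^r[x], N[v_{i+1}], N[v_{i+2}]\}$ in the case $k - i = 3$ you correctly obtain a common vertex $y \in N^r[x]$ adjacent to both $v_{i+1}$ and $v_{i+2}$, but this only yields $d_G(x, v_{i+1}), d_G(x, v_{i+2}) \leq r + 1$, which you already knew; it says nothing about $d_G(v_i, v_k)$. The appeal to the compatible spanning tree does not close the gap either: knowing that $v_i$, $v_k$, $y$ lie in the subtree of $T$ induced by $N^r[x]$ controls $T$-connectivity, not $G$-distance between $v_i$ and $v_k$, and the promised ``short case analysis'' is never carried out. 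Moreover, your Helly family is specific to $k - i = 3$; for larger gaps the disks $N^r[x]$ and $N[v_j]$ with $i + 1 < j < k - 1$ need not intersect at all, so the argument does not extend.

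The paper takes a different and much cleaner route. It uses the characterisation that the intersection graph of any family of disks of a dually chordal graph is \emph{chordal}. Writing the relevant subpath as $u = p_0, p_1, \ldots, p_\ell = w$ and forming the family $\calD = \{N[p_0], N[p_1], \ldots, N[p_\ell], N^{r-1}[x]\}$ (with your $r = d_G(x,P)$), its intersection graph $H$ is chordal. In $H$ the vertex $z$ representing $N^{r-1}[x]$ is adjacent to the vertices $a, c$ representing $N[u], N[w]$ (a neighbour of $u$ on a shortest $x$--$u$ path lies in both disks), while $a$ and $c$ are non-adjacent whenever $d_G(u, w) > 2$. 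Applying Lemma~\ref{lem:ChordalMConvex} \emph{inside $H$} to the path $a, b_1, \ldots, b_{\ell-1}, c$ then forces $z$ to be adjacent to some $b_j$, i.e.\ $N^{r-1}[x] \cap N[p_j] \neq \emptyset$, so $d_G(x, p_j) \leq r$ and $p_j \in \Pr(x, P)$. The key idea is to lift the chordal $m$-convexity argument to the disk-intersection graph rather than trying to prove a convexity property of $G$ directly; this handles all gap lengths at once with no case analysis.
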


\begin{proof}
Assume there is a shortest path~$P = \{ u, v_1, \ldots, v_i , w \}$ and a vertex~$x$ with $\Pr(x, P) \supseteq \{ u, w \}$.
To show that $\pg(G) \leq 1$, we will show that $d_G(u, w) = i + 1 > 2$ implies there is a vertex~$v_k \in \Pr(x, P)$ with $1 \leq k \leq i$.

Consider a family of disks $\calD = \big \{ N[u], N[v_{1}], \ldots, N[v_{i}], N[w], N^r[x]  \big \}$ where $r = d_G(x, P) - 1$.
Let $H$ be the intersection graph of $\calD$, $a$ be the vertex in~$H$ representing $N[u]$, $b_k$ representing $N[v_{k}]$ (for $1 \leq k \leq i$), $c$ representing $N[w]$, and $z$ representing~$N^r[x]$.
Because the intersection graph of disks of a dually chordal graph is chordal~\cite{BraDraCheVol1998}, $H$ is chordal, too.
$H$ contains the edges $za$ and~$zc$, $ab_1$, $cb_i$, and $b_kb_{k+1}$ for all $1 \leq k < i$.
Note that, if $d_G(u, w) > 2$, $a$ and~$c$ are not adjacent in~$H$.
However, the path $\{ a, b_1, \ldots, b_k, c \}$ connects $a$ and~$c$.
Therefore, because $H$ is chordal and by Lemma~\ref{lem:ChordalMConvex}, there is a $k$ with $1 \leq k \leq i$ such that $z$ is adjacent to~$b_k$ in~$H$.
Thus, $d_G(x, v_k) \leq r + 1$, \ie, $v_k \in \Pr(x, P)$.
\qed
\end{proof}

\begin{corollary}
For dually chordal graphs, a minimum eccentricity shortest path can be found in $\calO(n^3m)$ time.
\end{corollary}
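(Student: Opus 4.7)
The plan is to combine the two results that immediately precede this corollary. The preceding lemma establishes that every dually chordal graph~$G$ satisfies $\pg(G) \leq 1$, and the main theorem of this section gives, for any graph with $\pg(G) = \gamma$, a runtime of $\calO(n^3 m)$ when $\gamma = 1$ and $\calO(n^2 m)$ when $\gamma = 0$. Thus no new argument is required beyond invoking these facts.

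Concretely, I would first apply the preceding lemma to conclude $\gamma = \pg(G) \in \{0,1\}$. Then I would instantiate the theorem: in the worse case $\gamma = 1$, Algorithm~\ref{algo:valMESP} is run once from each possible start vertex~$s \in V$, giving total time $\calO(n \cdot n^2 m) = \calO(n^3 m)$; in the better case $\gamma = 0$ the runtime drops to $\calO(n^2 m)$, which is dominated by $\calO(n^3 m)$. Taking the dominating bound yields the claimed $\calO(n^3 m)$ complexity.

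There is essentially no obstacle here since the corollary is a direct specialisation. The only point worth noting, in case a referee objects, is that one does not need to decide whether $\pg(G) = 0$ or $\pg(G) = 1$ in advance: running Algorithm~\ref{algo:valMESP} with the parameter $\gamma = 1$ is correct for any graph with projection gap at most~$1$, and its complexity analysis already accommodates both cases with the stated $\calO(n^3 m)$ upper bound per start vertex. Hence the total runtime over all $n$ choices of $s$ is $\calO(n^3 m)$, establishing the corollary.
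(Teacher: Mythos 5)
Your proposal is correct and matches the paper's intent exactly: the corollary is stated without proof as a direct specialisation of the lemma ($\pg(G) \leq 1$ for dually chordal graphs) and the theorem giving $\calO(n^3m)$ time for $\gamma = 1$. Your remark that running Algorithm~\ref{algo:valMESP} with parameter $\gamma = 1$ is valid whenever the projection gap is at most~$1$ is consistent with the algorithm's stated precondition $\pg(G) \leq \gamma$, so no further argument is needed.
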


\subsubsection{Graphs with bounded Tree-Length or Tree-Breadth.}

As defined by \textsc{Robertson} and \textsc{Seymour}~\cite{RobertSeymou1986}, a \emph{tree-decomposition} of a graph $G = (V, E)$ is a tree $\calT$ with the vertex set~$\calB$ where each vertex of $\calT$, called bag, is a subset of~$V$ such that:
(i)~$V = \bigcup_{B \in \calB} B$, (ii)~for each edge~$uv \in E$, there is a bag~$B \in \calB$ with $u,v \in B$, and (iii)~for each vertex~$v \in V$, the bags containing $v$ induce a subtree of~$\calT$.

The \emph{length} of a tree~decomposition is smaller than or equal to~$\lambda$ if for each bag~$B$, $\diam_G(B) \leq \lambda$.
A graph~$G$ has \emph{tree-length}~$\lambda$, if there exist a tree-decomposition~$\calT$ for~$G$ such that $\calT$ has length~$\lambda$.
Similarly, the \emph{breadth} of a tree~decomposition is smaller than or equal to~$\rho$ if for each bag~$B$ there is a vertex~$v \in V$ with $N^\rho[v] \supseteq B$.
A graph~$G$ has \emph{tree-breadth}~$\rho$, if there exist a tree-decomposition~$\calT$ for~$G$ such that $\calT$ has breadth~$\lambda$.

For these graphs, we use a concept called \emph{layering partition}.
It was introduced in~\cite{BranChepDrag1999,ChepoiDragan2000}.
The idea is to first partition the vertices of a given graph in distance layers~$L_i^{(x)}$ with respect to a given vertex~$x$.
Second, partition each layer~$L_i^{(x)}$ into \emph{clusters} in such a way that two vertices $u$ and~$v$ are in the same cluster if they are connected by a path~$P$ such that $d_G(x, P) = d_G(x, u)$, \ie, $P$ does not contain vertices of layers closer to~$x$ than $u$ and~$v$.

Unfortunately, computing the tree-length of a graph is an NP-hard problem~\cite{Lokshtanov2010}.
However, for our needs, an approximation of it would suffice.

\begin{lemma}
If $G$ has tree-length~$\lambda$ or tree-breadth~$\rho$, a factor~$\gamma \geq \pg(G)$ can be computed in $\calO(n^3)$ time such that $\gamma \leq 3 \lambda - 1$ or $\gamma \leq 6 \rho - 1$, respectively.
\end{lemma}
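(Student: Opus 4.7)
The plan is to exploit the \emph{layering partition} construction of \cite{BranChepDrag1999,ChepoiDragan2000}. For a fixed vertex $x$, this refines each distance layer $L_r^{(x)}$ into clusters: two vertices of $L_r^{(x)}$ lie in the same cluster iff they are joined by a path in $G$ that avoids all layers $L_j^{(x)}$ with $j<r$. My candidate $\gamma$ will be the maximum $G$-diameter of any cluster, taken over the layering partitions rooted at every vertex of $V$, minus $1$.

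The first step is a structural claim: whenever $P$ is a shortest path and $v_i,v_k\in \Pr(x,P)$ are \emph{neighbouring} projections (i.e., no $v_j\in \Pr(x,P)$ with $i<j<k$), the vertices $v_i$ and $v_k$ belong to the same cluster of the layering partition rooted at $x$. Indeed, letting $r:=d_G(x,P)$, both $v_i,v_k\in L_r^{(x)}$, while $d_G(x,v_j)>r$ for every intermediate $v_j$ of $P$ (otherwise $v_j$ would be a closer projection). Hence the subpath of $P$ from $v_i$ to $v_k$ witnesses their connection inside layers $\geq r$, placing them in a common cluster. Combined with the definition of $\gamma$ this yields $d_G(v_i,v_k)\le \gamma+1$ for every neighbouring pair of projections, hence $\pg(G)\le \gamma$.

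The second ingredient is the known bound on cluster diameters: if $G$ has tree-length $\lambda$, then every cluster of every layering partition of $G$ has $G$-diameter at most $3\lambda$, as established via the tree-decomposition built on the layering partition (see e.g.\ \cite{ChepoiDragan2000}). For tree-breadth, since any bag of radius $\rho$ has $G$-diameter at most $2\rho$, tree-breadth $\rho$ implies tree-length at most $2\rho$, and the cluster-diameter bound becomes $6\rho$. These two facts immediately give $\gamma\le 3\lambda-1$ and $\gamma\le 6\rho-1$ respectively.

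For the $\calO(n^3)$ time bound, I first precompute all pairwise distances via $n$ BFS runs in $\calO(nm)$ time. Then, for each of the $n$ choices of starting vertex $s$, I compute the layering partition in $\calO(n+m)$ time (BFS from $s$, followed by labelling connected components inside each layer after removing the strictly lower layers). The diameter of each cluster $C$ is obtained by scanning all pairs in $C$ using the stored distances; since $\sum_C |C|^2\le n^2$, this costs $\calO(n^2)$ per root and $\calO(n^3)$ overall, dominating the $\calO(nm)$ preprocessing. The only delicate point of the whole argument is the cluster-diameter bound in terms of tree-length; once that known structural property of layering partitions is invoked, the rest of the argument — the projection-in-same-cluster observation and the complexity tally — is entirely routine.
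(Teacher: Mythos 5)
Your proposal is correct and follows essentially the same route as the paper: compute all pairwise distances, build a layering partition from every root, set $\gamma+1$ to the maximum cluster diameter, and invoke the known $3\lambda$ (resp.\ $6\rho$) bound on cluster diameters, with the same $\calO(nm)+\calO(n\cdot n^2)=\calO(n^3)$ accounting. Your explicit argument that consecutive projection vertices lie in a common cluster (via the subpath of $P$ staying in layers at distance $\geq d_G(x,P)$ from $x$) is a welcome detail the paper leaves implicit, but it does not change the approach.
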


\begin{proof}
To compute $\gamma$, first determine the pairwise distances of all vertices.
Then, compute a layering partition for each vertex~$x$.
Let $\gamma + 1$ be the maximum diameter of all clusters of all layering partitions.

The diameter of each cluster is at most $3 \lambda$ if $G$ has tree-length~$\lambda$ and at most $6 \rho$ if $G$ has tree-breadth~$\rho$~\cite{DouDraGavYan2007,DraganKohler2014}.
Therefore, for each shortest path~$P$, $\diam(\Pr(x, P)) \leq 3 \lambda$ and $\diam(\Pr(x, P)) \leq 6 \rho$, respectively.
Thus, $\pg(G) \leq \gamma \leq 3 \lambda - 1$ and $\pg(G) \leq \gamma \leq 6 \rho - 1$.

Computing the pairwise distances of all vertices can be done in $\calO(nm)$ time.
A layering partition can be computed in linear time~\cite{ChepoiDragan2000}.
For a given layering partition, the diameter of each cluster can be computed in $\calO(n^2)$ time if the pairwise distances of all vertices are known.
Thus, $\gamma$ can be computed in $\calO(n^3)$ time.
\qed
\end{proof}

Note that it is not necessary to know the tree-length or tree-breath of~$G$ to compute~$\gamma$.
Thus, by computing~$\gamma$ and then running Algorithm~\ref{algo:valMESP} for each vertex in $G$, we get:

\begin{corollary}
For graphs with tree-length~$\lambda$ or tree-breadth~$\rho$, a minimum eccentricity shortest path can be found in $\calO(n^{3\lambda+3})$ time or $\calO(n^{6\rho+3})$ time, respectively.
\end{corollary}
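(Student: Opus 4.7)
The plan is to derive this corollary as a straightforward combination of the preceding lemma (which bounds $\pg(G)$ in terms of $\lambda$ or $\rho$) with the theorem that provides the running time of Algorithm~\ref{algo:valMESP} in terms of $\pg(G)$, together with the observation made right before the corollary that a minimum eccentricity shortest path of $G$ can be obtained by running Algorithm~\ref{algo:valMESP} from every possible start vertex $s \in V$ and returning the best result.

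Concretely, I would first invoke the preceding lemma to compute, in $\calO(n^3)$ time, a value $\gamma$ with $\pg(G) \leq \gamma$, where $\gamma \leq 3\lambda - 1$ in the tree-length case and $\gamma \leq 6\rho - 1$ in the tree-breadth case. A crucial point worth highlighting is that Algorithm~\ref{algo:valMESP} only requires as input an integer upper bound on $\pg(G)$, not the exact value, so there is no need to compute $\pg(G)$ (nor the tree-length or tree-breadth) exactly; the approximate bound from the preceding lemma suffices.

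Next, I would run Algorithm~\ref{algo:valMESP} with this $\gamma$ for each of the $n$ possible start vertices $s \in V$, and among the resulting $n$ paths select one of smallest eccentricity; by Lemma~\ref{lm:mf}-style reasoning (or simply because every shortest path has some start vertex) this yields a minimum eccentricity shortest path of $G$. By the previous theorem, each invocation runs in $\calO(n^{\gamma+3})$ time provided $\gamma \geq 2$, giving a total of $\calO(n^{\gamma+4})$ time, which is dominated by the $\calO(n^3)$ preprocessing only in trivial corner cases. Substituting $\gamma \leq 3\lambda - 1$ gives $\calO(n^{3\lambda+3})$, and substituting $\gamma \leq 6\rho - 1$ gives $\calO(n^{6\rho+3})$.

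There is essentially no obstacle here beyond bookkeeping; the only minor subtlety is verifying that the relevant $\gamma$ falls into the regime $\gamma \geq 2$ of the previous theorem so that the uniform bound $\calO(n^{\gamma+3})$ per start vertex applies. For nontrivial graphs one has $\lambda \geq 1$ and $\rho \geq 1$, and in the borderline case $\lambda = 1$ (so $\gamma \leq 2$) the stated bound $\calO(n^{3\lambda+3}) = \calO(n^6)$ still absorbs both the per-vertex cost $\calO(n^{\gamma+3}) \leq \calO(n^5)$ over $n$ start vertices and the preprocessing; analogous remarks handle small $\rho$. This completes the proof.
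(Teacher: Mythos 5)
Your proposal is correct and follows essentially the same route as the paper: compute the bound $\gamma \leq 3\lambda-1$ (resp.\ $\gamma \leq 6\rho-1$) via the preceding lemma, run Algorithm~\ref{algo:valMESP} from every start vertex, and substitute into the $\calO(n^{\gamma+4})$ bound of the earlier theorem. Your additional remarks on only needing an upper bound on $\pg(G)$ and on the $\gamma\geq 2$ regime match the paper's brief justification and add no gap.
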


\subsubsection{$\delta$-Hyperbolic Graphs.}

A graph has \emph{hyperbolicity}~$\delta$ if for any four vertices $u$, $v$, $w$, and~$x$, the  two  larger of the sums $d_G(u, v) + d_G(w, x)$, $d_G(u, w) + d_G(v, x)$, and $d_G(u, x) + d_G(v, w)$ differ by at most~$2\delta$.

\begin{lemma}
    [\cite{CheDraEstHab2008}]
    \label{lem:deltaHyper}
Let $u$, $v$, $w$, and $x$ be four vertices in a $\delta$-hyperbolic graph.
If $d_G(u, w) > \max \{ d_G(u, v), d_G(v, w) \} + 2\delta$, then $d_G(v, x) < \max \{ d_G(x, u), d_G(x, w) \}$.
\end{lemma}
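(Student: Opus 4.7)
The plan is to argue directly by contradiction from the 4-point condition that defines $\delta$-hyperbolicity. First I would abbreviate the six pairwise distances among the four vertices: set $a = d_G(u,w)$, $b = d_G(u,v)$, $c = d_G(v,w)$, $p = d_G(u,x)$, $q = d_G(v,x)$, $r = d_G(w,x)$. Then the hypothesis reads $a > \max(b, c) + 2\delta$, and the conclusion to establish is $q < \max(p, r)$.

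Next I would form the three distance sums over the quadruple $\{u,v,w,x\}$, namely $S_1 = b + r$, $S_2 = a + q$, and $S_3 = p + c$. The definition of $\delta$-hyperbolicity says exactly that the two largest of $S_1, S_2, S_3$ differ by at most $2\delta$. The heart of the argument is now to suppose for contradiction that $q \geq \max(p, r)$. Combining the strict inequality $a > b + 2\delta$ with $q \geq r$ gives $S_2 > S_1 + 2\delta$, and combining $a > c + 2\delta$ with $q \geq p$ gives $S_2 > S_3 + 2\delta$. Hence $S_2$ is strictly the largest of the three sums and exceeds the second-largest by more than $2\delta$, violating the 4-point condition. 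Therefore $q < \max(p, r)$.

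I do not anticipate any real obstacle: the statement is essentially a bookkeeping consequence of the 4-point condition, and there is no need to invoke any auxiliary lemma from the preceding sections. The only subtlety is in preserving the strict inequality in the conclusion, which is why the proof uses the strict hypothesis $a > \max(b,c) + 2\delta$ in \emph{both} of the comparisons $S_2 > S_1 + 2\delta$ and $S_2 > S_3 + 2\delta$; a non-strict hypothesis would only yield $q \leq \max(p,r)$.
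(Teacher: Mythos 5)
Your argument is correct: setting up the three distance sums $d_G(u,v)+d_G(w,x)$, $d_G(u,w)+d_G(v,x)$, $d_G(u,x)+d_G(v,w)$ and showing that $d_G(v,x)\geq\max\{d_G(x,u),d_G(x,w)\}$ would force the sum $d_G(u,w)+d_G(v,x)$ to strictly exceed both others by more than $2\delta$ is exactly the intended derivation from the four-point condition, and your handling of strict versus non-strict inequalities is right. The paper itself gives no proof of this lemma (it is imported from the cited reference \cite{CheDraEstHab2008}), so there is nothing further to compare against; your proof fills that gap correctly.
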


\begin{lemma}
If $G$ is $\delta$-hyperbolic, then $\pg(G) \leq 4 \delta$.
\end{lemma}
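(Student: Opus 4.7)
My plan is to prove the (stronger) statement that the \emph{diameter} of any projection $\Pr(x,P)$ onto a shortest path $P$ is at most $4\delta+1$; this makes the implication in the definition of $\pg(G)$ vacuously true whenever $d_G(v_i,v_k)>4\delta+1$, yielding $\pg(G)\leq 4\delta$ immediately.

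The approach is by contradiction via Lemma~\ref{lem:deltaHyper}. Suppose, toward a contradiction, that there exist a shortest path $P=\{v_0,\ldots,v_l\}$, a vertex $x$, and two indices $i<k$ with $v_i,v_k\in\Pr(x,P)$ and $k-i=d_G(v_i,v_k)\geq 4\delta+2$. Set $r:=d_G(x,P)$, so $d_G(x,v_i)=d_G(x,v_k)=r$. Choose the index $j^{*}$ between $i$ and $k$ that balances the two halves, namely $j^{*}:=i+\lceil(k-i)/2\rceil$. Since $P$ is a shortest path, $d_G(v_i,v_{j^{*}})=j^{*}-i$ and $d_G(v_{j^{*}},v_k)=k-j^{*}$, hence
\[
\max\bigl\{d_G(v_i,v_{j^{*}}),d_G(v_{j^{*}},v_k)\bigr\}=\lceil(k-i)/2\rceil.
\]

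Next I verify the hypothesis of Lemma~\ref{lem:deltaHyper} for the quadruple $(u,v,w,x)=(v_i,v_{j^{*}},v_k,x)$. The required inequality is $k-i>\lceil(k-i)/2\rceil+2\delta$, which is equivalent to $\lfloor(k-i)/2\rfloor\geq 2\delta+1$, and this holds precisely because $k-i\geq 4\delta+2$. Lemma~\ref{lem:deltaHyper} then yields
\[
d_G(v_{j^{*}},x)<\max\bigl\{d_G(x,v_i),d_G(x,v_k)\bigr\}=r.
\]
But $v_{j^{*}}\in P$ forces $d_G(x,v_{j^{*}})\geq d_G(x,P)=r$, a contradiction. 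Thus no such $v_i,v_k$ exist, i.e.\ every two vertices of $\Pr(x,P)$ lie within distance $4\delta+1$, and so $\pg(G)\leq 4\delta$.

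The only subtle step is the arithmetic that aligns $k-i\geq 4\delta+2$ with the strict inequality required by Lemma~\ref{lem:deltaHyper}; the rest is bookkeeping. I would also remark briefly that the choice of $j^{*}$ as the balanced midpoint is what makes the bound tight: a lopsided choice would force a larger gap and a worse constant than $4\delta$.
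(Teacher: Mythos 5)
Your proof is correct and matches the paper's argument essentially verbatim: both bound the diameter of any projection $\Pr(x,P)$ by $4\delta+1$ by taking a (balanced) middle vertex of the subpath between two projection vertices and applying Lemma~\ref{lem:deltaHyper} to contradict membership in the projection. The only difference is that you spell out the ceiling/floor arithmetic explicitly, which the paper states more loosely as $d_G(u,v)\geq d_G(v,w)\geq 2\delta+1$.
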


\begin{proof}
Consider two vertices $u$ and~$w$ such that $u,w \in \Pr(x, P)$ for some vertex~$x$ and shortest path~$P$.
Let $v \in P$ be a vertex such that $d_G(u, v) - d_G(v, w) \leq 1$ and $d_G(u, v) \geq d_G(v, w)$, \ie, $v$ is a middle vertex on the subpath from $u$ to~$w$.

Assume, $d_G(u, w) > 4\delta + 1$.
Thus, $d_G(u, v) \geq d_G(v, w) \geq 2\delta + 1$ and $d_G(u, w) > d_G(u, v) + 2\delta$.
Therefore, by Lemma~\ref{lem:deltaHyper}, $d_G(v, x) < \max \{ d_G(x, u), d_G(x, w) \}$.
This contradicts that $u,w \in \Pr(x, P)$.
Hence, the diameter of a projection is at most $4 \delta + 1$ and, therefore, $\pg(G) \leq 4 \delta$.
\qed
\end{proof}

\begin{corollary}
For $\delta$-hyperbolic graphs, a minimum eccentricity shortest path can be found in $\calO(n^{4\delta+4})$ time.
\end{corollary}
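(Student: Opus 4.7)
The plan is essentially to chain together two results already established in the paper, so no new technical work is required. The preceding lemma shows that every $\delta$-hyperbolic graph~$G$ satisfies $\pg(G) \leq 4\delta$, and the general theorem on graphs with bounded projection gap states that a minimum eccentricity shortest path can be computed in $\calO(n^{\gamma+4})$ time whenever $\pg(G) = \gamma \geq 2$. Since the target bound $\calO(n^{4\delta+4})$ already matches the general theorem with $\gamma = 4\delta$, the only thing to do is to verify that the substitution is legitimate.

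More concretely, I would first invoke the preceding lemma to obtain the numerical bound $\gamma := 4\delta \geq \pg(G)$. I would then observe that the running time $\calO(n^{\gamma'+4})$ of the general algorithm is monotone in any upper bound $\gamma' \geq \pg(G)$: the algorithm only needs $\gamma$ as an input parameter, and an overestimate of the projection gap does not affect correctness (only the running time). Hence we may feed $\gamma = 4\delta$ into Algorithm~\ref{algo:valMESP}, executed from each starting vertex~$s$, and obtain a minimum eccentricity shortest path in $\calO(n^{4\delta+4})$ time.

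The only mildly subtle point to address is the range of $\delta$: for $\delta = 0$ (trees) we have $\gamma = 0$ and the previous theorem gives $\calO(n^2 m)$, which is dominated by $\calO(n^{4\delta+4}) = \calO(n^4)$; similarly the cases $\gamma = 1, 2$ are absorbed by the stated bound. Thus the uniform expression $\calO(n^{4\delta+4})$ is valid for all $\delta \geq 0$. There is no genuine obstacle here—the statement is a direct corollary, and the proof in the paper is expected to be a one-line reference to the lemma and the general running-time theorem.
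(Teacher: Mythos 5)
Your proposal is correct and follows exactly the route the paper intends: the corollary is an immediate combination of the lemma $\pg(G)\leq 4\delta$ with the $\calO(n^{\gamma+4})$ theorem for bounded projection gap, and the paper gives no further argument. Your extra remarks --- that an overestimate $\gamma\geq\pg(G)$ preserves correctness and only affects running time, and that the small-$\gamma$ cases are absorbed by the stated bound --- are sound and merely make explicit what the paper leaves implicit.
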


\section{Approximation for Graphs with Bounded Tree-Length and Bounded Hyperbolicity}

In the last sections, we have shown how to find a shortest path with minimum eccentricity~$k$ for several graph classes.
For graphs with tree-length~$\lambda$, this can require up to $\calO(n^{3\lambda + 3})$ time.
In this section, we will show that, for graphs with tree-length~$\lambda$, we can find a shortest path with eccentricity at most $k + 2.5\lambda$ in at most $\calO(\lambda m)$ time and, for graphs with hyperbolicity~$\delta$, we can find a shortest path with eccentricity at most $k + \calO(\delta \log n)$ in at most $\calO(\delta m)$~time.

\begin{lemma}
    \label{lem:mutualDistHyper}
Let $G$ be a graph with hyperbolicity~$\delta$.
Two vertices $x$ and~$y$ in $G$ with $\ecc(x) = \ecc(y) = d_G(x, y)$ can be found in $\calO(\delta m)$ time.
\end{lemma}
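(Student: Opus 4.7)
My plan is a multi-sweep BFS procedure whose analysis rests on a single four-point-condition estimate. Pick an arbitrary vertex $v_0 \in V$ and run BFS from $v_0$ to find a furthest vertex $x_1$. Then iteratively, for $i \geq 1$, run BFS from $x_i$ to obtain a furthest vertex $x_{i+1}$; this BFS simultaneously certifies $\ecc(x_i) = d_G(x_i, x_{i+1})$. Stop at the first index $i$ for which $d_G(x_i, x_{i+1}) = d_G(x_{i-1}, x_i)$, and output the pair $(x_{i-1}, x_i)$.

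For correctness, note that at termination $x_i$ is a furthest vertex from $x_{i-1}$, so $\ecc(x_{i-1}) = d_G(x_{i-1}, x_i)$; combined with the stopping condition $\ecc(x_i) = d_G(x_i, x_{i+1}) = d_G(x_{i-1}, x_i)$, this yields $\ecc(x_{i-1}) = \ecc(x_i) = d_G(x_{i-1}, x_i)$, which is exactly the required mutually-furthest property. The trivial monotonicity $\ecc(x_{i+1}) \geq d_G(x_i, x_{i+1}) = \ecc(x_i)$ together with the stopping rule forces a strict, integer-valued increase of $\ecc(x_i)$ in every non-terminating iteration.

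The key technical step, and the main obstacle, is to lower-bound $\ecc(x_1)$ in terms of $\diam(G)$: I claim $\ecc(x_1) \geq \diam(G) - 2\delta$. Let $a, b$ be a diametral pair with $D := \diam(G) = d_G(a,b)$, and apply the $\delta$-hyperbolicity four-point condition to $v_0, x_1, a, b$. Because $x_1$ is furthest from $v_0$, we have $d_G(v_0, x_1) \geq d_G(v_0, a)$ and $d_G(v_0, x_1) \geq d_G(v_0, b)$; and by definition of $D$, $D \geq d_G(x_1, a)$ and $D \geq d_G(x_1, b)$. Hence the sum $S_1 := d_G(v_0, x_1) + D$ dominates both $S_2 := d_G(v_0, a) + d_G(x_1, b)$ and $S_3 := d_G(v_0, b) + d_G(x_1, a)$, so hyperbolicity gives $S_1 - \max(S_2, S_3) \leq 2\delta$, and in particular $S_1 - (S_2+S_3)/2 \leq 2\delta$. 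Substituting and using $d_G(v_0, x_1) \geq (d_G(v_0, a) + d_G(v_0, b))/2$ yields $D - (d_G(x_1, a) + d_G(x_1, b))/2 \leq 2\delta$, so $\ecc(x_1) \geq \max(d_G(x_1, a), d_G(x_1, b)) \geq D - 2\delta$.

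Combining the pieces: the integer sequence $\ecc(x_1) < \ecc(x_2) < \cdots$ lies in $[D - 2\delta,\, D]$, so the loop terminates after at most $2\delta + 1$ iterations. Since each iteration is a single BFS in $\calO(m)$ time, the whole procedure runs in $\calO(\delta m)$ time, as claimed. Everything beyond the four-point estimate for $\ecc(x_1)$ is just monotonicity plus a pigeonhole bound on the range of $\ecc$, so the hyperbolicity inequality above is really the only nontrivial ingredient.
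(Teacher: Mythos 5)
Your algorithm and its analysis are essentially the paper's: iterate BFS to a furthest vertex, note that the eccentricities strictly increase until the stopping condition $\ecc(x_{i-1})=\ecc(x_i)=d_G(x_{i-1},x_i)$ is met, and bound the number of iterations by $2\delta+1$ via the estimate $\ecc(x_1)\geq\diam(G)-2\delta$. The only structural difference is that the paper obtains that estimate by invoking a quoted lemma of Chepoi et al.\ (Lemma~\ref{lem:deltaHyper}), whereas you re-derive it from the four-point condition.

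In that re-derivation one step is invalid, although your conclusion survives. From $S_1-\max(S_2,S_3)\leq 2\delta$ you write ``in particular $S_1-(S_2+S_3)/2\leq 2\delta$,'' but since $(S_2+S_3)/2\leq\max(S_2,S_3)$ this is a strengthening, not a weakening, and it is false in general: the four-point condition says nothing about $S_1-\min(S_2,S_3)$. (Concretely, on a path $w_0,\ldots,w_{10}$ with $v_0=w_5$, $x_1=a=w_0$, $b=w_{10}$ one has $\delta=0$ yet $S_1-(S_2+S_3)/2=5$, and the intermediate claim $D-(d_G(x_1,a)+d_G(x_1,b))/2\leq 2\delta$ also fails.) The fix is to drop the averaging entirely: say $\max(S_2,S_3)=S_2$; then $d_G(v_0,x_1)+D-d_G(v_0,a)-d_G(x_1,b)\leq 2\delta$, and since $d_G(v_0,x_1)\geq d_G(v_0,a)$ this already gives $d_G(x_1,b)\geq D-2\delta$, hence $\ecc(x_1)\geq D-2\delta$ (symmetrically with $a$ if the maximum is $S_3$). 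With that one-line repair the argument is complete and matches the paper's.
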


\begin{proof}
Let $u$ and~$v$ be two vertices in $G$ such that $d_G(u, v) = \diam(G)$.
For an arbitrary vertex~$x_0$ and for $i \geq 0$, let $y_i = x_{i+1}$ be vertices in~$G$ such that $d_G(x_i, y_i) = \ecc(x_i)$ and $d_G(x_i, y_i) < d_G(x_{i+1}, y_{i+1})$.
To prove Lemma~\ref{lem:mutualDistHyper}, we will show that there is no vertex~$y_{2\delta + 1}$.

Because $d_G(x_0, y_0) = \ecc(x_0)$, $d_G(x_0, y_0) \geq \max \{ d_G(x_0, u), d_G(x_0 ,v) \}$.
Therefore, by Lemma~\ref{lem:deltaHyper}, $d_G(u, v) \leq \max \{ d_G(u, y_0), \allowbreak d_G(v, y_0) \} + 2\delta$ and, thus, $\diam(G) \leq \ecc(x_1) + 2\delta$.
Since $d_G(x_i, y_i) < d_G(x_{i+1}, y_{i+1})$, there is no vertex~$y_j$ with $j \geq 2\delta + 1$, otherwise $d_G(x_j, y_j) > \diam(G)$.
Therefore, a vertex pair~$x,y$ with $\ecc(x) = \ecc(y) = d_G(x, y)$ can be found in $\calO(\delta m)$~time as follows:
Pick an arbitrary vertex~$x_0$ and find a vertex~$x_1$ with $d_G(x_0, x_1) = \ecc(x_0)$ using a BFS.
Next, find a vertex~$x_2$ such that $d_G(x_1, x_2) = \ecc(x_1)$.
Repeat this (at most $2\delta$ times) until $d_G(x_i, x_{i+1}) = \ecc(x_i) = \ecc(x_{i+1})$.
\qed
\end{proof}

Note that, if a graph has tree-length~$\lambda$, its hyperbolicity is at most $\lambda$~\cite{CheDraEstHab2008}.
Thus, it follows:

\begin{corollary}
    \label{cor:mutualDistTreeLen}
Let $G$ be a graph with tree-length~$\lambda$.
Two vertices $x$ and~$y$ in $G$ with $\ecc(x) = \ecc(y) = d_G(x, y)$ can be found in $\calO(\lambda m)$ time.
\end{corollary}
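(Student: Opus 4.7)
The plan is to deduce this corollary directly from Lemma~\ref{lem:mutualDistHyper} by invoking the known relation between tree-length and hyperbolicity. The excerpt already cites the fact from~\cite{CheDraEstHab2008} that a graph of tree-length~$\lambda$ has hyperbolicity at most~$\lambda$, so the substantive work is already done and only the substitution $\delta \le \lambda$ remains.

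First I would state that $G$, being of tree-length~$\lambda$, is $\delta$-hyperbolic with $\delta \le \lambda$ (by the cited result). Then I would apply Lemma~\ref{lem:mutualDistHyper} verbatim: pick an arbitrary vertex $x_0$; iteratively compute $x_{i+1}$ as a vertex most distant from $x_i$ via a single BFS, stopping as soon as $\ecc(x_i) = \ecc(x_{i+1}) = d_G(x_i, x_{i+1})$. The analysis in the proof of Lemma~\ref{lem:mutualDistHyper} shows that this halts after at most $2\delta + 1$ iterations, hence after at most $2\lambda + 1$ iterations in our case. Since each iteration performs a single BFS in $\calO(m)$ time, the total running time is $\calO(\lambda m)$, as claimed.

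There is essentially no obstacle to overcome: the corollary is a one-line specialization of the lemma under the inequality $\delta(G) \le \tl(G)$. The only minor thing to be careful about is to state clearly that the algorithm itself does not need to know $\lambda$ in advance, since the stopping criterion is the equality $\ecc(x_i) = \ecc(x_{i+1}) = d_G(x_i, x_{i+1})$, which is detected on the fly; the parameter $\lambda$ enters only into the bound on the number of iterations and thus into the running time analysis.
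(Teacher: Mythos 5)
Your proposal is correct and matches the paper's own argument exactly: the paper likewise derives the corollary by noting that tree-length~$\lambda$ implies hyperbolicity at most~$\lambda$ (citing the same reference) and then applying Lemma~\ref{lem:mutualDistHyper}. Your added remark that $\lambda$ is needed only for the running-time bound, not by the algorithm itself, is a harmless and accurate clarification.
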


The next lemma will show that, in a graph with bounded tree-length, a shortest path between two mutually furthest vertices gives an approximation for the MESP-problem.

\begin{lemma}
    \label{lem:ApproxTreeLen}
Let $G$ be a graph with tree-length~$\lambda$ having a shortest path with eccentricity~$k$.
Also, let $x$ and~$y$ be two mutually furthest vertices, \ie, $\ecc(x) = \ecc(y) = d_G(x, y)$.
Each shortest path from $x$ to~$y$ has eccentricity less than or equal to $k + 2.5 \lambda$.
\end{lemma}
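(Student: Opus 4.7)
The plan is to bound $d_G(v, P) \leq k + 2.5\lambda$ for an arbitrary vertex $v$ and an arbitrary shortest $(x,y)$-path $P$, where $P^*$ denotes a shortest path of eccentricity~$k$. I would first extract auxiliary projections: a vertex $v^* \in P^*$ with $d_G(v, v^*) \leq k$, and vertices $x^*, y^* \in P^*$ with $d_G(x, x^*), d_G(y, y^*) \leq k$. Writing $D := d_G(x,y)$, the mutually furthest hypothesis $\ecc(x) = \ecc(y) = D$ yields $d_G(v, x), d_G(v, y) \leq D$, and triangle inequalities force $|d_G(x^*, y^*) - D| \leq 2k$, so $v^*$ lies close to (typically between) $x^*$ and $y^*$ on $P^*$.

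The decisive ingredient is a fellow-traveller property of graphs of tree-length~$\lambda$: any two shortest paths whose endpoint pairs lie within distance~$c$ have Hausdorff distance at most $c + c'\lambda$ for a small constant $c'$, derived by walking a tree-decomposition of length~$\lambda$ between the relevant bags. I would apply this to $P$ (from $x$ to $y$) and $P^*[x^*, y^*]$ (from $x^*$ to $y^*$), whose endpoint pairs are $k$-close. Since $v^* \in P^*[x^*, y^*]$, this gives $d_G(v^*, P) \leq k + c'\lambda$; combined with $d_G(v, v^*) \leq k$, a naive triangle inequality yields $d_G(v, P) \leq 2k + c'\lambda$.

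To reach the claimed $k + 2.5\lambda$ the argument must save one copy of~$k$. The idea is not to detour through $v^*$ but instead to route from $v$ directly into a bag of the tree decomposition that simultaneously touches the subtree of bags containing $v$ and a bag meeting $P$, paying $k$ only once (the cost of reaching $P^*$ from $v$) and then spending at most $2.5\lambda$ on the bag-to-bag transit along the decomposition. The mutually furthest property is what guarantees the existence of such a short transit: because $x,y$ are mutually furthest, the decomposition-path bags between $P$ and $P^*$ cannot be "too deep" relative to either endpoint.

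The main obstacle will be pinning down the specific constant $2.5\lambda$ rather than a looser $3\lambda$. This requires carefully tracking both the $\lambda$-cost of bag diameters and a half-bag correction exploiting that $v$ is an arbitrary graph vertex (not a bag centre), and it must use the mutually furthest hypothesis to rule out configurations in which $P^*$ and $P$ diverge over more than two bags of the tree decomposition; any slack in the bag-wise charging inflates the constant immediately.
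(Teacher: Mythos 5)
Your high-level ingredients match the paper's: a tree-decomposition walk giving a fellow-traveller-type bound where the two paths' bags overlap, plus the mutually-furthest hypothesis to control the rest. But there is a genuine gap at exactly the point you flag as the key idea. First, the assertion that $v^*$ lies (essentially) between $x^*$ and $y^*$ on $P^*$ does not follow: the inequalities $d_G(v,x), d_G(v,y) \leq D$ are automatic from $\ecc(x)=\ecc(y)=D$ and only give $d_G(v^*,\{x^*,y^*\}) \leq D + 2k$ in the worst case, so $v^*$ can sit up to roughly $2k$ beyond $x^*$ or $y^*$ along $P^*$. That ``outside'' case is precisely the hard one; your fellow-traveller application covers only $P^*[x^*,y^*]$, and the naive detour through $v^*$ there costs another additive $2k$, not $\calO(\lambda)$.

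Second, the paragraph meant to ``save one copy of $k$'' describes a hoped-for outcome rather than an argument. The paper's mechanism is concrete: take the extreme bag $B_l$ on the tree-path that still contains both a vertex of the eccentricity-$k$ path and a vertex $z$ of the $(x,y)$-path; for a vertex $v$ whose projection falls beyond that region, every $x$--$v$ path crosses the separator $B_l$; one shows $d_G(y,B_l) \leq k + 0.5\lambda$ by projecting $y$ onto the eccentricity-$k$ path and using bag diameter at most $\lambda$; then the chain $d_G(x,v') + d_G(v',v) \leq d_G(x,y) \leq d_G(x,v') + d_G(v',u) + d_G(u,y)$ with $u,v' \in B_l$ yields $d_G(v',v) \leq k + 1.5\lambda$ and hence $d_G(v,z) \leq k + 2.5\lambda$. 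This is where $\ecc(x)=d_G(x,y)$ is actually spent, and nothing in your sketch substitutes for it. (Your fellow-traveller lemma with Hausdorff constant $c + c'\lambda$ would also need its own proof, but that is secondary.) As written, the proposal establishes at best $2k + \calO(\lambda)$, not the claimed bound.
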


\begin{proof}
Let $P$ be a shortest path from $s$ to~$t$ with eccentricity~$k$ and $Q$ be a shortest path from $x$ to~$y$.
Consider a tree-decomposition~$\calT$ for $G$ with length~$\lambda$.
We distinguish between two cases:
(1)~There is a bag in~$\calT$ containing a vertex of $P$ and a vertex of $Q$ and (2)~there is no such bag in~$\calT$.

\paragraph{Case 1: There is a bag in~$\calT$ containing a vertex of $P$ and a vertex of~$Q$.}
We define bags $B_x$ and $B_y$ as follows:
Both contain a vertex of~$P$ and a vertex of~$Q$, $B_x$ is a bag closest to a bag containing~$x$, $B_y$ is a bag closest to a bag containing~$y$, and the distance between $B_x$ and~$B_y$ in~$\calT$ is maximal.
Let $\{ B_0, B_1, \ldots, B_l \}$ be a subpath of the shortest path from $B_x$ to~$B_y$ in~$\calT$ such that $B_0$ is a bag closest to a bag containing~$s$, $B_l$ is a bag closest to a bag containing~$t$, $B_i$ is adjacent to~$B_{i+1}$ in $\calT$ ($0 \leq i < l$), and the distance~$l$ between $B_0$ and~$B_l$ is maximal.
Without loss of generality, let $d_{\calT}(B_x, B_0) \leq d_{\calT}(B_x, B_l)$.
Let $p_s$ be the vertex in $B_0 \cap P$ which is closest to~$s$ in~$G$ and let $p_t$ be the vertex in $B_l \cap P$ which is closest to~$t$ in~$G$.
Figure~\ref{fig:TreeDecoEg} gives an illustration.

\begin{figure}
    [htb]
    \centering
    \includegraphics[]{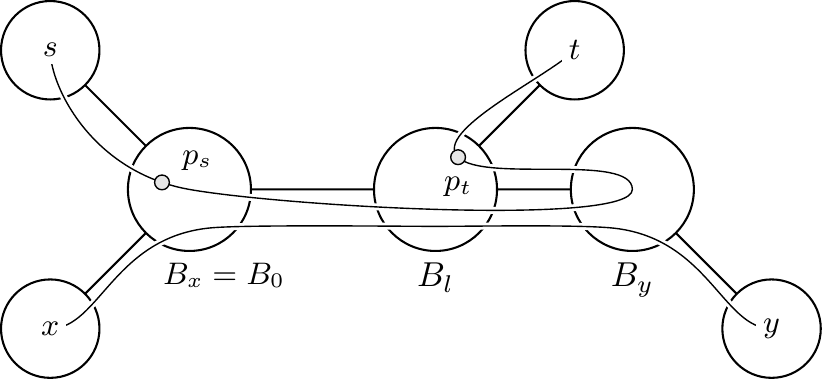}%
    \caption
    {
        Example for a possible tree-decomposition.
    }
    \label{fig:TreeDecoEg}
\end{figure}

\begin{claim}
For each vertex~$p \in P$ with $d_G(s, p_s) \leq d_G(s, p) \leq d_G(s, p_t)$, $d_G(p, Q) \leq 1.5 \lambda$.
\end{claim}

\begin{proof}
    [Claim]
There is a vertex set $\{ p_s = p_0, p_1, \ldots, p_l, p_{l+1} = p_t \} \subseteq P$, where $p_i \in B_{i-1} \cap B_i$ for all positive $i \leq l$.
Because $p_i, p_{i+1} \in B_i$ for $0 \leq i \leq l$, $d_G(p_i, p_{i+1}) \leq \lambda$.
Thus, because $P$ is a shortest path, for all $p' \in P$ with $d_G(s, p_s) \leq d_G(s, p') \leq d_G(s, p_t)$ there is a vertex~$p_i$ with $0 \leq i \leq l+1$ such that $d_G(p_i, p') \leq 0.5 \lambda$.
By definition of $\calT$, each bag~$B_i$ ($0 \leq i \leq l$) contains a vertex~$q \in Q$, \ie, $d_G(p_i, Q) \leq \lambda$ ($0 \leq i \leq l+1$).
Therefore, for all $p' \in P$ with $d_G(s, p_s) \leq d_G(s, p') \leq d_G(s, p_t)$ there is a vertex~$p_i$ with $0 \leq i \leq l+1$ such that $d_G(p', Q) \leq d_G(p_i, p') + d_G(p_i, Q) \leq 1.5 \lambda$.
\qedClaim
\end{proof}

Consider an arbitrary vertex~$v$ in~$G$.
Let $v'$ be a vertex in~$P$ closest to~$v$ and let $P_v$ be a shortest path from $v$ to~$v'$.
If $v'$ is between $p_s$ and $p_t$, \ie, $d_G(s, p_s) \leq d_G(s, v') \leq d_G(s, p_t)$, by the claim above, $d_G(v, Q) \leq d_G(v, v') + d_G(v', Q) \leq k + 1.5 \lambda$.
If $P_v$ intersects a bag containing a vertex~$q \in Q$, $d_G(v, Q) \leq k + \lambda$.

Next, consider the case when $P_v$ does not intersect a bag containing a vertex of $Q$ and (without loss of generality) $d_G(s, v') > d_G(s, p_t)$.
In this case, each path from $x$ to~$v$ intersects~$B_l$.

\begin{claim}
There is a vertex~$u \in B_l$ such that $d_G(u, y) \leq k + 0.5 \lambda$.
\end{claim}

\begin{proof}
    [Claim]
Let $y'$ be a vertex in~$P$ that is closest to~$y$ and let $P_y$ be a shortest path from $y$ to~$y'$.
If $P_y$ intersects~$B_l$, there is a vertex $u \in P_y \cap B_l$ with $d_G(y, u) \leq k$.

If $P_y$ does not intersect~$B_l$, there is a subpath of~$P$ starting at~$p_t$, containing~$y'$, and ending in a vertex~$p_l \in B_l$.
Because $d_G(p_t, p_l) \leq \lambda$, $d_G(y', \{ p_t, p_l \}) \leq 0.5 \lambda$.
Therefore, $d_G(y, \{ p_t, p_l \}) \leq d_G(y, y') + d_G(y', \{ p_t, p_l \}) \leq k + 0.5 \lambda$.
\qedClaim
\end{proof}

Let $u$, $v'$, and~$z$ be vertices in~$B_l$ such that $d_G(u, y) \leq k + 0.5 \lambda$, $v'$ is on a shortest path from $x$ to~$v$, and $z \in Q$.
Because $d_G(x, y) = \ecc(x)$, $d_G(x, v') + d_G(v', v) \leq d_G(x, y)$.
Also, by the triangle inequality, $d_G(x, y) \leq d_G(x, v') + d_G(v', y)$ and $d_G(v', y) \leq d_G(v', u) + d_G(u, y)$.
Because $\{ u, v', z \} \subseteq B_l$ and $d_G(u, y) \leq k + 0.5 \lambda$, $d_G(v', v) \leq k + 1.5 \lambda$ and therefore $d_G(z, v) \leq k + 2.5 \lambda$.

Thus, if there is a bag in~$\calT$ containing a vertex of~$P$ and a vertex of~$Q$, $d_G(v, Q) \leq k + 2.5 \lambda$ for all vertices~$v$ in~$G$.

\paragraph{Case 2: There is no bag in~$\calT$ containing vertices of $P$ and~$Q$.}
Because there is no such bag, $\calT$ contains a bag~$B$ such that each path from $x$ and~$y$ to~$P$ intersects~$B$ and there is a vertex~$z \in B \cap Q$.

Consider an arbitrary vertex~$v$.
If there is a shortest path~$P_v$ from $v$ to~$P$ which intersects~$B$, then $d_G(z, v) \leq k + \lambda$.
If there is no such path, each path from $x$ to~$v$ intersects~$B$.
Let $v' \in B$ be a vertex on a shortest path from $x$ to~$v$ and let $u \in B$ be a vertex on a shortest path from $y$ to~$P$.
Note that $d_G(u, y) \leq k$.

Because $d_G(x, y) = \ecc(x)$, $d_G(x, v') + d_G(v', v) \leq d_G(x, y)$.
Also, by the triangle inequality, $d_G(x, y) \leq d_G(x, v') + d_G(v', y)$ and $d_G(v', y) \leq d_G(v', u) + d_G(u, y)$.
Because $\{ u, v', z \} \subseteq B$ and $d_G(u, y) < k$, $d_G(v', v) < k + \lambda$ and therefore $d_G(z, v) < k + 2 \lambda$.

Thus, if there is no bag in~$\calT$ containing vertices of~$P$ and~$Q$, $d_G(v, Q) < k + 2 \lambda$ for all vertices~$v$ in~$G$.
\qed
\end{proof}

In~\cite{CheDraEstHab2008}, it was shown that an $n$-vertex $\delta$-hyperbolic graph has tree-length at most~$\calO(\delta \log n)$.

\begin{corollary}
    \label{cor:ApproxHyperbolic}
Let $G$ be a graph with hyperbolicity~$\delta$ having a shortest path with eccentricity~$k$.
Also, let $x$ and~$y$ be two mutually furthest vertices, \ie, $\ecc(x) = \ecc(y) = d_G(x, y)$.
Each shortest path from $x$ to~$y$ has eccentricity less than or equal to k + $\calO(\delta \log n)$.
\end{corollary}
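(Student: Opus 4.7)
The plan is to derive this corollary as an almost immediate consequence of Lemma~\ref{lem:ApproxTreeLen} combined with the cited result of~\cite{CheDraEstHab2008}, namely that any $n$-vertex $\delta$-hyperbolic graph $G$ admits a tree-decomposition of length $\lambda = \calO(\delta \log n)$. The overall strategy is therefore to lift the tree-length approximation bound to the hyperbolic setting.

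First, I would invoke~\cite{CheDraEstHab2008} to fix a tree-decomposition $\calT$ of $G$ whose length is bounded by some $\lambda = \calO(\delta \log n)$. This places $G$ in the hypothesis of Lemma~\ref{lem:ApproxTreeLen} with this particular value of $\lambda$. Note that neither the lemma nor its proof requires the tree-decomposition to be optimal; any decomposition of length $\lambda$ suffices, and the bound $\lambda \leq \calO(\delta \log n)$ exists regardless of whether we compute it explicitly.

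Second, let $x$ and $y$ be the two mutually furthest vertices given in the hypothesis, and let $P$ be any shortest $(x,y)$-path. Applying Lemma~\ref{lem:ApproxTreeLen} to $G$ with the tree-decomposition of length $\lambda$, we obtain $\ecc(P) \leq k + 2.5 \lambda$. Substituting the bound $\lambda = \calO(\delta \log n)$ yields $\ecc(P) \leq k + \calO(\delta \log n)$, which is exactly the claim of the corollary.

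There is essentially no obstacle here: the corollary is a pure composition of two previously established results. The only subtlety worth mentioning is that, unlike the algorithmic statement of the tree-length case (where we needed to approximate $\lambda$ in order to run the approximation procedure), the present statement is purely existential about the eccentricity bound, so we may freely use the $\calO(\delta \log n)$ tree-length bound from~\cite{CheDraEstHab2008} without having to compute such a decomposition. If an algorithmic version producing the path in $\calO(\delta m)$ time were desired, one would additionally invoke Lemma~\ref{lem:mutualDistHyper} to locate $x$ and $y$, and then run a single BFS from $x$ to extract a shortest $(x,y)$-path.
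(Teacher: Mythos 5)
Your proposal is correct and follows exactly the paper's (implicit) argument: the corollary is stated immediately after citing the result of~\cite{CheDraEstHab2008} that $n$-vertex $\delta$-hyperbolic graphs have tree-length $\calO(\delta\log n)$, and it is obtained precisely by plugging this bound into Lemma~\ref{lem:ApproxTreeLen}. Your remark distinguishing the existential bound from the algorithmic version is a fair observation but not needed for the statement itself.
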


Lemma~\ref{lem:mutualDistHyper}, Lemma~\ref{lem:ApproxTreeLen}, Corollary~\ref{cor:mutualDistTreeLen}, and Corollary~\ref{cor:ApproxHyperbolic} imply our main result of this section:

\begin{theorem}
Let $G$ be a graph having a shortest path with eccentricity~$k$.
If $G$ has tree-length~$\lambda$, a shortest path with eccentricity at most $k + 2.5\lambda$ can be found in $\calO(\lambda m)$~time.
If $G$ has hyperbolicity~$\delta$, a shortest path with eccentricity at most $k + \calO(\delta \log n)$ can be found in $\calO(\delta m)$~time.
\end{theorem}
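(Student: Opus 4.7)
The plan is to combine the preparatory results already proved. For each of the two claims the algorithm will be the same two-step procedure: first compute a pair of mutually furthest vertices $x,y$ (with $\ecc(x)=\ecc(y)=d_G(x,y)$), then recover any one shortest $(x,y)$-path by a single BFS and output it.

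For the tree-length case, step~1 is furnished by Corollary~\ref{cor:mutualDistTreeLen}, which locates such a pair $x,y$ in $\calO(\lambda m)$ time; for the hyperbolicity case, step~1 is furnished by Lemma~\ref{lem:mutualDistHyper} in $\calO(\delta m)$ time. Step~2 is an ordinary BFS from $x$, running in $\calO(m)$ time and absorbed into the previous bound. So the overall running times are $\calO(\lambda m)$ and $\calO(\delta m)$, respectively.

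For correctness of the eccentricity guarantee I would simply invoke the approximation lemmas. Lemma~\ref{lem:ApproxTreeLen} says that if $G$ has tree-length~$\lambda$ and admits a shortest path of eccentricity~$k$, then \emph{every} shortest path between two mutually furthest vertices has eccentricity at most $k+2.5\lambda$; this handles the first claim. For the second claim, I would use Corollary~\ref{cor:ApproxHyperbolic}, whose derivation in turn rests on Lemma~\ref{lem:ApproxTreeLen} via the fact (cited from~\cite{CheDraEstHab2008}) that an $n$-vertex $\delta$-hyperbolic graph has tree-length $\calO(\delta\log n)$.

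There is no real obstacle; the theorem is essentially a packaging of the earlier lemmas. The only small point worth emphasizing is that the procedure does not require knowledge of $\lambda$ or $\delta$: the iterative BFS search for a mutually furthest pair terminates as soon as $\ecc(x_i)=\ecc(x_{i+1})$, so the algorithm adapts to the unknown parameter, and the termination in at most $2\delta+1$ iterations (respectively at most $2\lambda+1$, since tree-length bounds hyperbolicity) is exactly what yields the stated time bounds.
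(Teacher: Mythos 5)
Your proposal is correct and takes essentially the same route as the paper, which derives the theorem in one line by combining Lemma~\ref{lem:mutualDistHyper}, Corollary~\ref{cor:mutualDistTreeLen}, Lemma~\ref{lem:ApproxTreeLen}, and Corollary~\ref{cor:ApproxHyperbolic} --- exactly your two-step algorithm of finding a mutually furthest pair and outputting any shortest path between them. Your closing remark that the iterative BFS search terminates on its own without prior knowledge of $\lambda$ or $\delta$ is a correct and worthwhile clarification of what the paper leaves implicit.
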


A graph is chordal if and only if it has tree-length~$1$~\cite{Gavril1974}.

\begin{corollary}
If $G$ is a chordal graph and has a shortest path with eccentricity~$k$, a shortest path in~$G$ with eccentricity at most $k + 2$ can be found in linear time.
\end{corollary}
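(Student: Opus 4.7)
The plan is to invoke the preceding theorem with tree-length $\lambda = 1$. By Gavril's characterisation cited immediately before the corollary, chordal graphs are exactly the graphs of tree-length~$1$, so the hypothesis of the theorem is satisfied. Substituting $\lambda = 1$ into the running-time bound $\calO(\lambda m)$ gives $\calO(m)$, which for a connected graph is linear since $m \geq n - 1$. Substituting $\lambda = 1$ into the eccentricity bound $k + 2.5\lambda$ gives $k + 2.5$.

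Concretely, the algorithm has two phases, both of which are already justified in earlier results. First, apply Corollary~\ref{cor:mutualDistTreeLen} to locate, in $\calO(\lambda m) = \calO(m)$ time, a pair $x,y$ with $\ecc(x) = \ecc(y) = d_G(x,y)$. Second, run a BFS from $x$ and extract an arbitrary shortest $(x,y)$-path $Q$, which takes $\calO(m)$ additional time. Lemma~\ref{lem:ApproxTreeLen} then certifies that $\ecc(Q) \leq k + 2.5\lambda = k + 2.5$.

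The only remaining point, and the one thing the corollary tightens compared with a naive plug-in of the theorem, is the final rounding: eccentricities are nonnegative integers, so the bound $\ecc(Q) \leq k + 2.5$ in fact implies $\ecc(Q) \leq \lfloor k + 2.5 \rfloor = k + 2$. There is no genuine obstacle here; the corollary is a direct specialisation of the theorem together with the integrality of $\ecc(Q)$ and the fact that the chordal graphs coincide with the tree-length-$1$ graphs.
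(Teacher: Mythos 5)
Your proposal is correct and matches the paper's intended derivation: the corollary is exactly the theorem specialised to $\lambda = 1$ via Gavril's characterisation, with the integrality of $\ecc(Q)$ and $k$ turning the bound $k + 2.5$ into $k + 2$, and $\calO(\lambda m) = \calO(m)$ giving linear time. Nothing further is needed.
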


Figure~\ref{fig:chordalEg} gives an example that, for chordal graphs, $k + 2$ is a tight upper bound for the eccentricity of the determined shortest path.

\begin{figure}
    [htb]
    \centering
    \includegraphics[]{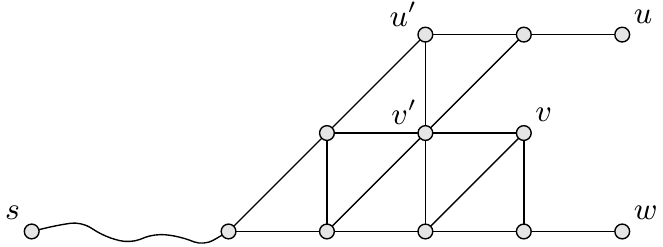}%
    \caption
    {
        A chordal graph~$G$.
        A shortest path from $s$ to~$v$ passing $v'$ has eccentricity~$2$ which is the minimum for all shortest paths in~$G$.
        The diametral path from $s$ to~$u$ passing $u'$ has eccentricity~$4$ because of its distance to~$w$.
    }
    \label{fig:chordalEg}
\end{figure}

\section{Conclusion}

We have investigated the Minimum Eccentricity Shortest Path problem for some structured graph classes.
For these classes, we were able to present linear or polynomial time algorithms.
Additionally, we presented a simple algorithm which gives an additve approximation in linear time for chordal graphs, in $\calO(\lambda m)$ time for graphs with tree-length~$\lambda$, and in $\calO(\delta m)$ time for graphs with hyperbolicity~$\delta$.
Table~\ref{tbl:Results} gives an overview of our results.

\begin{table}
    [htb]
    \caption
    {%
        Runtime for solving the Minimum Eccentricity Shortest Path problem for some graph classes. Also, if the solution is not optimal, the maximal difference to an optimal solution is shown.
    }
    \label{tbl:Results}
    \centering
    \begin{tabular}{@{\quad}l@{\quad}||@{\quad}l@{\quad}@{\quad}||@{\quad}l@{\quad}}
        Graph class & Runtime & Approx. \\
        \hline
        \hline
        distance-hereditary & linear \\
        \hline
        chordal & $\calO(n^2m)$ \\
                & linear & $+ 2$ \\
        \hline
        dually chordal & $\calO(n^3m)$ \\
        \hline
        tree-length~$\lambda$ & $\calO(n^{3\lambda+3})$ \\
                              & $\calO(\lambda m)$  & $+ 2.5 \lambda$ \\
        \hline
        tree-breadth~$\rho$ & $\calO(n^{6\rho+3})$ \\
        \hline
        $\delta$-hyperbolic & $\calO(n^{4\delta+4})$ \\
                            & $\calO(\delta m)$ & $+ \calO(\delta \log n)$\\
    \end{tabular}

\end{table}

One reason why the runtime to find an optimal path for distance-hereditary graphs is linear is that we can determine the start and end vertices of an optimal path in linear time for these graphs.
For the other classes, the algorithm iterates over all possible start vertices~$s$.
We know that, for general graphs, the problem remains NP-complete even if a start-end vertex pair is given (see the reduction in~\cite{DrLei2015}).
Also, we have shown that there is a shortest path with minimum eccentricity between every diametral pair of vertices of a distance-hereditary graph (Theorem~\ref{th:dhg}).
This leads to the following question:
How hard is it to determine the start and end vertices of an optimal path?
This question applies to general graphs as well as to special graph classes like chordal graphs.

Another interesting question is, for which other graph classes the problem remains NP-complete or can be solved in polynomial time.
The NP-completeness proof in~\cite{DrLei2015} uses a reduction from SAT.
There is a planar version of 3-SAT (see~\cite{Lichtenstein1982}).
Does this imply that the problem remains NP-complete for planar graphs?

\subsubsection{Acknowledgement:}
This work was partially supported by the NIH grant R01 GM103309.

\end{document}